\theoremstyle{plain}
\newtheorem{theorem}{Theorem}[section]
\newtheorem{lemma}[theorem]{Lemma}
\newtheorem{corollary}[theorem]{Corollary}
\theoremstyle{definition}
\newtheorem{definition}[theorem]{Definition}
\theoremstyle{remark}
\newtheorem{remark}[theorem]{Remark}
\newcommand{\cN}{\mathcal{N}}
\newcommand{\RR}{\mathbb{R}}
\newcommand{\EE}{\mathbb{E}}
\newcommand{\PP}{\mathbb{P}}
\newcommand{\F}{{\textup{F}}}
\newcommand{\TLSE}{\texttt{T-LSE}}
\newcommand{\RLSE}{\texttt{R-LSE}}
\newcommand{\RSC}{\texttt{RSC}}
\newcommand{\LSE}{\texttt{LSE}}
\newcommand{\reusefootnote}[1]{%
  \textsuperscript{\hyperref[#1]{\ref*{#1}}}%
}
\begin{document}

% If your paper is accepted and the title of your paper is very long,
% the style will print as headings an error message. Use the following
% command to supply a shorter title of your paper so that it can be
% used as headings.
%
\runningtitle{Rank Adaptive Inference}

% If your paper is accepted and the number of authors is large, the
% style will print as headings an error message. Use the following
% command to supply a shorter version of the author names so that
% they can be used as headings (for example, use only the surnames)
%
%\runningauthor{Surname 1, Surname 2, Surname 3, ...., Surname n}

\twocolumn[

\aistatstitle{Near-optimal Rank Adaptive Inference of \\ High Dimensional Matrices}

\aistatsauthor{ Frédéric Zheng  \And Yassir Jedra \And  Alexandre Proutière }

\aistatsaddress{ KTH\\
	Stockholm, Sweden \And  MIT\\
	Cambridge, MA, USA \And KTH\\
	Stockholm, Sweden } ]

\begin{abstract}
  We address the problem of estimating a high-dimensional matrix from linear measurements, with a focus on designing optimal rank-adaptive algorithms. These algorithms infer the matrix by estimating its singular values and the corresponding singular vectors up to an effective rank, adaptively determined based on the data. We establish, for the first time, instance-specific lower bounds for the sample complexity of such algorithms. We uncover fundamental trade-offs in selecting the effective rank: balancing the precision of estimating a subset of singular values against the approximation cost incurred for the remaining ones. Our analysis identifies how the optimal effective rank depends on the matrix being estimated, the sample size, and the noise level. We propose an algorithm that combines a Least-Squares estimator with a universal singular value thresholding procedure. We provide finite-sample error bounds for this algorithm, that are tighter than those of existing rank-adaptive algorithms. Furthermore, our bounds nearly match the derived fundamental limits. Finally, we confirm experimentally that our algorithm outperforms existing rank-adaptive algorithms.
\end{abstract}

\section{INTRODUCTION}

We revisit the {canonical} problem of estimating a high-dimensional matrix from linear measurements. The learner has access to $n$ samples, {$(x_1,y_1), \dots, (x_n, y_n)$ where $y_i\in \mathbb{R}^{d_y}$ is a noisy realization of $Ax_i$ with  $x_i\in \mathbb{R}^{d_x}$. $A \in \mathbb{R}^{d_y \times d_x}$  is considered a priori unknown}. The objective is to estimate the matrix $A$ as accurately as possible, and more precisely to construct an estimator $\hat{A}_n$ with minimal Frobenius error $\Vert \hat{A}_n - A\Vert_{\F}$  with high probability. This problem has widespread applications across various domains, including healthcare, biology, computer vision, and control (see \cite{wu2020adaptive} for a non-exhaustive list of examples). In this paper, we illustrate our results through two key applications: (1) multivariate regression where the covariates {$(x_i)_{i \in \{1,\ldots,n\}}$} form an i.i.d. sequence of random variables and (2) linear system identification where the covariates are the successive states of a linear time-invariant dynamical system governed by $A$, meaning that $x_{i+1}$ is a noisy version of $Ax_i$.

% We revisit the problem of estimating a high-dimensional matrix from linear measurements. The learner has access to $n$ samples $(x_i,y_i)_{i=1,\ldots,n}$ where $y_i\in \mathbb{R}^{d_y}$ is a noisy realization of $Ax_i$ with $x_i\in \mathbb{R}^{d_x}$. The objective is to estimate the matrix $A\in \mathbb{R}^{d_y\times d_x}$ as accurately as possible, and more precisely to construct an estimator $\hat{A}_n$ with minimal Frobenius error $\Vert \hat{A}_n - A\Vert_{\F}$  with high probability. This problem has widespread applications across various domains, including healthcare, biology, computer vision, and control (see \cite{wu2020adaptive} for a non-exhaustive list of examples). In this paper, we illustrate our results through two key applications: (1) multivariate regression where the covariates $(x_i)_{i=1,\ldots,n}$ form an i.i.d. sequence of random variables and (2) linear system identification where the covariates are the successive states of a linear time-invariant dynamical system governed by $A$, meaning that $x_{i+1}$ is a noisy version of $Ax_i$. 

The high-dimensional setting arises when the  number of entries of $A$ are comparable to or exceed the number of available observations. In such cases, imposing and leveraging additional structural properties — such as sparsity or low rank — is essential for statistically sound matrix estimation. Ideally, an estimation algorithm should automatically detect whether such a structure exists and construct an estimator accordingly. In this work, we focus on scenarios where the relevant structure is the matrix rank and aim to design optimal rank-adaptive algorithms. These algorithms estimate the matrix by inferring its singular values and corresponding singular vectors up to an effective rank, which is adaptively determined based on the data. The key questions we explore are: What are the fundamental performance limits of such algorithms? Can we devise an algorithm approaching these limits? How does the optimal effective rank depend on the matrix being estimated, the covariates, and the sample size? We answer these questions with, to the best of our knowledge, an unprecedented level of precision. Specifically, our contributions are as follows.

\emph{1. Instance-specific Sample Complexity Lower bounds.} We establish the first instance-specific lower bounds for the sample complexity\footnote{By sample complexity here, we mean the minimum number of samples required for the existence of an $(\varepsilon,\delta)$-PAC rank-adaptive algorithm (i.e., returning an estimator $\hat{A}_n$ such that $\| \hat{A}_n - A\|_F\le \varepsilon$ with probability at least $1-\delta$.)} of rank-adaptive algorithms. These bounds reveal fundamental trade-offs in selecting the effective rank and precisely depend on both the matrix $A$ and the covariates' {distribution}. %\textcolor{blue}{In particular, our analysis does not require traditional assumptions on the covariate distribution such as restricted isometry.} 
For instance, in the case of multivariate regression, our results show that with $n$ samples and with $d_x=d_y$, the Frobenius norm error scales at least as:

$$
\min_{k} \left( \sigma^2\frac{\log (\frac{1}{\delta})+kd_x}{n\underline{\lambda}_k(\Sigma)} +\sum_{i>k}s_i^2(A) \right),
$$
where $\sigma$ is the noise level, $s_i(A)$ denotes the $i$-th largest singular value of $A$, $\Sigma$ is the covariance matrix of covariates, and  $\underline{\lambda}_k(\Sigma)$ is the average of its $k$ smallest eigenvalues. Furthermore, if we were able to learn the value of $k$ that minimizes this bound, we would directly obtain the optimal effective rank. The proof techniques towards our lower bounds are of independent interest. They intricately combine change-of-measure arguments, which provide explicit dependence on $A$ and the covariates, with packing arguments, which capture the dependence on the dimensions $d_x$ and $d_y$.

\emph{2. Thresholded Least-squares Estimation is nearly optimal.} We propose an algorithm that combines a least squares estimator with a universal singular value thresholding procedure. We derive finite-sample error bounds for this estimator and show that it outperforms existing algorithms, while closely approaching the fundamental performance limits. Our analysis builds on an improved understanding of matrix denoising techniques based on singular value thresholding. Specifically, we demonstrate that applying universal singular value thresholding to an existing estimator generally enhances its performance guarantees. We illustrate this principle using the estimator obtained via nuclear norm regularization and the Least-Squares estimator.

\emph{3. Applications and Numerical Results.} We exemplify our theoretical results to both multivariate regression and linear system identification. Finally, we present numerical experiments to complement and validate our theoretical findings.

\paragraph{Notation.} We use $a \lesssim b$ (resp. $a \gtrsim b$) to mean that $a$ is smaller (resp. larger) than $b$ up to a universal multiplicative constant.
 We use $a \wedge b$ (resp. $a \vee b$) to denote $\min(a, b)$ (resp. $\max(a,b)$). Let $[d]=\{1,...,d\}$. Given a matrix $M \in \RR^{d_x \times d_y}$, $\Vert M \Vert_\F$ denotes its Frobenius norm, $\Vert M \Vert_2$ denotes its operator norm, and $\Vert M \Vert_1$ denotes its nuclear norm. Let $\underline{d}=d_x\wedge d_y$ and $\bar d=d_x\vee d_y$. The singular values (resp. eigenvalues) of $M$ are denoted by $s_1(M), \dots, s_{\underline{d}}(M)$ (resp. $\lambda_{1}(M), \dots, \lambda_{\underline{d}}(M)$) in decreasing order, and its condition number is denoted by $\kappa(M) = s_1(M)/s_d(M)$. For $k\le \underline{d}$, we use $\Pi_k(M)$ to denote the best rank-$k$ approximation of $M$. $M^\dagger$ is the Moore-Penrose pseudo-inverse of $M$. When $M$ is symmetric positive semi-definite, we denote its maximum (resp. minimum) eigenvalue by $\lambda_{\max}(M)$ (resp. $\lambda_{\min}(M)$). We also define $\bar{\lambda}_k(M) := \sum_{i=1}^k \lambda_i(M)/k$ and $\underline{\lambda}_k(M) := \sum_{i=0}^{k-1} \lambda_{d - i}(M)/k$.

% {Let $\underline{d}=d_x\wedge d_y$ and $\bar d=d_x\vee d_y$ where $a\wedge b$ (resp. $\vee$) denotes the minimum (resp. maximum) between $a$ and $b$.} For any matrix $A\in \mathbb{R}^{d_y\times d_x}$, we denote by $US V^\top$ its SVD where $S$ is a diagonal matrix containing the singular values of $A$ in decreasing order. The SVD can also be expressed as $A=\sum_{i=1}^{\underline{d}}s_i(A)u_iv_i^\top$ where $s_i(A)$ is the $i$-th largest singular value and $u_i$ and $v_i$ are the $i$-th columns of $U$ and $V$, respectively. The best rank-$k$ approximation of $A$ is denoted by $\Pi_k(A)$ and is equal to $U_k S_kV_k^\top$, where $U_k, V_k$ are the semi-orthogonal matrices consisting of the first $k$ columns of $U,V$. Let $\Vert A\Vert_\F,\Vert A\Vert_2,\Vert A\Vert_1$ denote the Frobenius norm, the operator norm and the nuclear norm of $A$, respectively. The condition number and Moore-Penrose inverse of $A$ are respectively denoted by $\kappa(A)=\| A\|_2 \| A^{-1}\|_2$ and $A^\dagger$. For a symmetric $\Sigma\in\mathbb{R}^{d\times d}$, denote by $\lambda_\mathrm{min}(\Sigma)$ and $\lambda_\mathrm{max}(\Sigma)$ its minimum and maximum eigenvalues respectively. We also define $\bar{\lambda}_k(\Sigma)$ (resp. $\underline{\lambda}_k(\Sigma)$)  the average of its $k$ largest (resp. smallest) eigenvalues: $\bar \lambda_k(\Sigma) = \frac{\sum_{i=1}^k\lambda_i(\Sigma)}{k}$ and $\underline{\lambda}_k(\Sigma) = \frac{\sum_{i=1}^k\lambda_{d-i}(\Sigma)}{k}$. {Finally, the inequality $x\lesssim y$ means that $x$ is smaller than $y$ up to a universal multiplicative constant. }

\section{PRELIMINARIES}\label{sec:prelim}

{\bf Model and objective.} We are given $n$ observations, $(x_1, y_1), (x_2, y_2), \dots, (x_n, y_n)$, taking values in $\RR^{d_x \times d_y}$. We consider a linear model whereby $y_i = Ax_i +\eta_i$ for all $i \in[n]=\{1,\ldots,n\}$. $A $ is a matrix in $\RR^{d_y \times d_x}$ that is a priori unknown. $(\eta_i)_{i \ge 1}$ is a sequence of i.i.d. zero-mean, $\sigma^2$-sub-Gaussian random variables taking values in $\RR^{d_y}$ where $\sigma> 0$ denotes their variance proxy parameter. Our objective is to estimate the matrix $A$.

We define the empirical covariance matrix as $\hat{\Sigma} = (\sum_{i=1}^n x_i x_i^\top )/n$. We assume that $\EE[\hat{\Sigma}] \succ 0${, ensuring identifiability of $A$ by least squares.}
We are interested in two specific settings where this assumption naturally holds:

{\it 1. Multivariate regression.} Here the covariates $x_1, \dots, x_n$ are assumed to be i.i.d. random vectors with covariance matrix $\Sigma := \EE[\hat{\Sigma}] = \EE[x_1 x_1^\top]$.

% $(x_i)_{i=1,\ldots,n}$ are i.i.d.. We denote by $\Sigma = \mathbb{E}[\hat{\Sigma}]=\mathbb{E}[x_1x_1^\top]$ the corresponding covariance matrix. 

{\it 2. Linear system identification.} Here, $y_i = x_{i+1}$ for all $i \ge 1$, and thus the covariates $x_1, \dots, x_n$ evolve according to the dynamical system: $x_{t+1} = A x_{t} + \eta_t$ for all $t\ge 1$, with $x_1 = 0$. In this case, $d_x = d_y$. For all $t \ge 1$, the finite-time controllability Gramian of the system as $\Gamma_p(A):=\sum_{k=0}^p (A^k) (A^k)^\top$. We assume that the system is stable (i.e., $\rho(A) < 1$, where $\rho(A)$ denotes the spectral radius of $A$). Thus,  we may define $\Gamma_\infty(A)$ as $ \lim_{t \to \infty} \Gamma_t(A)$.

% the objective is to identify $A$ from a trajectory of the linear system whose dynamics can be written as $x_{t+1} = Ax_t +\eta_t$. In this case, $d_x=d_y$. For any integer $p$, we define the finite-time controllability gramian of the system as $\Gamma_p(A)=\sum_{k=0}^p (A^k) (A^k)^\top$. When the system is stable (i.e., the spectral radius of $A$ is strictly smaller than 1), which we assume throughout the paper, $\Gamma_\infty(A)$ is well-defined as the limit of $\Gamma_p(A)$ as $p\to\infty$. 
% Again, there exists $W, D$ such that  $\Gamma_\infty(A)=WDW^\top$.

In what follows, we will often prefer to use matrix notations and write $Y = X A^\top + E$ where  $Y^\top = \begin{bmatrix}
    y_1 & \!\! \cdots & \!\!y_n
\end{bmatrix} \in \RR^{d_y \times n}$, $X^\top  = \begin{bmatrix}
    x_1 & \!\!\cdots & \!\!x_n
\end{bmatrix} \in \RR^{d_x \times n}$ and $E^\top  = \begin{bmatrix}
    \eta_1 & \!\!\cdots & \!\!\eta_n
\end{bmatrix} \in \RR^{d_y \times n}$.

\section{RELATED WORK}

% The inference of high-dimensional matrices using reduced-rank regression has a long history and has attracted significant attention over the past decades (see, e.g., \cite{anderson51,IZENMAN1975,Reinsel1998,anderson99,negahban2011estimation,bunea2011optimal}). It also has numerous connections to related fields, including principal component analysis and matrix completion \cite{candes2009, koltchinskii2011, chatterjee2015matrix}. Below, we present a selection of papers that we consider most relevant to our analysis. We begin by discussing existing fundamental limits before moving on to algorithms with finite-time performance guarantees

The inference of high-dimensional matrices using reduced-rank regression has a long history and has attracted significant attention over the past decades (see, e.g., \cite{anderson1951estimating,IZENMAN1975,Reinsel1998,anderson1999asymptotic,negahban2011estimation,bunea2011optimal}). It also has numerous connections to iconic problems such as principal component analysis and matrix completion \cite{candes2009, koltchinskii2011, chatterjee2015matrix}. Below, we present a selection of papers that we consider most relevant to our analysis. We begin by discussing existing fundamental limits before moving on to algorithms with finite sample-size performance guarantees.

%{\bf }

{\bf Fundamental limits.} For multivariate regression and matrix completion, most existing lower bounds on the estimation error are minimax and concern matrices with known rank (or with a known upper bound on its rank). For instance, the lower bounds derived in \cite{Rohde2009EstimationOH} are minimax and assume that the design matrix $X$ satisfies some properties such as the Restricted Isometry Property (RIP), and that the rank is fixed. \cite{candes2011tight} derived lower bounds satisfied when the design matrix $X$ has the RIP property, and that remain valid for approximately low-rank matrices. However, these bounds are in expectation, and do not cover fully adaptive algorithms. The lower bounds presented in \cite{bunea2011optimal} are based on those from \cite{Rohde2009EstimationOH, candes2011tight} and hence suffer from the same shortcomings. \cite{cai2015optimal} considers the problem of optimal rank estimation for covariate matrices, and presents minimax rate detection limits, but these are only valid for exactly low-rank matrices. Closer to our analysis, \cite{wu2020adaptive} proposes an instance-specific (or more precisely locally minimax) lower bound, but unfortunately these are only valid for the risk and not the matrix estimation error. Finally, it is worth noting recent efforts towards the derivation of error lower bounds for the problem of linear system identification \cite{simchowitz2018learning,jedra2019,jedra2022finite, djehiche2021non, sun2023finite, bakshi2023new, zhang2024learning}. These bounds concern the estimation error in operator norm, they are not rank-adaptive, and most of them are minimax.

To the best of our knowledge, we are the first to derive precise instance-specific lower bounds that depend on the spectral properties (i.e., the singular values) of both the target matrix and the design matrix. This contrasts with traditional minimax lower bounds, which characterize performance limits for the worst-case matrix within a given class. As a result, minimax lower bounds fail to capture how algorithms can truly adapt to the specific matrix being estimated.

%{\bf Algorithms and their finite-sample error guarantees.} 

{\bf Algorithms and their finite-sample guarantees.} Most existing algorithms for reduced-rank regression rely on optimization methods that incorporate a penalty in the objective function to encourage low-rank solutions. Common choices for this penalty include the nuclear norm or its weighted variant \cite{yuan2007,candes2010matrix,negahban2011estimation,chen2013}, the rank itself \cite{bunea2011optimal}, and the Schatten-$p$ quasi-norm \cite{Rohde2009EstimationOH}. When the rank of $A$ is at most $r$, the most effective among these algorithms is the Rank-Constrained Selection (\RSC) algorithm introduced by \cite{bunea2011optimal}, and whose error bounds are analyzed in their Corollary 8: given $n$ samples, with probability at least $1-\delta$,
\begin{equation}\label{eq:soa1}
\Vert \hat{A}_n-A\Vert_\F^2 \lesssim \min_{k \in [r]} \Big(k\sigma^2\frac{ \log(\frac{1}{\delta})+\bar d }{n\lambda_\mathrm{min}(\hat \Sigma)} +
\kappa(\hat \Sigma)\sum_{i>k} s^2_i(A)\Big). 
\end{equation}

{The RSC algorithm is meant for minimizing the prediction error $\Vert X\hat{A}_n-XA\Vert_\F$. This explains the extra dependence on the condition number $\kappa(\hat \Sigma)$ in the bound on the identification error $\Vert \hat A_n - A\Vert_\F$.} Our algorithm, the Thresholded Least Squares Estimator (\TLSE), achieves provably better error bounds. It combines the \LSE ~with a universal singular value thresholding procedure, inspired by the seminal works of \cite{chatterjee2015matrix, gavish2014optimal}. {Finally, the design and analysis of algorithms with finite sample-size performance guarantees for identifying dynamical systems with low-rank structure, notably partially observed linear dynamical system, has also seen a surge of interest recently \cite{fazel2013hankel,djehiche2022efficient, sun2022finite, oymak2019non,simchowitz2019learning, sarkar2021finite, bakshi2023new}. However, to the best of our knowledge, we are the first to present rank-adaptive algorithms with guarantees, exploiting the low-rank structure of the state matrix.}

\section{INSTANCE-SPECIFIC SAMPLE COMPLEXITY LOWER BOUNDS}\label{sec:lb}

Our goal in this section is to derive fundamental limits on the number of samples required to obtain an $(\varepsilon,\delta)$-PAC estimate of the matrix $A$. To this end, we must focus on estimators or algorithms that genuinely adapt to the matrix $A$. For instance, an algorithm that always outputs $A$, regardless of the input data, would require no samples to be $(\varepsilon,\delta)$-PAC when the true matrix is $A$, but would fail for any matrix other than $A$. Therefore, to derive instance-specific sample complexity lower bounds, we must consider algorithms that are $(\varepsilon,\delta)$-PAC not just for $A$, but for all matrices in a {\it neighborhood} of $A$. The choice of this neighborhood involves a delicate trade-off. On the one hand, the neighborhood around $A$ should be small enough to ensure that the class of algorithms considered is broad enough and that the derived lower bound reflects the difficulty of estimating the specific instance $A$. On the other hand, if the neighborhood is too narrow, the resulting lower bound may be overly restrictive and potentially unattainable.

%In this section, we are interested in deriving a lower bound on the minimal number of samples to learn a matrix $A$. We introduce the stability property which ensures that the algorithm is consistent for matrices $A'$ in a neighborhood of $A$, denoted by $\mathcal{C}(A,\varepsilon)$, to avoid the case where $\hat{A}$ returns $A$ regardless of the observations. The sample complexity is therefore defined as the number of samples for the algorithm to learn any matrix $A'\in \mathcal{C}(A,\varepsilon)$. 

%\textcolor{blue}{We are primarily interested in deriving neighborhoods (and corresponding lower bounds) for \textit{oracle} algorithms which have effective rank $k^\star_{A,N}$ optimally balancing estimation and approximation costs, depending on the signal to noise level. This task is challenging and we adopt the following strategy. We first start by lower bounds that are tight for the class of algorithms aware of an upper bound of the rank of $A$. This class of algorithms is of self-interest as it has been studied by \cite{bunea2011optimal}. This initial analysis done in Subsection \ref{subsec:lb_rc} allows us to naturally construct oracle algorithms in Subsection \ref{subsec:lb_adapt}. In Section \ref{sec:lse}, we show that our constructions are tight, i.e we provide computable algorithms with almost matching sample complexity.}

Deriving lower bounds on the sample complexity of {\it rank-adaptive} algorithms presents significant challenges. We address this by decomposing the problem into two steps:
(i) we first derive lower bounds (see \textsection\ref{subsec:lb_rc}) using a neighborhood of $A$ %essentially consisting
obtained by adding perturbations of specific rank $r\ge \mathrm{rank}(A)$. 
As shown later, this lower bound will be tight for {\it rank-constrained} algorithms, i.e., those returning matrices of rank $r$. 
(ii) Building on this analysis, we then derive refined lower bounds (see \textsection\ref{subsec:lb_adapt}) by considering a neighborhood of $A$ obtained by adding perturbations of rank bounded by the effective rank that optimally balances estimation and approximation errors. In Section \ref{sec:lse}, we introduce a rank-adaptive algorithm whose sample complexity {nearly} matches these lower bounds (without any prior knowledge), hence proving that our lower bounds are tight. 

Throughout this section, we assume that $(\eta_i)_{i \in [n]}$ are distributed according to $\cN(0, \sigma^2 I_{d_y})$. Proofs of the results presented in this section are in Appendix \ref{app:sec4}.

\subsection{Lower bounds for rank-constrained algorithms}\label{subsec:lb_rc}

As a starting point, we derive fundamental limits for algorithms estimating a matrix $A$ by a rank-$r$ matrix. We begin by presenting the following packing bounds for Stiefel manifolds, which will play an important role in the definition of the neighborhood of $A$. For any $k\le d$, the Stiefel manifold $\mathrm{St}_k^{d}(\mathbb{R})$ is defined as the set of semi-orthogonal matrices $\{Q\in \mathbb{R}^{d\times k}: \ Q^\top Q=I_k\}$. 

\begin{lemma}\label{lem:stiefel}
For any $k\leq d / 2$, there exists a finite $\sqrt{k}$-packing ${\cal P}_k^d$ of $\mathrm{St}_k^{d}(\mathbb{R})$ whose cardinality is larger than $2^{kd}$. By $\sqrt{k}$-packing, we mean that for any $Q, R \in  {\cal P}_{k}^d$, $Q\neq R$ implies $\Vert Q-R\Vert_\F\ge \sqrt{k}/C$ where $C\geq 1$ is a universal constant independent of $k$ and $d$. 
\end{lemma}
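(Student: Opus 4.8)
The plan is to establish the lower bound on the packing number through a volume-ratio (metric entropy) argument that exploits the homogeneity of the Stiefel manifold. The starting observation is that distances reduce to inner products: since every $Q\in\mathrm{St}_k^d(\RR)$ satisfies $\Vert Q\Vert_\F^2=\tr(Q^\top Q)=k$, we have $\Vert Q-R\Vert_\F^2=2k-2\langle Q,R\rangle$ for all $Q,R$ on the manifold. Hence the Frobenius diameter is $2\sqrt{k}$, and a Frobenius ball $B(Q,\varepsilon)$ coincides with the cap $\{R\in\mathrm{St}_k^d(\RR):\langle Q,R\rangle\ge k-\varepsilon^2/2\}$; the target scale $\varepsilon=\sqrt{k}/C$ is thus a fixed fraction of the diameter, i.e.\ a packing at constant relative resolution. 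I would also record at the outset the dimension count $D:=\dim\mathrm{St}_k^d(\RR)=dk-k(k+1)/2$, noting that the hypothesis $k\le d/2$ (so $k+1\le d$) gives $k(k+1)/2\le dk/2$ and therefore $D\ge dk/2$. This is the only place the constraint $k\le d/2$ is used, and it is what lets the exponent $D$ dominate the target exponent $kd$.

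Next I would invoke the standard duality between packing and covering: a maximal $\varepsilon$-separated subset ${\cal P}_k^d$ of $\mathrm{St}_k^d(\RR)$ is automatically an $\varepsilon$-net, so the balls $B(Q,\varepsilon)$, $Q\in{\cal P}_k^d$, cover the manifold. Equipping $\mathrm{St}_k^d(\RR)$ with its normalized Haar measure $\mu$ (the unique measure invariant under the left action of $O(d)$ and the right action of $O(k)$) yields $|{\cal P}_k^d|\ge 1/\sup_Q\mu(B(Q,\varepsilon))$. Everything then reduces to an upper bound on the measure of a single cap, and by invariance it suffices to treat $Q=[I_k\ 0]^\top$, so that the separation guarantee $\Vert Q-R\Vert_\F\ge\sqrt{k}/C$ holds by construction.

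The heart of the argument is the estimate $\sup_Q\mu(B(Q,\varepsilon))\le (c_1\varepsilon/\sqrt{k})^D$, valid for $\varepsilon$ up to a fixed fraction of the diameter. I would obtain it by volume comparison: the Frobenius ball $B(Q,\varepsilon)$ is contained in an intrinsic geodesic ball of radius $O(\varepsilon)$ (extrinsic and geodesic metrics being comparable up to a constant fraction of the diameter on this homogeneous space), and since $\mathrm{St}_k^d(\RR)$ with its normal metric has nonnegative Ricci curvature, Bishop--Gromov bounds the normalized volume of that ball by $(c_1\varepsilon/\sqrt{k})^D$. Feeding $\varepsilon=\sqrt{k}/C$ into the duality bound gives $|{\cal P}_k^d|\ge(\sqrt{k}/(c_1\varepsilon))^D=(C/c_1)^D\ge (C/c_1)^{dk/2}$, and choosing a universal $C$ with $C/c_1\ge 4$ yields $|{\cal P}_k^d|\ge 4^{dk/2}=2^{dk}$, as required.

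I expect the main obstacle to be making the cap-volume bound rigorous with \emph{universal} constants. One must either establish the extrinsic/intrinsic metric comparability uniformly in $k,d$ up to a constant fraction of the diameter (so that $\varepsilon=\sqrt{k}/C$ lies in the valid range for the curvature comparison), or, avoiding differential geometry altogether, bound $\mu(\langle Q,R\rangle\ge k-\varepsilon^2/2)$ directly---for instance via the explicit Haar volume formula for $\mathrm{St}_k^d(\RR)$, or via concentration of the $\sqrt{k}$-Lipschitz map $R\mapsto\langle Q,R\rangle$ on the manifold. The freedom to take $C$ as large as needed (which only weakens the separation requirement) is what makes every one of these routes succeed, since it lets the base $C/c_1$ exceed the fixed target value $4$.
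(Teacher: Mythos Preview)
Your approach is correct but takes a genuinely different route from the paper. The paper does not use a volume/entropy argument at all: it invokes an explicit coding-theoretic packing result for the Stiefel manifold in the \emph{geodesic} distance (Theorem~4.1 of Henkel, 2005), which for a prescribed cardinality $m$ guarantees a minimum geodesic separation $d_0=ab$ expressed through Gamma functions. Setting $\log_2 m = k(2d-k)\ge kd$ makes $a=\tfrac12$, and a page of Stirling-type estimates on $b$ (isolated as a sublemma, and this is where the hypothesis $k\le d/2$ is actually used) yields $b\gtrsim\sqrt{k}$ with an explicit constant; the result is then transferred to the Frobenius metric via the global comparison $d_g\le\tfrac{\pi}{2}\Vert\cdot\Vert_\F$ from Mataigne et al.\ (2024). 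Your route---maximal separated set is a cover, then bound a single cap via Bishop--Gromov---is more conceptual, makes the role of $k\le d/2$ transparent (it forces $D\ge kd/2$), and avoids the specialised sphere-packing literature; the paper's route buys a fully explicit constant $C$ without any curvature analysis, at the price of two external citations and hands-on Gamma-function work. One caution on your side: Bishop--Gromov requires nonnegative Ricci for the metric in which you measure balls, and the metric induced by the Frobenius embedding on $\mathrm{St}_k^d(\RR)$ is not the normal homogeneous one (they differ on the ``vertical'' directions), so you should either verify the curvature sign for the embedded metric or, as you already suggest, bypass curvature entirely via the direct Haar-volume computation or Szarek-type covering estimates for Stiefel manifolds, which deliver exactly the $(c_1\varepsilon/\sqrt{k})^{D}$ cap bound you need.
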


% \textcolor{red}{For any $k=1,...,r$, let $\tilde A_k=\Pi_{\lfloor\frac{k}{2}\rfloor}(A) + A-\Pi_k(A)$, i.e $\tilde A_k$ is constructed by removing the top $\lfloor\frac{k}{2}\rfloor+1$ to $k$ singular values from $A$. The reasoning behind this definition is to allow for  perturbations of rank at most $\lfloor\frac{k}{2}\rfloor$ around $\tilde A_k$ such that the resulting matrix has its last $r-k$ singular values lower than those of $A$ as we will further explain below}.
The explicit value of $C$ is given in Appendix \ref{app:sec4}. We define the neighborhood of $A$ as: 
${\cal C}(A,r,\varepsilon)={\cal C}_{1}(A,r,\varepsilon)\cup {\cal C}_{2}(A,r,\varepsilon)\cup \{A\} 
%\{A,\tilde{A}_{r}\}
$ where
% ${\cal C}_{\textrm{left}}(A,r,\varepsilon) =  \{A+\frac{2C\varepsilon}{\sqrt{r}}QW_{-r}^\top:  Q\in {\cal P}_{d_y,r} \}$ and $
% {\cal C}_{\textrm{right}}(A,r,\varepsilon) = \{A+\frac{2C\varepsilon}{\sqrt{r}}U_rR^\top:  R\in {\cal P}_{d_x,r} \}$.
\begin{equation}
\left\{
\begin{array}{rl}
{\cal C}_{1}(A,r,\varepsilon) & =  \left \lbrace A+\frac{2C\varepsilon}{\sqrt{r}}QW_{-r}^\top:  Q\in {\cal P}_{r}^{d_y} \right \rbrace \\
{\cal C}_{2}(A,r,\varepsilon) & = \left\lbrace A+\frac{2C\varepsilon}{\sqrt{r}}U_rR^\top:  R\in {\cal P}_{r}^{d_x} \right\rbrace
\end{array}
\right..    
\end{equation}

Here, 
{$U_r\in \mathrm{St}_r^{d_y}(\mathbb{R})$ is a semi-orthogonal matrix whose columns $u_1,...,u_r$ are the left-singular vectors of $A$ corresponding to $s_1(A),...,s_r(A)$} while
$W_{-r}\in \mathrm{St}_r^{d_x}(\mathbb{R})$ contains the eigenvectors corresponding to the $r$ smallest eigenvalues of 1) $\Sigma$ for multivariate regression, 2) $\Gamma_\infty(A)$ for dynamical systems.  Our packing directly uses knowledge of the expected covariance matrix, avoiding the need for RIP assumption. We show below that this leads to a more precise dependency of the sample complexity with respect to the spectrum of $\Sigma$.

\begin{definition}[$(\varepsilon,\delta, r)$-stability]{\it Let $A$ such that $\mathrm{rank}(A)\le r$. 
An algorithm is $(\varepsilon,\delta, r)$-stable in $A$ if it returns a rank-$r$ matrix and the following assertion holds:
\begin{align}\label{eq:rc_stability}
    \exists N\in \mathbb{N}, \ \forall n\geq N, \ \forall A'\in {\cal C}(A,r,\varepsilon), \nonumber  \\\mathbb{P}_{ A'}(\Vert \hat A_n-A'\Vert _\F\leq\varepsilon )\geq 1-\delta.
\end{align} 
}
\vspace{-0.3cm}
%there exists $N\in \mathbb{N}$ such that for all $n\ge N$, and for all $A' \in {\cal C}(A,r,\varepsilon)$, we have  $\mathbb{P}_{ A'}(\Vert \hat A_n-A'\Vert _\F\leq\varepsilon )\geq 1-\delta$.}
\label{def:full}
\end{definition}
If an algorithm is $(\varepsilon,\delta, r)$-stable in $A$, we define its {\it sample complexity} as the minimal integer $N$ such that Assertion (\ref{eq:rc_stability}) holds. We tried with the above definition to construct a neighborhood $\mathcal{C}(A,r,\varepsilon)$ as small as possible. In fact, it is finite, which implies that the class of $(\varepsilon,r,\delta)$-stable algorithms in $A$ is very broad. In particular, it includes all algorithms returning rank-$r$ matrices and that are {\it consistent} in the sense that for any $A$, there exists a number of sample $N_A$ such that $\mathbb{P}_{ A}(\Vert \hat A_{N_A}-A\Vert _\F\leq\varepsilon)\geq 1-\delta$. 
%\textcolor{red}{, and one can take $N=\max_{A'\in\mathcal{C}(A,r,\varepsilon)}N_{A'}$. 
% While the definition appears locally minimax at first glance, $\mathcal{C}(A,r,\varepsilon)$ is not too large as its elements are of rank at most $r$, and thus not harder to learn than $A$, see Appendix D.1}.
%{The neighborhood $\mathcal{C}(A,r,\varepsilon)$ is not too large: all its elements are close to $A$ up to order $\varepsilon$ with rank at most $2r$.}
$\mathcal{C}(A,r,\varepsilon)$ is also not too small: we will show that there exist algorithms aware of $r$ and whose sample complexity nearly matches the lower bound obtained by considering this neighborhood. 

The following theorem provides sample complexity lower bounds in both the multivariate regression and the linear system identification settings.

%However, we explain here why the rank is not a valid criterion to use for stability as not only does Definition \ref{def:full} include unwanted complicated "minimax" models, but it also does not include larger rank models which can be learned.  Indeed, for a given $\varepsilon$, it is possible that there exists $k$ such that the sum of the last $r-k$ singular values of $A$ is negligible in front of $\varepsilon$. Therefore an adaptive algorithm stability should only transfer over the top $k$ singular values for other models $A'$. However, Definition \ref{def:full} allows to construct a pathological model $A'$ where stability must hold over all $r$ singular values. Indeed, let $A'=U_r(S_r+\frac{3\varepsilon}{\sqrt{r}} I_r) V_r^\top\in B_\F(A,3\varepsilon)$. We have for all $i=1,..,r, \quad s^2_i(A')\geq \frac{9\varepsilon^2}{r}$ therefore $\sum_{i>k} s_i^2(A')\leq\varepsilon^2$ only happens when $k\geq \frac{8}{9} r$. With the same argument, we observe that any model $A'$ such that the sum of its last $d-k'$ singular values are negligible in front of $\varepsilon$ with $k'\leq k$ should be learnable by the algorithm regardless of the rank of $A'$ yet Definition \ref{def:full} does not take them into account. 

%This argument is corroborated by the following lower bound 
\begin{theorem} Let $\varepsilon>0$ and $\delta\in (0,1)$. Suppose that $\EE[\hat{\Sigma}] \succ 0$ and  $\mathrm{rank}(A)\le r\leq \underline{d} /2$. 
%\footnote{In App. \ref{app:stability}, we derive an even tighter lower bound, valid in more general settings where the assumption $r\leq\frac{1}{2}\underline{d}$ is removed. \label{fn:shared}}. 
Then:\\
(i) Multivariate regression. The sample complexity $N$ of any $(\varepsilon,\delta, r)$-stable algorithm in $A$ satisfies:

\begin{equation}\label{eq:lb1}
N\gtrsim  \frac{\sigma^2}{\varepsilon^2}\left(\frac{rd_x+\log (\frac{1}{\delta})}{\bar{\lambda}_r(\Sigma)}\vee\frac{rd_y+\log (\frac{1}{\delta})}{\underline{\lambda}_r(\Sigma)}\right).
\end{equation}

(ii) Linear system identification. Further assume that $\varepsilon\leq \Vert \Gamma_\infty(A)\Vert_2^{-{3}}/12$. The sample complexity $N$ of any $(\varepsilon,\delta, r)$-stable algorithm in $A$ satisfies:

\begin{equation}\label{eq:lb2}
N\gtrsim \sigma^2\frac{rd_x+\log (\frac{1}{\delta})}{\varepsilon^2\underline{\lambda}_r(\Gamma_\infty(A))}.
\end{equation}
\label{thm:full}
\end{theorem}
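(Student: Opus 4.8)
The plan is to recast estimation as a multiple-hypothesis testing problem over the finite neighborhood $\cC(A,r,\varepsilon)$ and then invoke information-theoretic lower bounds. I would first record the geometry of the packing. For two distinct $Q,Q'\in\cP_r^{d_y}$ the associated elements of $\cC_1$ differ by $\tfrac{2C\varepsilon}{\sqrt r}(Q-Q')W_{-r}^\top$, whose Frobenius norm equals $\tfrac{2C\varepsilon}{\sqrt r}\Vert Q-Q'\Vert_\F\ge 2\varepsilon$, since $W_{-r}$ is semi-orthogonal and Lemma~\ref{lem:stiefel} gives $\Vert Q-Q'\Vert_\F\ge\sqrt r/C$; the same holds inside $\cC_2$ (using that $U_r$ is semi-orthogonal), and every element lies at distance $2C\varepsilon\ge2\varepsilon$ from $A$. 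Thus the hypotheses are $2\varepsilon$-separated, so any $(\varepsilon,\delta,r)$-stable algorithm, by reporting the hypothesis closest to $\hat A_n$, induces a test that recovers the true hypothesis with error at most $\delta$.

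I would obtain each numerator from two complementary reductions applied to the same family. A \emph{Fano} step over the full packing (with $M>2^{rd_y}$ for $\cC_1$) bounds the mutual information by $I\le\tfrac1M\sum_v\mathrm{KL}(\PP_{A'_v}\Vert\PP_A)$ using $\PP_A$ as reference measure; Fano's inequality combined with error $\le\delta$ then forces $\max_v\mathrm{KL}(\PP_{A'_v}\Vert\PP_A)\gtrsim rd_y$. A \emph{two-point (Le Cam)} step between $A$ and a single perturbation, where a test with both errors $\le\delta$ yields $\mathrm{KL}\ge\mathrm{kl}(\delta\,\Vert\,1-\delta)\gtrsim\log(1/\delta)$, supplies the confidence term. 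Taking the maximum of the two produces the additive $rd+\log(1/\delta)$ numerators.

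Everything then reduces to evaluating the KL divergence. For multivariate regression the covariates are independent of $A$, so $\mathrm{KL}(\PP_{A'}\Vert\PP_A)=\tfrac{n}{2\sigma^2}\tr\!\big((A-A')\Sigma(A-A')^\top\big)$. For a $\cC_1$-perturbation this equals $\tfrac{2nC^2\varepsilon^2}{\sigma^2}\tr(W_{-r}^\top\Sigma W_{-r})=\tfrac{2nC^2\varepsilon^2}{\sigma^2}\,r\,\underline{\lambda}_r(\Sigma)$ \emph{exactly}, because $W_{-r}$ is pinned to the $r$ smallest eigenvectors of $\Sigma$ and $Q^\top Q=I_r$; since $Q$ ranges over $\cP_r^{d_y}$ this gives the $(rd_y+\log(1/\delta))/\underline{\lambda}_r(\Sigma)$ branch. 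For a $\cC_2$-perturbation the energy is $\tfrac{2nC^2\varepsilon^2}{\sigma^2 r}\tr(R^\top\Sigma R)\le\tfrac{2nC^2\varepsilon^2}{\sigma^2}\,\bar{\lambda}_r(\Sigma)$ by the Ky Fan inequality, with $R$ ranging over $\cP_r^{d_x}$, giving the $(rd_x+\log(1/\delta))/\bar{\lambda}_r(\Sigma)$ branch. Since any subfamily of $\cC(A,r,\varepsilon)$ is an admissible hypothesis set, the sample complexity exceeds both branches, which is precisely the $\vee$ in \eqref{eq:lb1}.

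The main obstacle is the linear system identification case, where the covariates are generated by $A$ itself, so a perturbation shifts both the mean and the \emph{law} of the design. The Markov factorization of the trajectory still yields the exact identity $\mathrm{KL}(\PP_{A'}\Vert\PP_A)=\tfrac1{2\sigma^2}\sum_t\EE_{A'}\big[\Vert(A-A')x_t\Vert^2\big]$, but the expectation is now under the perturbed dynamics, so it involves $\sum_t\EE_{A'}[x_tx_t^\top]=\sigma^2\sum_{s}\Gamma_s(A')\preceq\sigma^2 n\,\Gamma_\infty(A')$, the controllability Gramian of the \emph{perturbed} system. Choosing the $\cC_1$-perturbation along the $r$ least-controllable directions $W_{-r}$ of $\Gamma_\infty(A)$, the crux is to show that for every $A'$ in the packing $\tr(W_{-r}^\top\Gamma_\infty(A')W_{-r})$ stays comparable to $r\,\underline{\lambda}_r(\Gamma_\infty(A))$; this is exactly where the assumption $\varepsilon\le\Vert\Gamma_\infty(A)\Vert_2^{-3}/12$ enters, since it guarantees that all perturbed matrices remain stable and that $\Vert\Gamma_\infty(A')-\Gamma_\infty(A)\Vert$ is controlled for a rank-$r$, size-$\varepsilon$ perturbation. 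Quantifying this Gramian perturbation is the bulk of the technical work; substituting the resulting bound and repeating the Fano and two-point steps then delivers \eqref{eq:lb2}.
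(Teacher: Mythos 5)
Your argument for part (i) is correct in substance but combines the information-theoretic ingredients differently from the paper. The paper feeds the whole packing into a single multiple-hypothesis data-processing inequality (Lemma~\ref{lemma:data_process}, imported from \cite{jedra2022finite}), which yields $\frac{1}{m}\sum_i \EE_{A_i}[L_N(A_i,A)] \ge \frac{1}{2}\log\left(\frac{m}{4\delta}\right)$ and hence the additive numerator $rd\log 2 + \log\frac{1}{4\delta}$ in one step. You instead run classical Fano over the packing (producing the $rd_y$, resp.\ $rd_x$, term) and a separate Le Cam two-point comparison (producing $\log(1/\delta)$), then take the maximum of the two resulting bounds; since $\max(a,b)\ge (a+b)/2$, this recovers the same additive numerator up to a factor $2$. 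This modular route works here---despite the paper's remark after Lemma~\ref{lemma:data_process} suggesting that a two-point method cannot capture the right parameter dependence---precisely because of the structural fact you isolate: every element of $\cC_1(A,r,\varepsilon)$ sits at exactly the same KL divergence from $\PP_A$, and your two-point perturbation is chosen along the same structured direction $W_{-r}$, so it already carries the correct $\underline{\lambda}_r(\Sigma)$ (resp.\ $\bar{\lambda}_r(\Sigma)$) dependence. One algebraic slip must be fixed: for a $\cC_1$-perturbation the divergence is $\frac{2nC^2\varepsilon^2}{\sigma^2 r}\tr(W_{-r}^\top \Sigma W_{-r}) = \frac{2nC^2\varepsilon^2}{\sigma^2}\,\underline{\lambda}_r(\Sigma)$; you dropped the $1/r$ and wrote $\frac{2nC^2\varepsilon^2}{\sigma^2}\, r\, \underline{\lambda}_r(\Sigma)$. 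Carried through Fano, this overestimate of the per-sample information would degrade your conclusion to $N\gtrsim \sigma^2 d_y/(\varepsilon^2\underline{\lambda}_r(\Sigma))$, losing the factor $r$ in \eqref{eq:lb1}; your $\cC_2$ computation, which keeps the $1/r$, is the correct template.

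For part (ii) there is a genuine gap. You correctly diagnose that under $\PP_{A'}$ the design is generated by the perturbed dynamics, so the divergence involves $\sum_s \Gamma_s(A') \preceq n\,\Gamma_\infty(A')$, and that one must show $\tr(W_{-r}^\top \Gamma_\infty(A') W_{-r})$ remains of order $r\,\underline{\lambda}_r(\Gamma_\infty(A))$---but you leave exactly this step, which is the heart of the proof, unestablished. Two ingredients close it. First, a quantitative Gramian perturbation bound: when $\Vert A'-A\Vert_2$ is small enough (this is where $\varepsilon\leq \Vert \Gamma_\infty(A)\Vert_2^{-3}/12$ enters, since $\Vert A'-A\Vert_2 = 2C\varepsilon/\sqrt{r}$), one has $\Vert \Gamma_\infty(A')-\Gamma_\infty(A)\Vert_2 \le 16 \Vert A'-A\Vert_2 \Vert \Gamma_\infty(A)\Vert_2^{3}$; the paper does not prove this either but imports it as Lemma~1 of \cite{jedra2022finite}, and your proposal neither proves nor cites it. Second---and this is the point your word ``comparable'' glosses over---the resulting control is \emph{additive}: it gives $\tr(W_{-r}^\top\Gamma_\infty(A')W_{-r}) \le r\,\underline{\lambda}_r(\Gamma_\infty(A)) + O(r)$, and an additive $O(r)$ term is not in general dominated by $r\,\underline{\lambda}_r(\Gamma_\infty(A))$. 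The paper absorbs it by observing that $\Gamma_\infty(A) = I + \sum_{k\ge 1} A^k (A^k)^\top \succeq I$, hence $\underline{\lambda}_r(\Gamma_\infty(A)) \ge 1$, which converts the additive bound into the multiplicative constant $17$ appearing in its proof. Both ingredients are easy to supply (one by citation, one by observation), but as written your system-identification branch asserts its decisive estimate rather than proving it.
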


% \begin{theorem} Let $\varepsilon>0$ and $\delta\in (0,1)$. Suppose that Assumption \ref{assump:sampling} holds and let $A$ such that $\mathrm{rank}(A)\le r\leq \frac{1}{2}\underline{d}$. \\
% (a) Multivariate regression. The sample complexity $N$ of any $(\varepsilon,\delta, r)$-stable algorithm in $A$ satisfies:
% \begin{equation}\label{eq:lb1}
% N\gtrsim  \frac{\sigma^2}{\varepsilon^2}\left(\frac{\log (\frac{1}{\delta})+rd_x}{\bar{\lambda}_r(\Sigma)}\vee\frac{\log (\frac{1}{\delta})+rd_y}{\underline{\lambda}_r(\Sigma)}\right).
% \end{equation}
% (b) Linear system identification. Further assume that $\varepsilon\leq \frac{\Vert \Gamma_\infty(A)\Vert_2^{-{3}}}{12}$. The sample complexity $N$ of any $(\varepsilon,\delta, r)$-stable algorithm in $A$ satisfies:
% \begin{equation}\label{eq:lb2}
% N\gtrsim \frac{\sigma^2}{\varepsilon^2}\frac{\left(\log (\frac{1}{\delta})+rd_x\right)}{\underline{\lambda}_r(\Gamma_\infty(A))}.
% \end{equation}
% \label{thm:full}
% \end{theorem}

\vspace{-0.65cm}
\subsection{Lower bounds for rank-adaptive algorithms}\label{subsec:lb_adapt}

%Defining a suitable notion of the neighborhood around the matrix $A$ to be estimated, along with the associated sample complexity, becomes significantly more intricate when no prior information about the rank of $A$ is available. To establish fundamental limits for rank-adaptive algorithms, we adopt the following strategy: such algorithms typically select an effective rank $k$ and estimate $\Pi_k(A)$, the best rank-$k$ approximation of $A$. Assuming $k$ is fixed, we can derive a lower bound on the sample complexity for this estimation procedure. By minimizing this lower bound over all possible values of $k$, we obtain a comprehensive sample complexity lower bound for rank-adaptive algorithms. This approach is formalized below.

We now derive a refined lower bound for rank-adaptive algorithms. To this aim, we define a neighborhood obtained by perturbing the matrix $A$ using matrices whose ranks are smaller than the optimal effective rank. We begin by defining the latter.
%, and for the sake of discussion, we consider the case of multivariate regression (similar arguments hold for system identification). 
Using the results derived for rank-constrained algorithms, we know that for a given selected rank $k$ and given $n$ samples, the minimal Frobenius estimation error  for the estimation of $\Pi_k(A)$ should behave as, in the multivariate case, 
\begin{equation}
\mathrm{ErrReg}(k,n,\Sigma):=\sigma^2\!\left(\frac{kd_x+\log (\frac{1}{\delta})}{n\bar{\lambda}_k(\Sigma)}\vee\frac{kd_y+\log (\frac{1}{\delta})}{n\underline{\lambda}_k(\Sigma)}\right)
\end{equation}
and in the system identification case
\begin{equation}
\mathrm{ErrLti}(k,n,\Gamma_\infty(A)):=\sigma^2\frac{kd_x+\log (\frac{1}{\delta})}{n\underline{\lambda}_k(\Gamma_\infty(A))}.    
\end{equation}
Since we ignore the remaining singular values of $A$, this error square must be increased by $\sum_{i>k}s_i^2(A)$. We can hence guess that the optimal effective rank is $k^\star_{A,n}$ solving in the multivariate case (and similarly in the system identification case)

\begin{equation}
k^\star_{A,n}:=\arg\min_{k\in[r] } \Bigg( \mathrm{ErrReg}(k,n,\Sigma)+\sum_{i>k}s_i^2(A)\Bigg).
\end{equation}
%Note that we have $k^\star_{A,N}\le r:=\textrm{rank}(A)$. 

We are now ready to define a notion of stability for rank-adaptive algorithms. 

\begin{definition}[$(\varepsilon,\delta)$-stability]{\it For any integer $m\ge 1$, define ${\cal D}(A,m,\varepsilon) = {\cal C}(A,k^\star_{A,m},\varepsilon)$. An algorithm is $(\varepsilon,\delta)$-stable in $A$ if the following assertion holds:
\begin{equation}\label{eq:ada_stability}
    \exists N\in \mathbb{N} \quad \mathrm{s.t} \quad  \begin{cases}
        (a) \ \Vert A-\Pi_{k^\star_{A,N}}(A)\Vert_F\leq \varepsilon \\
        (b) \ \forall n\geq N, \ \forall A'\in {\cal D}(A,N,\varepsilon),  \\ \quad \mathbb{P}_{ A'}(\Vert \hat A_n-A'\Vert _\F\leq\varepsilon )\geq 1-\delta.
    \end{cases} 
\end{equation}
% there exists $N\in \mathbb{N}$ such that {$\Vert A-\Pi_{k^\star_{A,N}}(A)\Vert_F\leq \varepsilon$, and} for all $n\ge N$, and for all $A' \in {\cal D}(A,N,\varepsilon)$, we have $\mathbb{P}_{ A'}(\Vert \hat A_n-A'\Vert _\F\leq\varepsilon )\geq 1-\delta$.}
}
\label{def:full2}
\end{definition}
\vspace{-0.3cm}
If an algorithm is $(\varepsilon,\delta)$-stable in $A$, we define its {\it sample complexity} as the minimal integer $N$ such that Assertion (\ref{eq:ada_stability}) holds. Observe that the definition of $(\varepsilon,\delta)$-stability involves a neighborhood ${\cal D}(A,N,\varepsilon)$ of $A$ that depends on the number of samples $N$. Such a dependence is essential because we wish to analyze algorithms that adaptively select the effective rank as a function of $N$. 
The class of algorithms that are $(\varepsilon,\delta)$-stable in $A$ is broad and includes all consistent algorithms. 
%(to see that, we just need to remark that the set ${\cal D}(A,N,\varepsilon)$ is included in the finite set $\cup_k{\cal C}(A,k,\varepsilon)$). 
\begin{theorem} Let $\varepsilon>0$ and $\delta\in (0,1)$. Suppose that $\EE[\hat{\Sigma}] \succ 0$ and  $\mathrm{rank}(A)\le r\leq \underline{d} / 2$. %\footref{fn:shared}. 
Then: \\
(i) Multivariate regression. The sample complexity $N$ of any $(\varepsilon,\delta)$-stable algorithm in $A$ satisfies: {$$N\geq \min\left\{n: \gamma_A^\delta(n)\leq \frac{32}{\log(2)} \varepsilon^2\right\},$$} where we define for any $n \ge 1$, 
% \\
% $
% {}\qquad\quad\qquad\gamma_A^\delta(n)=\min_{k=0,...,r} \Big( \sigma^2(\frac{\log (\frac{1}{\delta})+kd_x}{n\bar{\lambda}_k(\Sigma)}\vee\frac{\log (\frac{1}{\delta})+kd_y}{n\underline{\lambda}_k(\Sigma)})+\sum_{i>k}s_i^2(A)\Big).$\\
\begin{align}
\gamma_A^\delta(n): =
\min_{k\in[r]} \Bigg(
  \mathrm{ErrReg}(k,n,\Sigma) + \sum_{i>k} s_i^2(A)
\Bigg).
\end{align}
% The sample complexity $N$ of any $(\varepsilon,\delta)$-stable algorithm in $A$ satisfies: $N\gtrsim \min\{n: \gamma_A^\delta(n)\le \varepsilon^2\}$.
(ii) Linear system identification. Further assume that $\varepsilon\leq \Vert \Gamma_\infty(A)\Vert_2^{-{3}}/12$. The sample complexity $N$ of any $(\varepsilon,\delta)$-stable algorithm in $A$ satisfies: {$$N\geq \min\{n: \beta_A^\delta(n)\leq  \frac{640}{\log(2)}\varepsilon^2\},$$} where we define for any $n \ge 1$, 
\begin{equation}
\beta_A^\delta(n):=\min_{k\in[r] } \left( \mathrm{ErrLti}(k,n,\Gamma_\infty(A)) +\sum_{i>k}s_i^2(A)\right).
\end{equation}
% The sample complexity $N$ of any $(\varepsilon,\delta)$-stable algorithm in $A$ satisfies: $N\gtrsim \min\{n: \beta_A^\delta(n)\le  \varepsilon^2\}$.
\label{thm:reduced}
\end{theorem}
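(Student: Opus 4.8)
The plan is to reduce the rank-adaptive lower bound to the rank-constrained bound of Theorem~\ref{thm:full}, instantiated at the data-dependent effective rank $k^\star_{A,N}$, and then to combine the resulting estimation floor with the approximation constraint that is built into the definition of $(\varepsilon,\delta)$-stability. I fix any algorithm that is $(\varepsilon,\delta)$-stable in $A$ and let $N$ denote its sample complexity, so that both clauses of Assertion~\eqref{eq:ada_stability} hold. Since $\mathcal{D}(A,N,\varepsilon)=\mathcal{C}(A,k^\star_{A,N},\varepsilon)$ by definition, clause (b) asserts exactly that the algorithm is $\varepsilon$-accurate with probability at least $1-\delta$, uniformly over this neighborhood and for all $n\ge N$; this is precisely the accuracy requirement~\eqref{eq:rc_stability} of $(\varepsilon,\delta,r)$-stability with $r=k^\star_{A,N}$.

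The first step is to observe that the lower bound behind Theorem~\ref{thm:full} uses only this accuracy requirement over the neighborhood: its change-of-measure and packing steps never invoke the rank-$r$ output clause (that clause merely delimits the class of rank-constrained algorithms) and remain valid for any $k^\star_{A,N}\le\underline{d}/2$, without assuming $k^\star_{A,N}\ge\mathrm{rank}(A)$. I would therefore re-run that computation with $r=k^\star_{A,N}$. In the multivariate case, rearranging the resulting bound in $N$ yields the estimation floor $\mathrm{ErrReg}(k^\star_{A,N},N,\Sigma)\lesssim\varepsilon^2$: the two packing families $\mathcal{C}_1$ and $\mathcal{C}_2$, built from the bottom eigenvectors $W_{-k^\star_{A,N}}$ of $\Sigma$ and the top singular vectors $U_{k^\star_{A,N}}$ of $A$, produce the $\underline{\lambda}$-term and the $\bar{\lambda}$-term of $\mathrm{ErrReg}$ respectively, and the maximum of the two reproduces $\mathrm{ErrReg}$ at rank $k^\star_{A,N}$. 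Separately, clause (a) gives $\sum_{i>k^\star_{A,N}}s_i^2(A)=\Vert A-\Pi_{k^\star_{A,N}}(A)\Vert_\F^2\le\varepsilon^2$. Adding the two inequalities, and recalling that $k^\star_{A,N}$ is the minimizer defining $\gamma_A^\delta(N)$, gives $\gamma_A^\delta(N)=\mathrm{ErrReg}(k^\star_{A,N},N,\Sigma)+\sum_{i>k^\star_{A,N}}s_i^2(A)\le\frac{32}{\log 2}\varepsilon^2$, where the factor $1/\log 2$ traces back to the packing cardinality $2^{kd}$ of Lemma~\ref{lem:stiefel} and the remaining numerical constant comes from bounding the constants of the two-family Fano inequality together with the unit contribution of clause (a).

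To convert $\gamma_A^\delta(N)\le\frac{32}{\log 2}\varepsilon^2$ into the stated $N\ge\min\{n:\gamma_A^\delta(n)\le\frac{32}{\log 2}\varepsilon^2\}$, I would use that $n\mapsto\gamma_A^\delta(n)$ is non-increasing: for fixed $k$, $\mathrm{ErrReg}(k,n,\Sigma)$ scales as $1/n$ while $\sum_{i>k}s_i^2(A)$ is constant in $n$, so each term in the minimum is non-increasing and hence so is the minimum. Its sublevel set is therefore an up-set $[n_0,\infty)$, and membership of $N$ forces $N\ge n_0$. Part (ii) is entirely parallel: one re-runs the argument behind Theorem~\ref{thm:full}(ii) at rank $k^\star_{A,N}$, using the bottom eigenvectors of $\Gamma_\infty(A)$ in the packing and inheriting the extra condition $\varepsilon\le\Vert\Gamma_\infty(A)\Vert_2^{-3}/12$, to obtain $\mathrm{ErrLti}(k^\star_{A,N},N,\Gamma_\infty(A))\lesssim\varepsilon^2$, then combines this with clause (a) and the monotonicity of $\beta_A^\delta$; the larger constant $640/\log 2$ reflects the more delicate concentration analysis required for correlated trajectories.

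The step I expect to be the main obstacle is the instantiation of the rank-constrained bound at the \emph{data-dependent} rank $k^\star_{A,N}$. There is an apparent circularity---$k^\star_{A,N}$ is defined through $N$, which is the quantity being bounded---that must be resolved by fixing the stable algorithm first, so that $N$, and hence $k^\star_{A,N}$, are determined constants before any change-of-measure argument is applied. One must also verify that $\mathcal{C}(A,k^\star_{A,N},\varepsilon)$ is still a well-separated packing when $k^\star_{A,N}<\mathrm{rank}(A)$---so that $A$ is only approximately low rank and the perturbed instances may have rank as large as $\mathrm{rank}(A)+k^\star_{A,N}$---and that no step of the Fano/change-of-measure computation silently relied on the discarded rank-$r$ output clause or on $\mathrm{rank}(A)\le r$. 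Carrying the explicit constants $32$ and $640$ through the two-family packing bound and the two-point argument responsible for the $\log(\frac{1}{\delta})$ term is then routine, but must be tracked carefully to match the stated thresholds.
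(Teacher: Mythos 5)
Your proposal is correct and follows essentially the same route as the paper: the paper's proof likewise fixes the algorithm (so that $N$ and hence $k^\star_{A,N}$ are determined constants), re-runs the change-of-measure/packing argument of Theorem~\ref{thm:full} with $r$ replaced by $k^\star_{A,N}$ (which never uses the rank-output clause nor $\mathrm{rank}(A)\le k^\star_{A,N}$, only $k^\star_{A,N}\le \underline{d}/2$), invokes clause (a) for the approximation term, and concludes from membership of $N$ in the sublevel set of $\gamma_A^\delta$ (resp.\ $\beta_A^\delta$). The only difference is bookkeeping: rather than adding the two inequalities $\mathrm{ErrReg}(k^\star_{A,N},N,\Sigma)\lesssim \varepsilon^2$ and $\sum_{i>k^\star_{A,N}}s_i^2(A)\le\varepsilon^2$, the paper runs the packing at the inflated radius $\varepsilon'=2\varepsilon-\Vert A-\Pi_{k^\star_{A,N}}(A)\Vert_\F$ and uses $\varepsilon'^2\le 4\varepsilon^2-3\Vert A-\Pi_{k^\star_{A,N}}(A)\Vert_\F^2$, which affects only the numerical constants.
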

\vspace{-0.5cm}
%We show in Section \ref{sec:lse} that our notion of stability leads to nearly achievable sample complexity lower bounds. 
\section{MATRIX DENOISING VIA SINGULAR VALUE THRESHOLDING}\label{sec:denoising}

A key component of our algorithms is an adaptive singular value thresholding procedure that can be combined with any estimator of $A$. Specifically, let $\bar{A}$ be a given estimate of $A$, and denote $Z = \bar{A} - A$ its estimation error. Given a threshold parameter $\xi > 0$, the procedure runs an SVD decomposition to obtain $\bar{A}=\sum_{i=1}^{\underline{d}}s_i(\bar{A})u_iv_i^\top$, 
then produces 
\begin{equation*}
    \bar{A}(\xi)=\sum_{i=1}^{\underline{d}}\mathds{1}_{ \lbrace s_i(\bar A)> \xi \rbrace} s_i(\bar{A}) u_iv_i^\top.
\end{equation*} 
{\bf Singular value thresholding with tighter guarantees.} In his seminal work \cite{chatterjee2015matrix}, Chatterjee proposes a universal way of selecting the threshold $\xi$ and provides a performance analysis of the resulting algorithm. His analysis builds on a crucial result (Lemma 3.5 in \cite{chatterjee2015matrix}) stating that for  $\tau>0$ and $\xi =(1+\tau)\Vert Z\Vert_2$, it must hold that
$
\|\bar{A}\left(\xi\right)-A\Vert_\F^2\leq f(\tau) \Vert Z\Vert_2 \Vert A\Vert_1
$, 
where $f(\tau) = ((4+2\tau)\sqrt{2/\tau}+\sqrt{2+\tau})^2$. This upper bound appears conservative as it depends on the nuclear norm of $A$, and consequently on all the singular values of $A$. Furthermore, observe that when $\tau\rightarrow \infty$ then the upper bound also goes to infinity while it is clear that $\|\bar{A}\left(\xi\right)-A\Vert_\F^2\rightarrow\Vert A\Vert_\F^2$. Ideally, we seek an error upper bound that is independent of the singular values of $A$ exceeding the threshold $\xi$. In the following theorem, we derive an error upper bound satisfying this desired property.
\begin{theorem}
For $\xi \ge 2\| Z\|_2$, we have: 
% $\Vert \bar{A}(\xi) -A\Vert_\F^2\leq 18\min_{k=0,...,r}\left( 4 k\xi^2+ \sum_{i>k}s_i^2(A)\right)$.
%\begin{equation}\label{eq:gua}
$$
    \Vert \bar{A}(\xi) -A\Vert_\F^2\leq 18\min_{k\in[r]}\left( 4 k\xi^2+ \sum_{i>k}s_i^2(A)\right).
$$
%\end{equation}
\label{thm:opt_best_k}
\end{theorem}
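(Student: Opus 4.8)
The plan is to compare the thresholded estimator $\bar A(\xi)$ against the best rank-$k^\star$ approximation of $A$, where $k^\star := |\{i : s_i(A) > 2\xi\}|$. This $k^\star$ is exactly the minimizer over $k\ge 0$ of $g(k) := 4k\xi^2 + \sum_{i>k}s_i^2(A)$, since $g(k{+}1)-g(k) = 4\xi^2 - s_{k+1}^2(A)$ changes sign at $k^\star$. As $\mathrm{rank}(A)\le r$ forces $k^\star\le r$, it suffices to prove $\Vert \bar A(\xi)-A\Vert_\F^2 \le 18\, g(k^\star)$, because $g(k^\star)=\min_{k\ge 0}g(k)\le \min_{k\in[r]}g(k)$. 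I will use throughout Weyl's inequality $|s_i(\bar A)-s_i(A)|\le \Vert Z\Vert_2 \le \xi/2$.

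The core idea — and the step that makes the bound independent of the large singular values of $A$ — is an orthogonal block decomposition relative to the \emph{estimated} singular subspaces. Let $\hat r$ be the number of retained singular values, and let $P=\sum_{i\le \hat r}u_iu_i^\top$, $Q=\sum_{i\le\hat r}v_iv_i^\top$ be the projections onto the top-$\hat r$ left/right singular subspaces of $\bar A$, so that $\bar A(\xi)=P\bar A Q$. Splitting $\bar A(\xi)-A$ into the four mutually Frobenius-orthogonal blocks induced by $(P,P^\perp)$ and $(Q,Q^\perp)$, and using $P\bar A Q^\perp = P^\perp \bar A Q = 0$ together with $\bar A = A+Z$, I obtain $\Vert \bar A(\xi)-A\Vert_\F^2 = \Vert PZQ\Vert_\F^2 + \Vert PZQ^\perp\Vert_\F^2 + \Vert P^\perp ZQ\Vert_\F^2 + \Vert P^\perp A Q^\perp\Vert_\F^2$. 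The crucial structural fact is that the signal $A$ survives only in the last block, while the first three see only the noise $Z$.

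I then bound the blocks. Each of the three noise blocks has rank at most $\hat r$ (through $P$ or $Q$) and operator norm at most $\Vert Z\Vert_2\le \xi/2$, hence squared Frobenius norm at most $\hat r\xi^2/4$. To absorb their total into the approximation error, I control $\hat r$ via Weyl: any retained index $i\le\hat r$ has $s_i(A)>\xi/2$, so each index in $(k^\star,\hat r]$ contributes more than $\xi^2/4$ to $\sum_{i>k^\star}s_i^2(A)$, giving $\hat r\xi^2\le k^\star\xi^2 + 4\sum_{i>k^\star}s_i^2(A)$. For the signal block I decompose $A=\Pi_{k^\star}(A)+(A-\Pi_{k^\star}(A))$: the tail piece contributes at most $\Vert A-\Pi_{k^\star}(A)\Vert_\F^2=\sum_{i>k^\star}s_i^2(A)$ by contractivity of $P^\perp,Q^\perp$, whereas $P^\perp\Pi_{k^\star}(A)Q^\perp$ has rank at most $k^\star$ and operator norm at most $\Vert P^\perp A Q^\perp\Vert_2 + s_{k^\star+1}(A)\le 3\xi/2 + 2\xi$, using $s_{\hat r+1}(\bar A)\le\xi$ (so $\Vert P^\perp AQ^\perp\Vert_2\le 3\xi/2$) and $s_{k^\star+1}(A)\le 2\xi$. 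Collecting the pieces yields a bound $c_1 k^\star\xi^2 + c_2\sum_{i>k^\star}s_i^2(A)$ with $c_1,c_2$ small enough to be dominated by $18\,g(k^\star)$.

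I expect the main obstacle to be exactly what this decomposition circumvents. The naive route — $\Vert \bar A(\xi)-A\Vert_\F \le \sqrt{\mathrm{rank}}\,\Vert \bar A(\xi)-A\Vert_2 + \Vert\text{tail}\Vert_\F$ after a triangle inequality — overcounts the singular values of $A$ lying between $\xi/2$ and $2\xi$, charging each at the full scale $\xi^2$ and producing a hopelessly large constant on $\sum_{i>k}s_i^2(A)$ (and, in Chatterjee's version, a spurious dependence on $\Vert A\Vert_1$). Confining the signal to the single block $P^\perp A Q^\perp$ and observing that the other three blocks carry only the small-operator-norm noise $Z$ is what removes this dependence; the remaining delicate point is the bookkeeping that ties $\hat r$ to the tail sum through Weyl, ensuring the noise blocks cost no more than the approximation error already being paid.
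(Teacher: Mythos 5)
Your proposal is correct, and it takes a genuinely different route from the paper's. The paper factors the result through its Lemma \ref{lem:trade_off_k}, which is proved variationally: Eckart--Young optimality of $\Pi_K(\bar A)$ among rank-$K$ matrices (applied to $\bar A = A + Z$), a reverse triangle inequality, and an inner-product bound $\langle \Pi_K(\bar A)-\Pi_K(A), Z\rangle \le \Vert \Pi_K(\bar A)-\Pi_K(A)\Vert_\F \Vert \Pi_{2K}(Z)\Vert_\F$ borrowed from the literature; the theorem then follows by noting $\bar A(\xi) = \Pi_K(\bar A)$ with $K$ the number of retained singular values, and a Weyl-based index comparison against $k^\star = \max\{i : s_i(A) \ge (3+\alpha)\Vert Z \Vert_2\}$ showing that the indices in $(k^\star, K]$ contribute negatively ($\Vert Z\Vert_2^2 - s_i^2(A) < 0$). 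You instead exploit the identity $\bar A(\xi) = P \bar A Q$ to get an \emph{exact} orthogonal four-block decomposition in which the noise $Z$ occupies three blocks (each of rank at most $\hat r$ and operator norm at most $\xi/2$) and the signal $A$ survives only in the $(P^\perp, Q^\perp)$ block; Weyl's inequality then does double duty, tying $\hat r$ to $k^\star$ through the tail sum and bounding the operator norm of $P^\perp \Pi_{k^\star}(A) Q^\perp$ by $7\xi/2$. Your constants check out: the three noise blocks cost at most $\tfrac{3}{4}k^\star\xi^2 + 3\sum_{i>k^\star}s_i^2(A)$, the signal block at most $\tfrac{49}{2}k^\star\xi^2 + 2\sum_{i>k^\star}s_i^2(A)$, and $\tfrac{101}{4} \le 72$, $5 \le 18$, so your bound is in fact sharper than the stated one. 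What each approach buys: yours is more self-contained (no Eckart--Young, no external inner-product lemma) and makes transparent \emph{why} the bound is insensitive to the large singular values of $A$; the paper's Lemma \ref{lem:trade_off_k}, however, applies to any truncation $\Pi_k(\bar A)$ rather than only to the hard-thresholded estimator, and this generality is reused elsewhere in the paper (the \RLSE\ analyses in Theorems \ref{thm:rlse-gauss} and \ref{thm:rlse-lti} apply it with $k=r$, where no threshold is involved), so your argument could not serve as a drop-in replacement for the lemma itself.
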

\vspace{-0.3cm}
The theorem has a significant implication for the design of rank-adaptive algorithms. Suppose that we can construct an initial estimate $\bar A$ of $A$ and derive a concentration result for its error $Z=\bar{A}-A$. For instance, assume that we can upper bound $2\|Z\|_2$ by $\xi$ with high probability. In this case, a rank-adaptive algorithm that outputs $\bar{A}(\xi)$, i.e., with effective rank $k=\max\{i: s_i(\bar{A})>\xi\}$, is expected to perform well. Theorem \ref{thm:opt_best_k} is a consequence of the following decomposition lemma. 
%For $Z\in \mathbb{R}^{d_y\times d_x}$, and for any integer $k\le \underline{d}$, if $2k>\underline{d}$, we let $\Pi_{2k}(Z)=\Pi_{\underline{d}}(Z)$ by convention.
\begin{lemma}
For $k \in [\underline{d}] $, we have:
\begin{equation*}
    \Vert \Pi_k(\bar A)-A\Vert_\F \leq 2\sqrt{2}\Vert \Pi_{k}(Z)\Vert_\F + 3\Vert A -\Pi_k(A)\Vert_\F.
\end{equation*}
% \begin{equation}
% \Vert \Pi_k(\bar A)-A\Vert_\F \leq 2\sqrt{2}\Vert \Pi_{k}(Z)\Vert_\F + 3\Vert A -\Pi_k(A)\Vert_\F.  
% \label{eq:denoising_pi}
% \end{equation}
\label{lem:trade_off_k}
\end{lemma}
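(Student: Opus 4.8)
The plan is to avoid any subspace-perturbation (Davis--Kahan type) argument and instead exploit the optimality of $\Pi_k(\bar A)$ directly. Write $\hat A = \Pi_k(\bar A)$ and $A_k = \Pi_k(A)$, and set $\tau = \Vert A - \Pi_k(A)\Vert_\F$ and $z = \Vert \Pi_k(Z)\Vert_\F$. Since $\hat A$ minimizes $\Vert \bar A - M\Vert_\F$ over all matrices $M$ of rank at most $k$, and $A_k$ is itself of rank at most $k$, the \emph{basic inequality} $\Vert \bar A - \hat A\Vert_\F^2 \le \Vert \bar A - A_k\Vert_\F^2$ holds. Substituting $\bar A = A + Z$ and expanding both squares, the common term $\Vert Z\Vert_\F^2$ cancels and one is left with $\Vert \hat A - A\Vert_\F^2 \le \tau^2 + 2\langle Z, \hat A - A_k\rangle$, using that $(A-A_k)+(\hat A - A) = \hat A - A_k$.

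The whole difficulty is then concentrated in the cross term $\langle Z, \hat A - A_k\rangle$, and the key observation is that $\hat A - A_k$ has rank at most $2k$. I would isolate as an elementary sub-lemma the bound $\langle Z, M\rangle \le \Vert \Pi_m(Z)\Vert_\F \Vert M\Vert_\F$, valid for every $M$ of rank at most $m$: writing $M = P M Q$ with $P,Q$ the orthogonal projections onto the column and row spaces of $M$ (each of rank $\le m$), one has $\langle Z, M\rangle = \langle P Z Q, M\rangle \le \Vert P Z Q\Vert_\F\Vert M\Vert_\F$, and $\Vert P Z Q\Vert_\F \le \Vert P Z\Vert_\F \le \Vert \Pi_m(Z)\Vert_\F$ because $\max\{ \tr(P ZZ^\top) : P \text{ an orthogonal projection of rank } \le m \} = \sum_{i\le m}s_i^2(Z)$. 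Applying this with $m = 2k$ gives $\langle Z, \hat A - A_k\rangle \le \Vert \Pi_{2k}(Z)\Vert_\F\,\Vert \hat A - A_k\Vert_\F$, and since $s_{k+j}(Z)\le s_j(Z)$ for $1\le j\le k$ we have $\Vert \Pi_{2k}(Z)\Vert_\F \le \sqrt 2\, z$. Hence $\Vert \hat A - A\Vert_\F^2 \le \tau^2 + 2\sqrt 2\, z\,\Vert \hat A - A_k\Vert_\F$.

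To close the argument I would use the triangle inequality $\Vert \hat A - A_k\Vert_\F \le \Vert \hat A - A\Vert_\F + \tau$ to obtain, with $\Delta := \Vert \hat A - A\Vert_\F$, the scalar quadratic inequality $\Delta^2 - 2\sqrt 2\, z\,\Delta - (\tau^2 + 2\sqrt 2\, z\,\tau)\le 0$. Its discriminant simplifies to a perfect square, $2z^2 + \tau^2 + 2\sqrt 2\, z\tau = (\sqrt 2\, z + \tau)^2$, so solving for the positive root yields $\Delta \le \sqrt 2\, z + (\sqrt 2\, z + \tau) = 2\sqrt 2\, z + \tau$. This is precisely $\Vert \Pi_k(\bar A) - A\Vert_\F \le 2\sqrt 2\,\Vert \Pi_k(Z)\Vert_\F + \Vert A - \Pi_k(A)\Vert_\F$, which in particular implies the stated bound, the coefficient $3$ on the tail term leaving some slack.

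I expect the main obstacle to be resisting the more tempting \emph{geometric} route. Writing $\hat A = P_U \bar A P_V$ with $P_U, P_V$ the top-$k$ left and right singular projections of $\bar A$, one gets the clean decomposition $\hat A - A = P_U Z P_V + (P_U A P_V - A)$, and $\Vert P_U Z P_V\Vert_\F \le \Vert \Pi_k(Z)\Vert_\F$ immediately; but the residual $\Vert P_U A P_V - A\Vert_\F \le \Vert (I-P_U)A\Vert_\F + \Vert A(I-P_V)\Vert_\F$ then requires controlling how far the singular subspaces of $\bar A$ are from those of $A$, which brings in spectral gaps and yields clean constants only under a gap assumption. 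The optimality-based route above sidesteps this entirely and is the one I would pursue; the only genuinely delicate step in it is the rank-$2k$ inner-product bound, with everything else being bookkeeping and a one-line scalar quadratic.
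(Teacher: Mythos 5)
Your proof is correct, and its skeleton coincides with the paper's: both start from the Eckart--Young basic inequality $\Vert \bar A - \Pi_k(\bar A)\Vert_\F^2 \le \Vert \bar A - \Pi_k(A)\Vert_\F^2$, both reduce matters to the cross term $\langle Z, \Pi_k(\bar A) - \Pi_k(A)\rangle$ involving a matrix of rank at most $2k$, and both control that term by the same rank-restricted inner-product bound $\langle Z, M\rangle \le \Vert \Pi_{2k}(Z)\Vert_\F \Vert M\Vert_\F$ (the paper invokes Lemma 2 of \cite{xiang2012optimal} for this; you re-prove it via the projection identity $M = PMQ$ and the extremal partial trace, and your re-proof is sound), together with the observation $\Vert \Pi_{2k}(Z)\Vert_\F \le \sqrt{2}\,\Vert \Pi_k(Z)\Vert_\F$. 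Where you genuinely diverge is the endgame. The paper lower-bounds the left-hand side by a reverse triangle inequality so as to first obtain $\Vert \Pi_k(\bar A) - \Pi_k(A)\Vert_\F \le 2\Vert \Pi_{2k}(Z)\Vert_\F + 2\Vert A - \Pi_k(A)\Vert_\F$, then converts to the target quantity with one more triangle inequality; this two-step conversion is exactly what produces the factor $3$ on the approximation term. You instead keep $\Delta = \Vert \Pi_k(\bar A) - A\Vert_\F$ as the unknown, insert $\Vert \Pi_k(\bar A) - \Pi_k(A)\Vert_\F \le \Delta + \tau$ directly into the cross term, and solve the resulting scalar quadratic, whose discriminant collapses to the perfect square $(\sqrt{2}z + \tau)^2$. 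This buys you the strictly sharper conclusion $\Delta \le 2\sqrt{2}\,\Vert \Pi_k(Z)\Vert_\F + \Vert A - \Pi_k(A)\Vert_\F$, i.e.\ coefficient $1$ rather than $3$ on the tail term, from which the stated lemma follows a fortiori; such an improvement would also propagate to a slightly smaller universal constant in Theorem \ref{thm:opt_best_k} and downstream error bounds.
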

\vspace{-0.3cm}
We can immediately deduce from this lemma that $\Vert \Pi_k(\bar A)-A\Vert_\F^2\leq 18(k\Vert Z\Vert_2^2+ \sum_{i>k}s_i^2(A))$ for $k\le \underline{d}$. {The proofs of Theorem \ref{thm:opt_best_k} and Lemma \ref{lem:trade_off_k} are given in Appendix \ref{app:sec5}.}
% \begin{equation}\label{eq:denoising_z}
% \Vert \Pi_k(\bar A)-A\Vert_\F^2\leq 18(k\Vert Z\Vert_2^2+ \sum_{i>k}s_i^2(A)).
% \end{equation}

% \subsection{Thresholded nuclear norm regularized estimator}\label{sec:tnnrs}
% {\tt Say that we can use our result to devise a rank-adpt algo. We do that for the nuclear norm regularization \cite{negahban2011estimation}, see appendix ... . that way we get better performance bounds than Theorem 12 of Bunea. Yet Using LSE we will get even better guarantees.  ...}

{\textbf{Improved guarantees for estimators with nuclear norm penalization.}  To better appreciate Theorem \ref{thm:opt_best_k}, we provide in Appendix \ref{app:sec5}, as an example, an improved analysis of an estimator with nuclear norm penalization, previously analyzed in \cite{negahban2011estimation}. For this estimator, an adaptive error upper bound was given in Theorem 12 in \cite{bunea2011optimal}. We show in Corollary \ref{coro:nuclear} that our thresholding yields a smaller upper bound, by a multiplicative factor $\kappa(\Sigma)$.}

% \textcolor{blue}{In Appendix \ref{app:sec5}, we provide with an example of the applicability of Theorem \ref{thm:opt_best_k} to the  nuclear norm regularization estimator, analyzed in particular by \cite{negahban2011estimation}. An adaptive error upper bound has been proposed in Theorem 12 of \cite{bunea2011optimal}, and we show in Corollary \ref{coro:nuclear} that our thresholding allows to obtain a smaller upper bound, by a multiplicative factor $\kappa(\hat\Sigma)$. }%However, as we will see in Section \ref{sec:lse}, thresholding the least square estimator (LSE) yields an even tighter upper bound, closely matching our derived fundamental lower bound. }
\section{THE THRESHOLDED LEAST SQUARES ESTIMATOR}\label{sec:lse}

In this section, we describe how Least Squares Estimator (\LSE) can be combined with singular value thresholding to develop \emph{rank-constrained} and \emph{rank-adaptive} algorithms, for both the multivariate regression and linear system identification tasks. We provide performance guarantees for these algorithms. The proofs are deferred to Appendix \ref{app:sec6}. Before proceeding, we define the \LSE ~as $\bar{A} \in \arg\min_{A \in \RR^{d_y \times d_x}} \sum_{i=1}^n \Vert y_i - Ax_i \Vert^2_2$. We can also express it in its closed form as follows $\bar{A} = (Y^\top X) (X^\top X)^{\dagger}$. 
We assume $n \geq d_x$, ensuring a nonzero probability that $X$ is full rank and $A$ is identifiable. This is standard in the literature (e.g \cite{wainwright2019high}), because even for learning a rank-one matrix, observing at least $d_x$ samples is necessary. In the multivariate regression case, we restrict our attention to Gaussian inputs for ease of exposition. However, we prove general versions of Theorems \ref{thm:rlse-gauss} and \ref{thm:tlse-gauss} for any $\hat \Sigma$  in Appendix \ref{app:sec6}, and discuss when one can  replace $\hat \Sigma$ by $\Sigma$ with recent concentration results obtained by \cite{barzilai2024simple}.

% The Least Squares Estimator (\LSE) is defined as $\bar{A} \in \arg\min_A \sum_{i=1}^n \Vert y_i - Ax_i \Vert^2_2$. When $X$ is full column rank, its error can be explicitly expressed as $\bar{A} - A = (X^\top X)^{-1} X^\top E$. In this section, we describe how the LSE can be combined with a singular value thresholding procedure to develop rank-constrained and rank-adaptive algorithms, for both multivariate regression and system identification. 

% {The proofs of the results presented in this section can be found in Appendix \ref{app:sec6}.}

%We assume throughout this section that the number of samples satisfies $n \geq \underline{d}$, ensuring a nonzero probability that $X$ is full rank. This is a standard assumption in the literature, as explained in \cite{wainwright2019high}, because even for learning a rank-one matrix, observing at least $\underline{d}$ samples is necessary.

%To apply both Lemma \ref{lem:trade_off_k} and Theorem \ref{thm:opt_best_k} to the LSE, we require a concentration result over the error $\Vert Z\Vert_2=\Vert (X^\top X)^{\dagger}X^\top E\Vert_2$ in our cases of interest.

\subsection{Multivariate regression}

% {\it $\rightarrow$ Rank-constrained LSE} (\RLSE). Consider first the case where the rank of $A$ is upper bounded by $r$ and where the algorithm must set its effective rank equal to $r$. We define the R-LSE as $\hat{A}_n:=\Pi_r(\bar{A})$. When the noise is Gaussian or sub-Gaussian with variance proxy $\sigma^2$, we establish a concentration result for $\Pi_k(Z)$, which will be instrumental in deriving an upper bound for the error of the R-LSE.

{\it Rank-constrained LSE} (\RLSE). Here, we consider the case where the rank of $A$ is known to be upper bounded by $r$, thus the algorithm must set the effective rank equal to $r$. We define the \RLSE~as $\hat{A}_n:=\Pi_r(\bar{A})$. %When the noise is Gaussian or sub-Gaussian with variance proxy $\sigma^2$, 
To analyze the performance of \RLSE~we first establish the concentration result below. %a concentration result on $\Pi_k(Z)$ presented below:

\begin{lemma}
    Let $\delta \in (0,1)$. For all $k \in [\underline{d}]$,  with probability at least $1-\delta$, it holds that:  
    \begin{align}
    \Vert\Pi_k(Z)\Vert_\F^2\leq \frac{ k\sigma^2\left(\sqrt{d_x}+\sqrt{d_y}+\sqrt{\log\left(\frac{1}{\delta}\right)}\right)^2}{n\underline{\lambda}^H_k(\hat\Sigma)},
    \end{align}  
    where $\underline{\lambda}^H_k(\hat \Sigma):= \big(\frac{1}{k}\sum_{i=d_x-k+1}^{d_x}\frac{1}{\lambda_i(\hat \Sigma)}\big)^{-1}$\! is the harmonic mean of the $k$ smallest eigenvalues of $\hat{\Sigma}$.
    \label{lem:pi_k_lse}
\end{lemma}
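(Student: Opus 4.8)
The plan is to obtain a clean closed form for the error $Z=\bar A-A$, reduce the control of $\Vert\Pi_k(Z)\Vert_\F$ to that of the top-$k$ singular values of a Gaussian matrix reweighted by the spectrum of $\hat\Sigma$, and then feed in a single operator-norm concentration bound.

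First I would compute $Z$ explicitly. Assuming $\hat\Sigma\succ 0$ (equivalently $X$ has full column rank, which holds almost surely since $n\ge d_x$ with Gaussian inputs; otherwise $\lambda_{d_x}(\hat\Sigma)=0$ and the right-hand side is $+\infty$, making the bound vacuous), we have $(X^\top X)^\dagger=(X^\top X)^{-1}$, and substituting $Y^\top=AX^\top+E^\top$ into $\bar A=(Y^\top X)(X^\top X)^\dagger$ gives $Z=E^\top X(X^\top X)^{-1}=\tfrac1n E^\top X\hat\Sigma^{-1}$. Writing the eigendecomposition $\hat\Sigma=V\Lambda V^\top$ with $\Lambda=\mathrm{diag}(\lambda_1(\hat\Sigma),\dots,\lambda_{d_x}(\hat\Sigma))$, a direct check shows that $U:=\tfrac{1}{\sqrt n}XV\Lambda^{-1/2}$ has orthonormal columns, since $U^\top U=\Lambda^{-1/2}V^\top\hat\Sigma V\Lambda^{-1/2}=I_{d_x}$. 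Hence $Z=\tfrac{1}{\sqrt n}E^\top U\Lambda^{-1/2}V^\top$.

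Next I would exploit the invariance of $\Pi_k$ under right multiplication by an orthogonal matrix: since $V$ is orthogonal, $\Vert\Pi_k(Z)\Vert_\F=\Vert\Pi_k(\tfrac{1}{\sqrt n}G\Lambda^{-1/2})\Vert_\F$ with $G:=E^\top U\in\RR^{d_y\times d_x}$. Conditioning on $X$ (hence on $U$), the Gaussianity of $E$ together with the orthonormality of the columns of $U$ implies that $G$ has i.i.d.\ $\cN(0,\sigma^2)$ entries, with a law that does \emph{not} depend on the particular $U$; this is the key distributional reduction. I would then invoke the singular-value product inequality $s_i(G\Lambda^{-1/2})\le\Vert G\Vert_2\, s_i(\Lambda^{-1/2})$, valid for every $i$, to get $\Vert\Pi_k(Z)\Vert_\F^2=\tfrac1n\sum_{i=1}^k s_i^2(G\Lambda^{-1/2})\le\tfrac{\Vert G\Vert_2^2}{n}\sum_{i=1}^k s_i^2(\Lambda^{-1/2})$. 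The $k$ largest singular values of $\Lambda^{-1/2}$ correspond to the $k$ smallest eigenvalues of $\hat\Sigma$, so $\sum_{i=1}^k s_i^2(\Lambda^{-1/2})=\sum_{i=d_x-k+1}^{d_x}\lambda_i(\hat\Sigma)^{-1}=k/\underline{\lambda}^H_k(\hat\Sigma)$, which is exactly the harmonic-mean normalization in the statement.

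Finally I would supply the only probabilistic input: the standard operator-norm concentration for a $d_y\times d_x$ Gaussian matrix, yielding $\Vert G\Vert_2\le\sigma(\sqrt{d_x}+\sqrt{d_y}+\sqrt{\log(1/\delta)})$ with probability at least $1-\delta$ conditionally on $X$. Because this tail is uniform over $X$ (the conditional law of $G$ is free of $U$), it also holds unconditionally, and a single such event controls $\Vert\Pi_k(Z)\Vert_\F$ simultaneously for all $k\in[\underline d]$ (no union bound is needed, since $k$ enters only through $\Lambda^{-1/2}$, not through $G$). Combining the three displays gives the claim. The main obstacle is the distributional reduction of the second paragraph, i.e.\ arguing carefully through the conditioning that $G=E^\top U$ is i.i.d.\ Gaussian with law independent of $U$, so that the $X$-dependent quantity $\hat\Sigma$ appears correctly in the final bound.
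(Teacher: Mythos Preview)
Your proof is correct and essentially identical to the paper's: both factor $Z$ into a spectral part and a projected-noise part, bound $\|\Pi_k(Z)\|_\F$ via $s_i(PQ)\le s_i(P)\|Q\|_2$, and invoke the same operator-norm concentration (your $\|G\|_2=\|E^\top U\|_2$ coincides with the paper's $\|X(X^\top X)^{-1}X^\top E\|_2$ since $UU^\top$ is exactly the hat projection $P_X$). The only minor difference is that you justify the concentration via a Gaussian rotational-invariance reduction, whereas the paper cites Lemma~3 of Bunea et al.\ directly, which also covers the sub-Gaussian noise assumed in the paper's general setup.
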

Combining this concentration result to Lemma \ref{lem:trade_off_k}, we immediately obtain: 
\begin{theorem}
    Assume that $\mathrm{rank}(A)\leq r$, and $x_i\sim \mathcal{N}(0,\Sigma)$ with $\Sigma\succ 0$. For $n\gtrsim d_x+\log(\frac{1}{\delta})$, the \RLSE~satisfies, with probability at least $1-\delta$:
    
    $$ \Vert \hat{A}_n-A\Vert_\F^2\lesssim r\sigma^2 \frac{\bar d+\log(\frac{1}{\delta})} {n\underline{\lambda}^H_r(\Sigma)}.$$
\label{thm:rlse-gauss}
\end{theorem}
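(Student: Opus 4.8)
The plan is to combine the two key ingredients already developed in the excerpt: the deterministic decomposition in Lemma~\ref{lem:trade_off_k} and the high-probability concentration bound in Lemma~\ref{lem:pi_k_lse}. Since the \RLSE~sets the effective rank to $r$ and outputs $\hat{A}_n = \Pi_r(\bar{A})$, I would instantiate Lemma~\ref{lem:trade_off_k} at $k = r$. This yields
\begin{equation*}
\Vert \hat{A}_n - A\Vert_\F \leq 2\sqrt{2}\,\Vert \Pi_r(Z)\Vert_\F + 3\Vert A - \Pi_r(A)\Vert_\F.
\end{equation*}
Because $\mathrm{rank}(A) \leq r$ by assumption, the approximation term $\Vert A - \Pi_r(A)\Vert_\F = (\sum_{i>r} s_i^2(A))^{1/2}$ vanishes entirely. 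Hence the error reduces to the single estimation term $\Vert \hat{A}_n - A\Vert_\F \leq 2\sqrt{2}\,\Vert \Pi_r(Z)\Vert_\F$, and squaring gives $\Vert \hat{A}_n - A\Vert_\F^2 \leq 8\,\Vert \Pi_r(Z)\Vert_\F^2$.

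Next I would control $\Vert \Pi_r(Z)\Vert_\F^2$ using Lemma~\ref{lem:pi_k_lse} with $k = r$, which states that with probability at least $1-\delta$,
\begin{equation*}
\Vert \Pi_r(Z)\Vert_\F^2 \leq \frac{r\sigma^2\bigl(\sqrt{d_x} + \sqrt{d_y} + \sqrt{\log(1/\delta)}\bigr)^2}{n\,\underline{\lambda}_r^H(\hat\Sigma)}.
\end{equation*}
The remaining work is to simplify the numerator and to pass from the empirical harmonic mean $\underline{\lambda}_r^H(\hat\Sigma)$ to the population quantity $\underline{\lambda}_r^H(\Sigma)$ stated in the theorem. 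For the numerator, since $(\sqrt{d_x} + \sqrt{d_y} + \sqrt{\log(1/\delta)})^2 \lesssim d_x + d_y + \log(1/\delta) \lesssim \bar{d} + \log(1/\delta)$ (absorbing the constant factors into the $\lesssim$ notation, and using $\bar d = d_x \vee d_y$), the numerator is bounded by a constant times $r\sigma^2(\bar d + \log(1/\delta))$, matching the target form.

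The main obstacle will be the transition from $\hat\Sigma$ to $\Sigma$ in the denominator, which is exactly where the Gaussian assumption $x_i \sim \mathcal{N}(0,\Sigma)$ and the sample-size condition $n \gtrsim d_x + \log(1/\delta)$ must be invoked. I would apply a standard concentration result for the extreme eigenvalues of Gaussian empirical covariance matrices (e.g.\ via the Davis--Kahan / Weyl-type perturbation bounds combined with sub-Gaussian operator-norm concentration, as in \cite{wainwright2019high}) to show that on a high-probability event, each of the $r$ smallest eigenvalues of $\hat\Sigma$ is within a constant factor of the corresponding eigenvalue of $\Sigma$; this forces $\underline{\lambda}_r^H(\hat\Sigma) \gtrsim \underline{\lambda}_r^H(\Sigma)$. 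The condition $n \gtrsim d_x + \log(1/\delta)$ is precisely what guarantees $\hat\Sigma$ is full rank with positive smallest eigenvalue and that the relative eigenvalue perturbation is controlled. After a union bound over this event and the event of Lemma~\ref{lem:pi_k_lse}, redefining $\delta$ up to constants, and chaining the three displays above, the claimed bound $\Vert \hat{A}_n - A\Vert_\F^2 \lesssim r\sigma^2(\bar d + \log(1/\delta))/(n\,\underline{\lambda}_r^H(\Sigma))$ follows.
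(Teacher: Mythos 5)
Your overall architecture is exactly the paper's: instantiate Lemma~\ref{lem:trade_off_k} at $k=r$ (so that the approximation term $\Vert A-\Pi_r(A)\Vert_\F$ vanishes because $\mathrm{rank}(A)\le r$), bound $\Vert\Pi_r(Z)\Vert_\F^2$ via Lemma~\ref{lem:pi_k_lse} at $k=r$, absorb $(\sqrt{d_x}+\sqrt{d_y}+\sqrt{\log(1/\delta)})^2\lesssim \bar d+\log(1/\delta)$, and finish with a union bound and an eigenvalue concentration step that replaces $\underline{\lambda}^H_r(\hat\Sigma)$ by $\underline{\lambda}^H_r(\Sigma)$. This is precisely how the paper proceeds (it packages the first three steps as Theorem~\ref{thm:rlse-gen} and then invokes Lemma~\ref{lem:cov-concentration-1}).

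The one genuine gap is in the mechanism you propose for the $\hat\Sigma\to\Sigma$ step. You correctly state \emph{what} must be shown — each of the $r$ smallest eigenvalues of $\hat\Sigma$ within a constant factor of the corresponding eigenvalue of $\Sigma$ — but the tool you name (additive Weyl-type perturbation combined with sub-Gaussian operator-norm concentration) cannot deliver it under the stated sample size. That route only gives
\begin{equation*}
\vert \lambda_i(\hat\Sigma)-\lambda_i(\Sigma)\vert \;\le\; \Vert \hat\Sigma-\Sigma\Vert_2 \;\lesssim\; \lambda_{\max}(\Sigma)\left(\sqrt{\tfrac{d_x}{n}}+\sqrt{\tfrac{\log(1/\delta)}{n}}\right),
\end{equation*}
and with only $n\gtrsim d_x+\log(1/\delta)$ this additive error is of order $\lambda_{\max}(\Sigma)$, which can dwarf the small eigenvalues $\lambda_{d_x-r+1}(\Sigma),\dots,\lambda_{d_x}(\Sigma)$ whenever $\Sigma$ is ill-conditioned — the very regime the harmonic mean $\underline{\lambda}^H_r(\Sigma)$ is designed to capture. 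What is needed is a \emph{relative} (multiplicative) perturbation bound: for Gaussian data one writes $\hat\Sigma=\Sigma^{1/2}\bigl(\tfrac{1}{n}W^\top W\bigr)\Sigma^{1/2}$ with $W$ having i.i.d.\ standard Gaussian rows, so that an Ostrowski-type inequality gives $\lambda_i(\hat\Sigma)\ge \lambda_{\min}\bigl(\tfrac{1}{n}W^\top W\bigr)\,\lambda_i(\Sigma)$ for every $i$, and then $\lambda_{\min}(\tfrac{1}{n}W^\top W)\ge \tfrac12$ with probability $1-\delta$ once $n\gtrsim d_x+\log(1/\delta)$. This relative control, uniform over the whole spectrum, is exactly the content of Theorem~8 of \cite{barzilai2024simple}, restated as Lemma~\ref{lem:cov-concentration-1}, which the paper invokes at this point; replacing your Weyl-plus-operator-norm sketch by that lemma closes the gap and the rest of your argument goes through verbatim.
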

% Note that in the case where $X$ is not full column rank, i.e., $\hat \Sigma$ is singular, this bound trivially holds as the r.h.s. becomes infinity.
% The error upper bound for the rank-constrained LSE is tighter than that of existing algorithms. 
%For instance, it is smaller by a factor of $\kappa(\Sigma)\frac{\underline{\lambda}^H_r(\hat{\Sigma})}{\lambda_\mathrm{min}(\hat{\Sigma})}$ compared to the error upper bound of the estimator obtained via nuclear norm regularization (Corollary 10.14 of \cite{wainwright2019high}).
\vspace{-0.5cm}
The upper bound provided above for the \RLSE~is tighter than that of the best-known rank constrained  algorithm found in Corollary 6 of \cite{bunea2011optimal}, by a factor $\underline{\lambda}^H_r({\Sigma}) / \lambda_\mathrm{min}({\Sigma})$. 
%We note that our upper bound exhibits a dependence on $\hat{\Sigma}$. This can be easily solved using concentration results on random matrices when further assumptions are made on the distributions of $(x_i)_{i \in [n]}$. For instance, if $x_i\sim {\cal N}(0,\Sigma)$, then as a consequence of Theorem 8 in \cite{barzilai2024simple}, under the condition {$n\gtrsim(d_x+\log(1 / \delta))$}, we can replace $\hat{\Sigma}$ by $\Sigma$ in our upper bound (see Appendix \ref{app:stability} for details). 

Since \RLSE~is consistent and returns a rank-$r$ matrix, it is $(\varepsilon,\delta,r)$-stable for any $\epsilon> 0$ and $\delta \in (0,1)$. Its sample complexity  verifies, see Appendix \ref{app:stability},
\begin{equation}
 N\lesssim 
  r\sigma^2\frac{\bar d+\log(\frac{1}{\delta}) }{\varepsilon^2\underline{\lambda}^H_r( \Sigma)}.    
\end{equation}
While the sample complexity upper bound provided for \RLSE~exhibits a dependence on $r, d_x, d_y, n$ and $\sigma$ matching those identified in the lower bound result of Theorem \ref{thm:full}, it still does not capture the right dependence on $\Sigma$. To fully match the lower bound, the term $\underline{\lambda}_r^H(\Sigma)$ should be replaced by either $\underline{\lambda}_r(\Sigma)$ if $ \bar{\lambda}_r(\Sigma)( \log(1/\delta)+rd_y) \ge \underline{\lambda}_r(\Sigma) (\log(1/\delta)+rd_x)$ or $\bar{\lambda}_r(\Sigma)$ otherwise. However, for well-conditioned $\Sigma$, our bounds are tight. Furthermore, due to our application of Lemma \ref{lem:pi_k_lse}, we obtain $r\log(\frac{1}{\delta})$ compared to $\log(\frac{1}{\delta})$ in our lower bound. We argue that this is negligible in the low-rank settings where $r=O(1)$. 

% and if we add the condition $n\ge 4(d_x+\log({1\over \delta}))$, This is a consequence of Theorem 8 in \cite{barzilai2024simple}. 

% \emph{(i)} First, in our lower bounds, we use $\Sigma$, not its empirical version $\hat\Sigma$. Using concentration results, there are many scenarios where we can indeed replace $\hat\Sigma$ by $\Sigma$ in Theorem \ref{thm:rc1}. For example, this is possible if we assume that the covariates are Gaussian, $x_i\sim {\cal N}(0,\Sigma)$, and if we add the condition $n\ge 4(d_x+\log({1\over \delta}))$. This is a consequence of Theorem 8 in \cite{barzilai2024simple}. 

% 2. Then, to match the lower bound, in the approximation of $\xi^2$, the term $\underline{\lambda}_r^H(\Sigma)$ should be replaced by either $\underline{\lambda}_r(\Sigma)$ if ${\bar{\lambda}_r(\Sigma)\over \underline{\lambda}_r(\Sigma)}\ge {\log(1/\delta)+rd_x\over \log(1/\delta)+rd_y}$ or $\bar{\lambda}_r(\Sigma)$ otherwise.\\

\begin{table*}[b]
  \centering
  % Scale everything inside the scoping group:
  \scalebox{0.8}{%          <-- 80% of original size; tweak as needed
    \begin{minipage}{\textwidth}
    \centering
      \caption{Performance of T-LSE and RSC in presence of alignment between covariate and target matrices. The values correspond to the relative Frobenius error multiplied by $10^{-7}$ for $r=10$ and $10^{-6}$ for $r=45$.}
      \label{table:tlse_vs_rsc}
      \bigskip
      \begin{tabular}{lcccccc}
        \toprule
        & \multicolumn{3}{c}{$r=10$} & \multicolumn{3}{c}{$r=45$} \\
        \cmidrule(lr){2-4} \cmidrule(lr){5-7}
        & Error Avg & Error Std & Error Max & Error Avg & Error Std & Error Max \\
        \midrule
        T-LSE & 1.67 & {\bf 1.03} & {\bf 4.75} & {\bf 5.49} & {\bf 0.69} & {\bf 8.20} \\
        RSC   & {\bf 1.37} & 1.23 & 7.62 & 8.37 & 4.86 & 23.1 \\
        \bottomrule
      \end{tabular}
    \end{minipage}%
  }
    %\vspace{-0.4cm}

\end{table*}

{\it Thresholded LSE} (\TLSE). We now consider rank-adaptive algorithms. %When the noise is Gaussian or sub-Gaussian with variance proxy $\sigma^2$, 
One can show that the error $Z=\bar{A}-A$ of the LSE satisfies the following concentration result (see Lemma 3 in \cite{bunea2011optimal}): with probability at least $1 - \delta$,
\begin{equation}\label{eq:thresh_LR}
        2\Vert Z\Vert_2\leq\xi_{\textup{MR}} :=  \frac{2\sigma\left(\sqrt{d_x}+\sqrt{d_y}+\sqrt{\log(\frac{1}{\delta})}\right)}{\sqrt{n\lambda_\mathrm{min}(\hat\Sigma)}}.
\end{equation}
Combining this concentration bound with Theorem \ref{thm:opt_best_k}, we obtain the following result. 

\begin{theorem}
    Assume that $x_i\sim \mathcal{N}(0,\Sigma)$ with $\Sigma\succ 0$. For $n\gtrsim d_x+\log(\frac{1}{\delta})$, the \TLSE~ $\hat{A}_n:=\bar{A}(\xi_{\textup{MR}})$ satisfies, with probability at least $1-\delta$:
    $$
\Vert \hat{A}_n-A\Vert_\F^2\lesssim \min_{k\in[r] }\left( k\sigma^2\frac{\bar d+\log(\frac{1}{\delta})}{n\lambda_\mathrm{min}(\Sigma)}+ \sum_{i>k}s_i^2(A)\right).
$$
\label{thm:tlse-gauss}
\end{theorem}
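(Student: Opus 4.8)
The plan is to combine the two results already assembled in the excerpt: the deterministic denoising bound of Theorem~\ref{thm:opt_best_k} and the high-probability operator-norm concentration bound for the \LSE{} error stated in \eqref{eq:thresh_LR}. The key observation is that Theorem~\ref{thm:opt_best_k} is purely deterministic: for \emph{any} threshold $\xi \ge 2\Vert Z\Vert_2$, it guarantees
$\Vert \bar{A}(\xi) - A\Vert_\F^2 \le 18\min_{k\in[r]}(4k\xi^2 + \sum_{i>k}s_i^2(A))$.
So the entire probabilistic content is confined to verifying the hypothesis $\xi_{\textup{MR}} \ge 2\Vert Z\Vert_2$, which is exactly what \eqref{eq:thresh_LR} provides with probability at least $1-\delta$.

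First I would condition on the event $\mathcal{E} := \{2\Vert Z\Vert_2 \le \xi_{\textup{MR}}\}$, which by \eqref{eq:thresh_LR} has probability at least $1-\delta$. On this event the threshold $\xi = \xi_{\textup{MR}}$ satisfies the hypothesis of Theorem~\ref{thm:opt_best_k}, so I may apply it directly with this $\xi$ to obtain, on $\mathcal{E}$,
$$
\Vert \hat{A}_n - A\Vert_\F^2 = \Vert \bar{A}(\xi_{\textup{MR}}) - A\Vert_\F^2 \le 18\min_{k\in[r]}\Bigl(4k\,\xi_{\textup{MR}}^2 + \sum_{i>k}s_i^2(A)\Bigr).
$$
Next I would substitute the explicit value $\xi_{\textup{MR}}^2 = 4\sigma^2(\sqrt{d_x}+\sqrt{d_y}+\sqrt{\log(1/\delta)})^2/(n\lambda_{\min}(\hat\Sigma))$ and simplify. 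The squared sum of three square roots is of order $d_x + d_y + \log(1/\delta)$, which is $\lesssim \bar d + \log(1/\delta)$ since $\bar d = d_x \vee d_y$; the numerical constants $18$ and $4$ and the algebraic simplification of the cross terms all get absorbed into the $\lesssim$ notation. This turns the $k\xi_{\textup{MR}}^2$ term into $k\sigma^2(\bar d + \log(1/\delta))/(n\lambda_{\min}(\hat\Sigma))$, matching the claimed bound but with $\hat\Sigma$ in place of $\Sigma$.

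The remaining step is replacing $\lambda_{\min}(\hat\Sigma)$ by $\lambda_{\min}(\Sigma)$, which is where the Gaussian assumption $x_i \sim \mathcal{N}(0,\Sigma)$ and the condition $n \gtrsim d_x + \log(1/\delta)$ enter. This is a standard covariance-concentration argument: for Gaussian designs, $\lambda_{\min}(\hat\Sigma) \ge \tfrac{1}{2}\lambda_{\min}(\Sigma)$ holds with high probability once $n$ exceeds a constant multiple of $d_x$ plus a logarithmic term (e.g.\ via a Bernstein/Wishart tail bound as discussed in the text around \cite{barzilai2024simple}). Intersecting this event with $\mathcal{E}$ and adjusting the failure probability by a union bound (or folding a constant into $\delta$) yields the stated bound with probability at least $1-\delta$. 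I expect the main subtlety — though not a genuine obstacle given the tools cited — to be this final $\hat\Sigma \to \Sigma$ transfer: one must ensure the covariance-concentration event is controlled at the same confidence level and that the extra factor from $\lambda_{\min}(\hat\Sigma)^{-1} \le 2\lambda_{\min}(\Sigma)^{-1}$ is harmless, which is precisely why the theorem invokes the sample-size condition $n \gtrsim d_x + \log(1/\delta)$.
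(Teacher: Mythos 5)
Your proposal is correct and follows essentially the same route as the paper: the paper first proves an intermediate bound with $\lambda_{\min}(\hat\Sigma)$ (its Theorem~\ref{thm:tlse-gen}) by combining \eqref{eq:thresh_LR} with Theorem~\ref{thm:opt_best_k}, and then replaces $\hat\Sigma$ by $\Sigma$ via the Gaussian covariance concentration of Lemma~\ref{lem:cov-concentration-1} (Theorem 8 of \cite{barzilai2024simple}) together with a union bound, exactly as you describe. The only cosmetic difference is that the paper packages the first two steps as a separate stand-alone theorem before doing the $\hat\Sigma \to \Sigma$ transfer.
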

\vspace{-0.3cm}
To the best of our knowledge, \TLSE~enjoys state-of-the-art performance guarantees. It achieves tighter error upper bounds compared to both the rank-adaptive algorithm from \cite{bunea2011optimal}, cf Equation (\ref{eq:soa1}), and the thresholded estimator with nuclear norm penalization introduced in Appendix \ref{app:sec5}. 
%Again, under mild assumptions on the distribution of the covariates and the number of samples, $\hat{\Sigma}$ in the error upper bound can be replaced with $\Sigma$.
More precisely, one can show (see Appendix \ref{app:stability}) that \TLSE~is $(\varepsilon,\delta)$-stable, for any $(\varepsilon,\delta)$, with sample complexity 
\begin{equation}
    N\leq  \min\Bigg\{m: \phi(m)
\leq c\varepsilon^2\Bigg\},
\end{equation}
where 
$$\phi(m)=k^\star_{A,m}\sigma^2\frac{\bar d+\log(\frac{1}{\delta})}{m\lambda_\mathrm{min}(\Sigma)}+ \sum_{i>k^\star_{A,m}}s_i^2(A).$$
$c\in[0,1]$ is a universal constant. 

The error guarantees for \TLSE\ almost match the limits identified in Theorem \ref{thm:reduced}. To achieve an exact match, it suffices to replace, in Theorem \ref{thm:tlse-gauss}, ${\lambda}_{\min}(\Sigma)$ with either $\underline{\lambda}_k(\Sigma)$, if $ \bar{\lambda}_k(\Sigma)  (\log(1/\delta) + kd_y) \geq \underline{\lambda}_k(\Sigma) (\log(1/\delta) + kd_x) $, or $\bar{\lambda}_k(\Sigma)$ otherwise.

\subsection{Linear System Identification}

Analyzing the performance of the LSE in linear system identification requires leveraging recent concentration results on self-normalized processes (see, e.g., the proof of Theorem 3 in \cite{jedra2022finite}). This analysis allows us to upper bound the error $Z$, provided that the number of samples satisfies 
\begin{equation}  n \ge \frac{  \max(c_0\sigma^4,1) \Vert \Gamma_\infty(A) \Vert^3 }{\lambda_{\min}(\Gamma_\infty(A))} \left( \log\left(\frac{1}{\delta}\right) +  d_x\right),
% \lambda_\mathrm{min} \left(\sum_{i=0}^{n-2}\Gamma_i(A)\right)\gtrsim\sigma^4 \big(\sum_{i\geq 0}\Vert A^i\Vert_2\big)^2(\log(\frac{1}{\delta})+d_x). 
\label{eq:lti_condition}
\end{equation}
for some universal constant $c_0>0$.

{\it Rank-constrained LSE.} Assume first that the algorithm selects an effective rank equal to $r\ge \textrm{rank}(A)$. The \RLSE~ is defined as $\hat{A}_n:=\Pi_r(\bar{A})$. Assuming the noise is Gaussian with variance $\sigma^2$, we can establish the following concentration result for $\Pi_k(Z)$. 

\begin{lemma}
Assume $n$ verifies inequality (\ref{eq:lti_condition}). Then, for all $k \in [\underline{d}]$, with probability at least $1-\delta$:
\begin{enumerate}
    \item The error $Z$ verifies
\begin{equation*}\Vert\Pi_k(Z)\Vert_\F^2\leq k\sigma^2\frac{ d_x+\log(\frac{1}{\delta})}{n\underline{\lambda}^H_k(\hat \Sigma)}.
\end{equation*}
\item $\hat{\Sigma}$ verifies 
\begin{equation*}
    \forall i \in [d_x], \quad  \frac{\lambda_i(\Gamma_{\infty}(A)) }{8}\le \lambda_i(\hat{\Sigma}) \le \frac{3\lambda_i(\Gamma_\infty(A))}{2}. 
\end{equation*}
\end{enumerate}
\label{lem:pi_k_lse_lds}
\end{lemma}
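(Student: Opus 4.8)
The plan is to treat the two claims separately: the first is an essentially deterministic reduction combined with a self-normalized noise bound, whereas the second is a covariance-concentration statement for the trajectory of a stable linear system.

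For claim 1, I would start from the closed form of the \LSE~error. Writing $Y = XA^\top + E$ and using that $X$ has full column rank with the prescribed sample size, one gets $Z^\top = \bar A^\top - A^\top = (X^\top X)^{-1}X^\top E$. The key algebraic step is to split the inverse so as to expose a self-normalized factor: $Z^\top = (X^\top X)^{-1/2}\Psi$ with $\Psi := (X^\top X)^{-1/2}X^\top E$. I then apply the classical singular-value product inequality $s_i(MN)\le s_i(M)\,s_1(N)$ with $M = (X^\top X)^{-1/2}$ and $N = \Psi$. Since the $i$-th largest singular value of $(X^\top X)^{-1/2}$ is $(\lambda_{d_x-i+1}(X^\top X))^{-1/2} = (n\,\lambda_{d_x-i+1}(\hat\Sigma))^{-1/2}$, summing $s_i^2(Z)$ over $i = 1,\dots,k$ yields $\Vert\Pi_k(Z)\Vert_\F^2 \le \frac{\Vert\Psi\Vert_2^2}{n}\sum_{i=d_x-k+1}^{d_x}\lambda_i^{-1}(\hat\Sigma) = \frac{k\,\Vert\Psi\Vert_2^2}{n\,\underline{\lambda}^H_k(\hat\Sigma)}$, which already has the desired shape.

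It then remains to control $\Vert\Psi\Vert_2^2 = \Vert(X^\top X)^{-1/2}X^\top E\Vert_2^2$, which is exactly a self-normalized process of the martingale $\sum_t x_t\eta_t^\top$ relative to its predictable quadratic variation $X^\top X$. Here I would invoke the self-normalized concentration bound underlying the proof of Theorem 3 in \cite{jedra2022finite} to obtain $\Vert\Psi\Vert_2^2 \lesssim \sigma^2(d_x + \log(\frac{1}{\delta}))$ with probability at least $1-\delta$; the output dimension contributes no separate term because $d_x = d_y$ in this setting, so a covering over both spheres only costs $O(\sqrt{d_x})$. Substituting this into the display above gives claim 1.

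For claim 2, I would relate $\hat\Sigma$ to its mean and then prove multiplicative concentration. Since $x_1 = 0$ and $x_t = \sum_{j<t}A^{t-1-j}\eta_j$, we have $\EE[x_tx_t^\top] = \sigma^2\Gamma_{t-2}(A)$, so $\EE[\hat\Sigma] = \frac1n\sum_t \EE[x_tx_t^\top]$ is, by stability $\rho(A)<1$ and up to $o(1)$ boundary terms and the noise normalization, a constant-factor spectral approximation of $\Gamma_\infty(A)$. The substantive part is a two-sided concentration $\hat\Sigma \approx \EE[\hat\Sigma]$ in the Loewner order, which I would establish with the standard system-identification toolkit: a small-ball/block-martingale argument for the lower bound and a matrix Bernstein (or Hanson--Wright) argument, after controlling $\Vert x_t\Vert$ via stability, for the upper bound. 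The sample-size hypothesis (\ref{eq:lti_condition}), whose factor $\Vert\Gamma_\infty(A)\Vert_2^3/\lambda_{\min}(\Gamma_\infty(A))$ measures the conditioning of the Gramian, is exactly what forces the relative deviation below the level needed for the explicit constants $\frac18$ and $\frac32$; a Weyl/Ostrowski eigenvalue-perturbation step then transfers the Loewner bound to each $\lambda_i$. The main obstacle is this second claim: unlike the i.i.d.\ regression setting, the $x_t$ are strongly temporally correlated, so the covariance concentration cannot rely on off-the-shelf i.i.d.\ matrix bounds and instead requires the martingale/small-ball machinery, with careful tracking of how the deviation depends on the Gramian's condition number so as to match the stated constants. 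Claim 1, by contrast, is routine once the self-normalized bound is granted.
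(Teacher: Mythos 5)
Your treatment of claim 1 is essentially the paper's proof: the same splitting $Z=(X^\top X)^{-1/2}\,\bigl((X^\top X)^{-1/2}X^\top E\bigr)$, the same singular-value product inequality $s_i(MN)\le s_i(M)\,s_1(N)$ producing the harmonic-mean factor $\underline{\lambda}^H_k(\hat\Sigma)$, and the same self-normalized bound $\Vert (X^\top X)^{-1/2}X^\top E\Vert_2\le \sigma\sqrt{d_x+\log(1/\delta)}$ from the proof of Theorem 3 of \cite{jedra2022finite}. The one step you gloss over is the hypothesis reduction: that self-normalized bound is stated under the condition $\lambda_{\min}\bigl(\sum_{i=0}^{n-2}\Gamma_i(A)\bigr)\gtrsim \sigma^4\bigl(\sum_{i\ge 0}\Vert A^i\Vert_2\bigr)^2\bigl(d_x+\log(1/\delta)\bigr)$, and the paper explicitly converts inequality (\ref{eq:lti_condition}) into it via the Gramian properties of Lemma \ref{lem:properties-Gamma}, namely $\sum_{i\ge 0}\Vert A^i\Vert_2\lesssim \Vert\Gamma_\infty(A)\Vert_2^{3/2}$ and $\frac{1}{n}\sum_{i<n}\Gamma_i(A)\succeq \tfrac14\Gamma_\infty(A)$; without this bridge the lemma's hypothesis does not directly license the concentration bound. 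For claim 2, your architecture (two-sided relative concentration of $\hat\Sigma$, then a multiplicative eigenvalue-perturbation step) matches the paper's, but where you propose to re-derive the concentration from scratch with small-ball/block-martingale and matrix-Bernstein arguments, the paper simply assembles three imported results: the whitened-covariance concentration $\Vert\frac1n\Sigma^{-1/2}X^\top X\Sigma^{-1/2}-I_{d_x}\Vert_2\le\max(\varepsilon,\varepsilon^2)$ of Lemma \ref{lem:cov-concentration-2} (Lemma 4 of \cite{jedra2022finite}) applied with $\varepsilon=1/2$, the relative eigenvalue-perturbation bound of Lemma \ref{lem:perturbation} (Theorem 5 of \cite{barzilai2024simple}), which plays exactly the role of your Ostrowski step, and again Lemma \ref{lem:properties-Gamma}\emph{(iii)} to pass from $\Sigma=\frac1n\sum_{i<n}\Gamma_i(A)$ to $\Gamma_\infty(A)$; this last comparison is what produces the asymmetric constants, $\tfrac18=\tfrac12\cdot\tfrac14$ below versus $\tfrac32$ above, a point your sketch attributes only loosely to the sample-size hypothesis. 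Your plan is workable in principle (the machinery you invoke is what underlies the cited lemmas), but it leaves the hardest probabilistic step as an appeal to standard tools, whereas the paper's modular route discharges it by citation and keeps only elementary algebra in the proof itself.
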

Combining these concentration results with Lemma \ref{lem:trade_off_k}, we obtain:
\begin{theorem}\label{thm:rlse-lti}
Let $\delta \in (0,1)$. Assume that $n$ verifies (\ref{eq:lti_condition}) and that $\mathrm{rank}(A)\leq r$. The \RLSE~satisfies, with probability at least $1-\delta$:

\begin{equation*}
    \Vert \hat{A}_n-A\Vert_\F^2\lesssim r\sigma^2\frac{ d_x+\log(\frac{1}{\delta})}{n\underline{\lambda}^H_r( \Gamma_\infty(A))}.
\end{equation*}  
\end{theorem}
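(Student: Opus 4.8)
The plan is to prove Theorem~\ref{thm:rlse-lti} by directly combining the two concentration results from Lemma~\ref{lem:pi_k_lse_lds} with the deterministic decomposition in Lemma~\ref{lem:trade_off_k}, specialized to the choice $k=r$. Since $\mathrm{rank}(A)\le r$, the approximation term vanishes, and the entire bound reduces to controlling the estimation error through $\Vert\Pi_r(Z)\Vert_\F$.

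\emph{First}, I would apply Lemma~\ref{lem:trade_off_k} with $k=r$. Because $\mathrm{rank}(A)\le r$, we have $A=\Pi_r(A)$, so the term $\Vert A-\Pi_r(A)\Vert_\F=0$. Noting that $\hat{A}_n=\Pi_r(\bar A)$, the lemma yields
\begin{equation*}
\Vert \hat{A}_n-A\Vert_\F=\Vert \Pi_r(\bar A)-A\Vert_\F\leq 2\sqrt{2}\,\Vert \Pi_r(Z)\Vert_\F.
\end{equation*}
Squaring gives $\Vert \hat{A}_n-A\Vert_\F^2\leq 8\,\Vert \Pi_r(Z)\Vert_\F^2$, which already absorbs the constant into the $\lesssim$ notation.

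\emph{Second}, I would invoke item~1 of Lemma~\ref{lem:pi_k_lse_lds} with $k=r$ (valid since $n$ satisfies (\ref{eq:lti_condition})), giving, with probability at least $1-\delta$,
\begin{equation*}
\Vert\Pi_r(Z)\Vert_\F^2\leq r\sigma^2\frac{d_x+\log(\frac{1}{\delta})}{n\,\underline{\lambda}^H_r(\hat \Sigma)}.
\end{equation*}
This bound is phrased in terms of the empirical quantity $\underline{\lambda}^H_r(\hat\Sigma)$, whereas the theorem is stated in terms of the population Gramian $\underline{\lambda}^H_r(\Gamma_\infty(A))$. The remaining work is to translate between the two using item~2 of the same lemma, which provides the two-sided eigenvalue sandwich $\lambda_i(\Gamma_\infty(A))/8\le \lambda_i(\hat\Sigma)\le 3\lambda_i(\Gamma_\infty(A))/2$ for all $i\in[d_x]$, again on the same high-probability event.

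\emph{The main obstacle} is precisely this last translation, since the harmonic mean $\underline{\lambda}^H_r(\hat\Sigma)=\big(\tfrac{1}{r}\sum_{i=d_x-r+1}^{d_x}\lambda_i(\hat\Sigma)^{-1}\big)^{-1}$ is a nonlinear function of the eigenvalues, so the sandwich must be pushed through the reciprocals and the averaging. Concretely, the lower bound $\lambda_i(\hat\Sigma)\ge \lambda_i(\Gamma_\infty(A))/8$ implies $\lambda_i(\hat\Sigma)^{-1}\le 8\,\lambda_i(\Gamma_\infty(A))^{-1}$; averaging over the $r$ smallest indices and inverting yields $\underline{\lambda}^H_r(\hat\Sigma)\ge \underline{\lambda}^H_r(\Gamma_\infty(A))/8$, hence $1/\underline{\lambda}^H_r(\hat\Sigma)\le 8/\underline{\lambda}^H_r(\Gamma_\infty(A))$. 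I would take a union bound over the two events of Lemma~\ref{lem:pi_k_lse_lds} (adjusting $\delta$ by a constant factor, absorbed into $\lesssim$), substitute this inequality into the concentration bound from the second step, and then chain with the first step to conclude
\begin{equation*}
\Vert \hat{A}_n-A\Vert_\F^2\lesssim r\sigma^2\frac{d_x+\log(\frac{1}{\delta})}{n\,\underline{\lambda}^H_r(\Gamma_\infty(A))},
\end{equation*}
which is exactly the claimed bound. The only subtlety worth watching is ensuring the constants from the eigenvalue sandwich and from Lemma~\ref{lem:trade_off_k} remain universal and do not secretly depend on the spectrum, which they do not.
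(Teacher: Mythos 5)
Your proposal is correct and follows essentially the same route as the paper's proof: apply Lemma~\ref{lem:trade_off_k} with $k=r$ (where the approximation term vanishes since $\mathrm{rank}(A)\le r$) and combine it with both items of Lemma~\ref{lem:pi_k_lse_lds}. The only difference is that you spell out the steps the paper leaves implicit — pushing the eigenvalue sandwich through the harmonic mean and taking the union bound over the two events — which is a welcome clarification rather than a different argument.
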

%As in the multivariate regression case, we can replace, under the condition (\ref{eq:lti_condition}), $n\hat \Sigma$ by $\sum_{i=0}^{n-2}\Gamma_i(A)$ in the error upper bound of the thresholded LSE. To this aim, we can leverage the recent perturbation bounds for $\hat{\Sigma}$ stated in Theorem 5 of \cite{barzilai2024simple} and concentration results from Lemma 4 of \cite{jedra2022finite} (see Appendix \ref{app:stability} for details). 

{\it Thresholded LSE.} Consider now rank-adaptive algorithms. When the noise is Gaussian with variance $\sigma^2$ and when $n$ verifies inequality (\ref{eq:lti_condition}), we establish (refer to the proof of Lemma \ref{lem:pi_k_lse_lds}) that, with probability at least $1-\delta$: 
\begin{equation}\label{eq:thresh_LDS}
    2\Vert Z\Vert_2 \leq \xi_{\textup{SysID}} := 2\sigma\sqrt{\frac{d_x+\log(\frac{1}{\delta})}{n\lambda_\mathrm{min} (\hat\Sigma)}}. 
\end{equation}
Combining the previous result and Theorem \ref{thm:opt_best_k}, we obtain the following result. 
\begin{theorem}\label{thm:tlse-lti}
    Assume $n$ verifies inequality (\ref{eq:lti_condition}). The \TLSE~satisfies, with probability at least $1-\delta$:  $$
\Vert \hat{A}_n-A\Vert_\F^2\lesssim \min_{k \in [r]}\left( \frac{k\sigma^2(d_x+\log(\frac{1}{\delta}))}{n\lambda_\mathrm{min} (\Gamma_\infty(A))}+ \sum_{i>k}s_i^2(A)\right).
$$
\label{thm:56}
\end{theorem}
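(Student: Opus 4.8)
The plan is to obtain this bound as an essentially immediate consequence of the deterministic denoising guarantee of Theorem~\ref{thm:opt_best_k}, instantiated at the data-driven threshold $\xi_{\textup{SysID}}$, combined with the two concentration facts already collected in Lemma~\ref{lem:pi_k_lse_lds}. All the genuine probabilistic content has been isolated upstream; what remains is bookkeeping to route the empirical quantity $\lambda_{\min}(\hat\Sigma)$ back to the population quantity $\lambda_{\min}(\Gamma_\infty(A))$ that appears in the statement.

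First I would fix a single event $\cE$ on which \emph{both} the threshold concentration (\ref{eq:thresh_LDS}), i.e. $2\Vert Z\Vert_2\le \xi_{\textup{SysID}}$, and the eigenvalue sandwich of Lemma~\ref{lem:pi_k_lse_lds}(2) hold simultaneously. Since inequality (\ref{eq:lti_condition}) is assumed and both statements are proved inside (the proof of) Lemma~\ref{lem:pi_k_lse_lds}, they occur on a common event of probability at least $1-\delta$; no additional union bound is incurred beyond an inconsequential relabeling of the constant multiplying $\delta$. Every subsequent step is carried out deterministically on $\cE$.

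On $\cE$ the threshold obeys the hypothesis $\xi_{\textup{SysID}}\ge 2\Vert Z\Vert_2$ of Theorem~\ref{thm:opt_best_k}, and $\hat{A}_n=\bar A(\xi_{\textup{SysID}})$, so that theorem yields
\begin{equation*}
\Vert \hat A_n - A\Vert_\F^2 \le 18\min_{k\in[r]}\Big(4k\,\xi_{\textup{SysID}}^2 + \sum_{i>k}s_i^2(A)\Big),
\end{equation*}
reducing the whole error to controlling $\xi_{\textup{SysID}}^2 = 4\sigma^2\,(d_x+\log(\tfrac{1}{\delta}))/(n\lambda_{\min}(\hat\Sigma))$, still phrased in terms of the empirical covariance.

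The final step converts $\hat\Sigma$ to $\Gamma_\infty(A)$: reading the lower sandwich in Lemma~\ref{lem:pi_k_lse_lds}(2) at $i=d_x$ gives $\lambda_{\min}(\hat\Sigma)\ge \lambda_{\min}(\Gamma_\infty(A))/8$, hence $\xi_{\textup{SysID}}^2\le 32\sigma^2(d_x+\log(\tfrac{1}{\delta}))/(n\lambda_{\min}(\Gamma_\infty(A)))$. Substituting this into the displayed bound and absorbing the numerical constant $18\cdot 4\cdot 32$ into $\lesssim$ delivers exactly the claimed inequality. I do not expect a serious obstacle here, as the substantive estimates are imported; the only points requiring care are keeping the two high-probability events on a single event $\cE$ so the guarantee holds with probability $1-\delta$ (rather than $1-2\delta$), and orienting the threshold inequality correctly ($\xi_{\textup{SysID}}\ge 2\Vert Z\Vert_2$) so that Theorem~\ref{thm:opt_best_k} applies verbatim.
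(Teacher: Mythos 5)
Your proposal is correct and follows essentially the same route as the paper's own proof: invoke Theorem \ref{thm:opt_best_k} at the threshold $\xi_{\textup{SysID}}$ (justified by the concentration bound (\ref{eq:thresh_LDS})), then use item (2) of Lemma \ref{lem:pi_k_lse_lds} to replace $\lambda_{\min}(\hat\Sigma)$ by $\lambda_{\min}(\Gamma_\infty(A))$ up to a universal constant. Your explicit bookkeeping of the single high-probability event $\cE$ is in fact slightly more careful than the paper's write-up, which leaves that point implicit.
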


To the best of our knowledge, this is the first rank-adaptive performance bound for system identification. Furthermore, the same remarks as in the multivariate regression case can be made regarding the remaining existing gaps between our lower and upper bounds.

%Again, using the same argument as above in the rank-constrained case, $n\hat \Sigma$ can be replaced by $\sum_{i=0}^{n-2}\Gamma_i(A)$ in the error upper bound of Theorem \ref{thm:56}.

% \begin{figure*}[t]
% %\begin{minipage}{0.48\textwidth}
%  %   \centering
%     \includesvg[width=0.485\textwidth]{images/err_vs_b.svg}
%     \hspace{0.3cm}
%     \includesvg[width=0.45\textwidth]{images/rk_vs_b.svg}
%     \vspace*{-3mm}
%     \caption{Frobenius error (left) and effective rank (right) vs noise level $b$, $(r=10)$.}
%     \label{fig:fro_rank_low}
    
% %\end{minipage} 
% \end{figure*}

\begin{figure*}[t]
%\begin{minipage}{0.48\textwidth}
 %   \centering
    \includegraphics[width=0.485\textwidth]{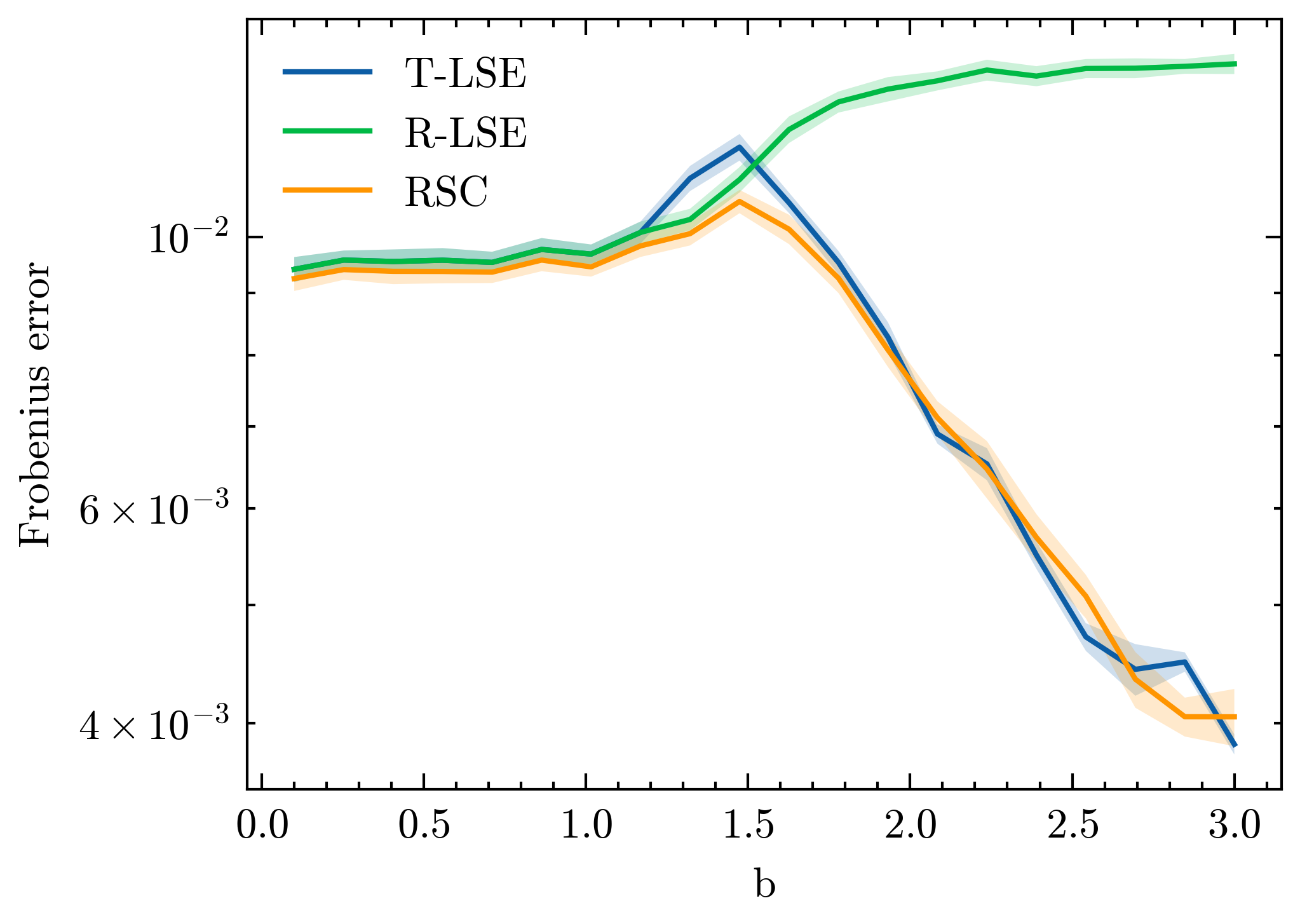}
    \hspace{0.3cm}
    \includegraphics[width=0.45\textwidth]{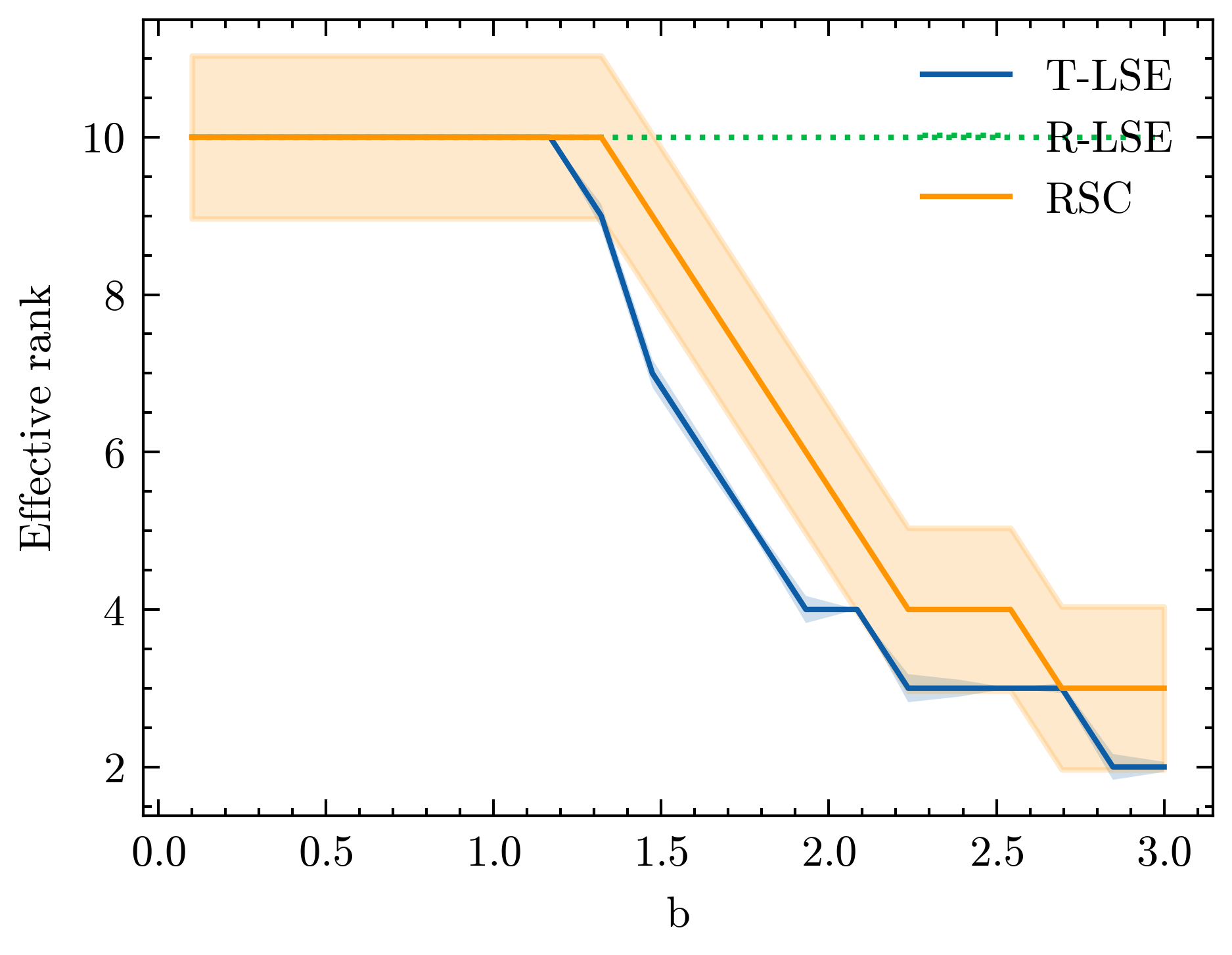}
    \vspace*{-3mm}
    \caption{Frobenius error (left) and effective rank (right) vs noise level $b$, $(r=10)$.}
    \label{fig:fro_rank_low}
    
%\end{minipage} 
\end{figure*}
\section{NUMERICAL EXPERIMENTS}\label{sec:exp}

We test our algorithms experimentally in the multivariate regression case using similar settings as existing work \cite{bunea2011optimal}. We estimate $A\in\mathbb{R}^{d\times d}$ of rank $r$ with $d=50$ using $n=1000$ samples. Our experiments can also be conducted for higher values of $(d,r,n)$, and these can be found in Appendix \ref{app:sec7}. Each 'run' involves  sampling the design matrix $X$ and the Gaussian noise matrix $E$ to compute $Y=XA+E$. We evaluate the performance of an estimator $\hat A_n$ using the relative Frobenius error $ \Vert \hat A_n
- A\Vert_{\textup{F}}^2 / \Vert A\Vert_{\textup{F}}^2$ averaged over $T=30$ runs. We consider two settings: (1) Low-rank $r=10, \sigma=0.1$. (2) High-rank $r=45, \sigma=0.4$.

% In this section, we focus on the multivariate regression case. One run consists in sampling $X$ and $E$, and construct $Y=XA^\top+E$. For a given model $A$ and estimator $\hat A$, the metrics considered are the relative identification error in Frobenius norm $ \Vert \hat A - A\Vert_{\textup{F}}^2 / \Vert A\Vert_{\textup{F}}^2$ and median rank of $\hat A$, over $T=30$ run. 
% %This metric is more representative than the absolute Frobenius error $\frac{\Vert \hat A - A\Vert_\F^2}{d^2}$, considered in \cite{bunea2011optimal}. 
% Throughout this section,  $n = 1000, d = 50$ and consider two settings: 1) Low-rank $r=10, \sigma=0.1$. 2) High-rank $r=45, \sigma=0.4$. Additional experiments with more hyperparameter configurations can be found in Appendix \ref{app:sec7}. 

{\bf On the impact of singular subspace alignment.} We begin by comparing \TLSE~ and the RSC algorithm \cite{bunea2011optimal} (initially designed to minimize $\Vert X\hat{A}_n-XA\Vert_\F$), and in particular show how the 'alignment' of the singular subspaces of the design matrix $X$ and of $A$ impact the performance of these two algorithms. To this aim, we start by sampling $X$: for $i=1,...,n, \quad x_i\sim \mathcal{N}(0,\Sigma)$ where $\Sigma$ is a diagonal matrix such that $\Sigma_{j,j}=j^2$ for $j=1,...,d$. Let $X=U_XS_XV_X^\top$ be its SVD. Then we create a set of $d+1$ matrices $A$ with different levels of alignment, as follows: (1) Sample entries of $A$ uniformly at random in $[0,1]$, and let $A=USV^\top$ be its SVD. (2) Change the singular value of $A$ as $s_j(A)=\frac{1}{j^{b}}$ for $j=1,...,r$ (we use $b=1.5$ in the low-rank case and $b=0.5$ in the high-rank case). Note that the assumption of  decaying singular values is frequent, see \cite{wu2020adaptive}. (3) Change its right singular vectors to align them to those of $X$: $A=USPV_X^\top$, where $P$ is one of the $(d+1)$ circular shifting matrices (i.e., shifting circularly the positions of the rows $V_X^\top$). We have constructed $d+1$ scenarios with different levels of alignment, and report the performance of \TLSE\ and RSC in Table~\ref{table:tlse_vs_rsc}, averaged across all scenarios as well as for the worst-case alignment. \TLSE\ demonstrates greater robustness and consistently outperforms RSC, particularly in challenging alignment conditions.

{\bf On the importance of adaptivity.} Next we compare adaptive algorithms, \TLSE\ and RSC, to the non-adaptive algorithm \RLSE\ aware of the true rank of $A$, $r=10$. If the number of samples $n$ is fixed and if the noise level increases, we expect that optimal algorithms should target an effective rank strictly smaller than $r$. Here we increase the level of noise by increasing a parameter $b$. The singular values of $A$ are now set to $s_j(A)=\frac{1}{j^b}$ for $j=1,...,r$, and we do not align its singular vectors with those of $X$. In Figure \ref{fig:fro_rank_low}, for $r=10$, we observe two interesting regimes. When $b\in [0.1,1.5]$ is small, the signal is still strong enough for \RLSE\ to be competitive compared to the rank-adaptive algorithms. For $b\in[1.5,3]$ both adaptive algorithms \TLSE\ and RSC target a lower effective rank and have better performance than \RLSE. Additional figures are provided in Appendix \ref{app:sec7} for the high-rank regime.

\section{CONCLUSION}

We revisited the problem of estimating high-dimensional matrices via reduced-rank regression, focusing on rank-adaptive algorithms. These methods estimate the matrix by learning its singular values and corresponding singular vectors up to an effective rank, which is adaptively determined from the data. For this setting, we established  instance-specific lower bounds on the sample complexity that any such algorithm must satisfy. These bounds explicitly depend on the spectral properties of the underlying matrix and the covariance structure of the covariates, offering insight into the effective rank that an optimal algorithm should infer.

We proposed the thresholded \LSE~, which combines the classical \LSE~ with a universal singular value thresholding procedure. We derived finite-sample error bounds for this estimator and showed that it achieves better perfomance guarantees than existing algorithms. A small gap remains between the performance of \LSE\ and the theoretical limits. Closing this gap is an interesting direction for future research. Another promising avenue is the development of similarly adaptive estimation procedures under stronger norms—such as entry-wise norms—which are particularly relevant in applications like reinforcement learning, where such guarantees are often required.

\paragraph{Acknowledgments.}

This research is supported by Vetenskapr{\aa}det, Digital Futures, and the Wallenberg AI, Autonomous Systems and Software program.
%

%\newpage
\bibliography{bibliography}
\newpage
\section*{Checklist}

% %%% BEGIN INSTRUCTIONS %%%
% The checklist follows the references. For each question, choose your answer from the three possible options: Yes, No, Not Applicable.  You are encouraged to include a justification to your answer, either by referencing the appropriate section of your paper or providing a brief inline description (1-2 sentences). 
% Please do not modify the questions.  Note that the Checklist section does not count towards the page limit. Not including the checklist in the first submission won't result in desk rejection, although in such case we will ask you to upload it during the author response period and include it in camera ready (if accepted).

% \textbf{In your paper, please delete this instructions block and only keep the Checklist section heading above along with the questions/answers below.}
% %%% END INSTRUCTIONS %%%

\begin{enumerate}

  \item For all models and algorithms presented, check if you include:
  \begin{enumerate}
    \item A clear description of the mathematical setting, assumptions, algorithm, and/or model. \textbf{Yes. Sections 2 and 6 define the sampling model and least square estimator respectively.}
    \item An analysis of the properties and complexity (time, space, sample size) of any algorithm. \textbf{Yes. Sections 4 and 6 presents an analysis of the sample complexity with lower/upper bounds. Section 5 provides with additional properties for thresholded estimators.}
    \item (Optional) Anonymized source code, with specification of all dependencies, including external libraries. \textbf{Yes. An anonymous Jupyter notebook with all required dependencies to run our experiments is provided.}
  \end{enumerate}

  \item For any theoretical claim, check if you include:
  \begin{enumerate}
    \item Statements of the full set of assumptions of all theoretical results. \textbf{Yes. Section 4 presents our lower bounds with assumptions on expected covariance matrix and the rank. Section 6 presents concentration results with the required sample complexity.}
    \item Complete proofs of all theoretical results. \textbf{Yes. All our theoretical results have corresponding proof in the supplementary material, see Appendices  \ref{app:sec4}, \ref{app:sec5}, \ref{app:sec6}, \ref{app:stability}.}
    \item Clear explanations of any assumptions. \textbf{Yes. Our assumptions are standard for our objective i.e identifiability (positive definiteness of $\Sigma$), low-rankness ($2r\leq d$), empirical covariance concentration.}     
  \end{enumerate}

  \item For all figures and tables that present empirical results, check if you include:
  \begin{enumerate}
    \item The code, data, and instructions needed to reproduce the main experimental results (either in the supplemental material or as a URL). \textbf{Yes. The Jupyter notebook in supplementary material is structured into several headings to clearly reproduce experiments.}
    \item All the training details (e.g., data splits, hyperparameters, how they were chosen). \textbf{Yes. We present choices of hyperparameters and their influence in Appendix \ref{app:sec7}.}
    \item A clear definition of the specific measure or statistics and error bars (e.g., with respect to the random seed after running experiments multiple times). \textbf{Yes. We define the Frobenius error and rank of our estimators either on Y-axis of a figure or as numerical value in a table.}
    \item A description of the computing infrastructure used. (e.g., type of GPUs, internal cluster, or cloud provider). \textbf{Yes. See Appendix \ref{app:resources}.}
  \end{enumerate}

  \item If you are using existing assets (e.g., code, data, models) or curating/releasing new assets, check if you include:
  \begin{enumerate}
    \item Citations of the creator If your work uses existing assets. \textbf{Yes. We used a visualisation library whose repository is cited in Appendix \ref{app:resources}.}
    \item The license information of the assets, if applicable. \textbf{Not Applicable.}
    \item New assets either in the supplemental material or as a URL, if applicable. \textbf{Not Applicable.}
    \item Information about consent from data providers/curators. \textbf{Not Applicable.}
    \item Discussion of sensible content if applicable, e.g., personally identifiable information or offensive content. \textbf{Not Applicable.}
  \end{enumerate}

  \item If you used crowdsourcing or conducted research with human subjects, check if you include:
  \begin{enumerate}
    \item The full text of instructions given to participants and screenshots. \textbf{Not Applicable.}
    \item Descriptions of potential participant risks, with links to Institutional Review Board (IRB) approvals if applicable. \textbf{Not Applicable.}
    \item The estimated hourly wage paid to participants and the total amount spent on participant compensation. \textbf{Not Applicable.}
  \end{enumerate}

\end{enumerate}

\onecolumn
\newpage\tableofcontents
\newpage
\appendix
\section{Proofs of results presented in Section \ref{sec:lb}}\label{app:sec4}

In this section, we provide the proofs for our complexity sample lower bounds, namely Theorem \ref{thm:full} and Theorem \ref{thm:reduced}. The proofs rely on intricate change-of-measure arguments that allow for multiple hypotheses.  In \textsection\ref{app:sec4:tools}, we start by presenting some of the notations and tools used in these arguments. In  \textsection\ref{app:sec4:packing}, we prove a result on the packing number of the Stiefel manifold given in Lemma \ref{lem:stiefel}. We then present the proofs of Theorem \ref{thm:full} and Theorem \ref{thm:reduced} in \textsection\ref{app:sec4:thm1} and  \textsection\ref{app:sec4:thm2}, respectively.

\subsection{Tools for the change-of-measure argument}\label{app:sec4:tools}

We start by introducing notation which we will be used throughout this section. We consider a collection of $n$ observations $(x_1,y_1), \dots, (x_n, y_n)$. We will often  view these observations as samples  from a probability distribution with a density function $f_A$ parameterized by $A \in \RR^{d_y \times d_x}$. For some given $A \in \RR^{d_y\times d_x}$, we use $\EE_A$ (resp. $\PP_A)$ to denote the expectation (resp. the probability measure) under the distribution with the density $f_A$.

\paragraph{Computation of the expected log-likelihood ratio.}\label{sec:log-likelihood-ratio}

% More generally, consider a collection of $n$ observations $(x_1,y_1), \dots, (x_n, y_n)$ sampled from a distribution with a density function parameterized by $A \in \RR^{d_y \times d_y}$ and denoted $f_A$. 

Let $A, A' \in \RR^{d_y \times d_x}$ such that $A \neq A'$. We define the log-likelihood ratio of a collection of $n$ observations $(x_1,y_1), \dots, (x_n, y_n)$, under $A$ and $A'$ as follows: 

$$
L_n(A, A')=\log\left(\frac{f_{A}((x_1,y_1), \dots, (x_n, y_n))}{f_{A'}((x_1, y_1), \dots, (x_n, y_n))} \right).
$$

When $(x_1,y_1), \dots, (x_n, y_n)$ are generated under a linear model as presented in \textsection \ref{sec:prelim} with Gaussian noise then the expected log-likelihood can be computed explicitly:
\begin{lemma}\label{lem:log-lik}
    % For a set of $n$ observations $(x_i,y_i)_{i=1...,n}$, define their log likelihood ratio under $A$ and $A'$ by 

    % $$L_n(A, A')=\log\left(\frac{f_{A}((x_i,y_i)_{i=1}^n)}{f_{A'}((x_i,y_i)_{i=1}^n)}\right)$$

    % where $f_A$ is the density function of the observations parameterized by $A$. Let $\mathbb{E}_{A}[L_n(A, A')]$ be its expectation when $(x_i,y_i)$ are generated under $A$. For multivariate regression and system identification, we have
    Assume that $(x_1, y_1), \dots, (x_n, y_n)$ are generated under a linear model as presented in \textsection\ref{sec:prelim}, i.e., $y_i = Ax_i + \eta_i$, where $(\eta_i)_{i \ge 1}$ is a sequence of i.i.d. of random variables distributed as $\cN(0, \sigma^2 I_{d_y})$. Then, the expected log-likelihood ratio under $A$ and $A'$ is as follows: 
    
    $$
    \mathbb{E}_{A}[L_n(A, A')]=\frac{1}{2\sigma^2}\mathrm{Tr}\left((A-A')^\top(A-A')\mathbb{E}_A[X^\top X]\right).
    $$
\end{lemma}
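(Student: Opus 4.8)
The plan is to reduce the likelihood ratio to a per-sample sum and then exploit the Gaussian form of the noise. First I would write the joint density of $(x_1,y_1),\dots,(x_n,y_n)$ under parameter $A$ as a product of a factor that does not depend on $A$ times a product of conditional Gaussian densities of each $y_i$. In the multivariate regression case the covariates carry their own density $g(x_i)$, common to $A$ and $A'$, while $y_i \mid x_i \sim \mathcal{N}(Ax_i, \sigma^2 I_{d_y})$; in the system identification case $y_i = x_{i+1}$ and the joint density factors as $f(x_1)\prod_{i} \phi_{\sigma^2}(x_{i+1}-Ax_i)$, with $x_1$ fixed and independent of $A$, where $\phi_{\sigma^2}$ is the density of $\mathcal{N}(0,\sigma^2 I_{d_y})$. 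In either case the $A$-independent factors cancel in the ratio, so $L_n(A,A') = \sum_{i=1}^n \log\big(\phi_{\sigma^2}(y_i-Ax_i)/\phi_{\sigma^2}(y_i-A'x_i)\big)$.

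Next I would carry out the elementary per-sample computation. Using $\phi_{\sigma^2}(z) \propto \exp(-\|z\|_2^2/(2\sigma^2))$, each summand equals $-\frac{1}{2\sigma^2}(\|y_i-Ax_i\|_2^2 - \|y_i-A'x_i\|_2^2)$. I would then substitute the true model $y_i = Ax_i + \eta_i$, valid under $\mathbb{E}_A$, so that $y_i - Ax_i = \eta_i$ and $y_i - A'x_i = \eta_i + (A-A')x_i$. Expanding the squared norms gives $\|y_i-Ax_i\|_2^2 - \|y_i-A'x_i\|_2^2 = -2\eta_i^\top (A-A')x_i - \|(A-A')x_i\|_2^2$, so each summand equals $\frac{1}{2\sigma^2}\big(2\eta_i^\top(A-A')x_i + \|(A-A')x_i\|_2^2\big)$.

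Finally I would take the expectation under $\mathbb{E}_A$ and argue that the cross term vanishes. The key observation is that in both settings $\eta_i$ is zero-mean and independent of $x_i$: in the regression case by assumption, and in the system identification case because $x_i$ is a deterministic function of $x_1$ and $\eta_1,\dots,\eta_{i-1}$, hence independent of $\eta_i$. Therefore $\mathbb{E}_A[\eta_i^\top(A-A')x_i]=0$, leaving $\mathbb{E}_A[L_n(A,A')] = \frac{1}{2\sigma^2}\sum_i \mathbb{E}_A[\|(A-A')x_i\|_2^2]$. Rewriting $\|(A-A')x_i\|_2^2 = \mathrm{Tr}\big((A-A')^\top(A-A')\,x_ix_i^\top\big)$ and using $X^\top X = \sum_i x_i x_i^\top$ together with linearity of trace and expectation yields exactly $\frac{1}{2\sigma^2}\mathrm{Tr}\big((A-A')^\top(A-A')\,\mathbb{E}_A[X^\top X]\big)$.

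The computation itself is routine; the only delicate point is justifying the factorization and the vanishing of the cross term in the system identification setting, where the covariates themselves depend on $A$. The argument must condition correctly—the conditional law of $x_{i+1}$ given the past is Gaussian centered at $Ax_i$—and it is precisely the martingale-type independence of $\eta_i$ from $x_i$ that makes this change-of-measure computation go through cleanly.
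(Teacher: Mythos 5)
Your proposal is correct and follows essentially the same route as the paper: factor the likelihood ratio into per-sample conditional terms (with the $A$-independent covariate densities canceling in regression, and the Markov chain rule in system identification), then reduce each term to a Gaussian computation. The only cosmetic difference is that the paper conditions on $x_i$ and invokes the closed-form KL divergence between $\mathcal{N}(Ax_i,\sigma^2 I_{d_y})$ and $\mathcal{N}(A'x_i,\sigma^2 I_{d_y})$, whereas you expand the log-densities directly and kill the cross term via $\mathbb{E}_A[\eta_i^\top (A-A')x_i]=0$ --- the same calculation in substance.
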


The computations leading to the above result follow the same steps as in \cite{jedra2019}, we provide a proof for completeness.
\begin{proof}[Proof of Lemma \ref{lem:log-lik}]
Consider that we have $n$ observations $(x_1,y_1), \dots , (x_n, y_n)$. Before computing the expectation of the log-likelihood ratio, we use conditional independence between the observations to write the ratio of densities as a product of ratios. 
%\textbf{Computation of $L_n(A,A')$}:
In the multivariate regression setting, we have by independence

    \begin{align*}
        \frac{f_{A}((x_i,y_i)_{i=1}^n)}{f_{A'}((x_i,y_i)_{i=1}^n)}&=\prod_{i=1}^n \frac{f_{A}(x_i,y_i)}{f_{A'}(x_i,y_i)} =\prod_{i=1}^n \frac{f_{A}(y_i|x_i)f(x_i)}{f_{A'}(y_i|x_i)f(x_i)} =\prod_{i=1}^n \frac{f_{A}(y_i|x_i)}{f_{A'}(y_i|x_i)}.
    \end{align*}
    
In the system identification case, the observations $((x_i,y_i)_{i=1}^n)$ simplify to $(x_i)_{i=1}^{n+1}$. In a similar fashion, by conditional independence, we have

    \begin{align*}
        \frac{f_{A}(x_1,...,x_{n+1})}{f_{A'}(x_1,...,x_{n+1})}&=\prod_{i=1}^n \frac{f_{A}(x_{i+1}|x_i)}{f_{A'}(x_{i+1}|x_i)}=\prod_{i=1}^n \frac{f_{A}(y_i|x_i)}{f_{A'}(y_i|x_i)}
    \end{align*}
    
with $y_i=x_{i+1}$. 

Therefore,
    
%\end{itemize}

\begin{align*}
        \mathbb{E}_{A}[L_n(A, A')]      &=\mathbb{E}_A\sum_{i=1}^n\left[\log \frac{f_{A}(y_i|x_i)}{f_{A'}(y_i|x_i)}\right] =\sum_{i=1}^n \mathbb{E}_A\left[\log \frac{f_{A}(y_i|x_i)}{f_{A'}(y_i|x_i)} \right] \\
        & =\sum_{i=1}^n \mathbb{E}_A\left[\mathbb{E}_A\left[\log \frac{f_{A}(y_i|x_i)}{f_{A'}(y_i|x_i)} | x_i\right]\right].
\end{align*}
Noting that  $f_{A}(y_i|x_i)=\mathcal{N}(A x_i, \sigma^2I_{d_y})$ (and similarly for $A'$) we obtain
\begin{align*}
        \mathbb{E}_{A}[L_n(A, A')] &= \sum_{i=1}^n \mathbb{E}_A\left[\mathrm{KL}\left(\mathcal{N}(A x_i, \sigma^2I_{d_y}), \mathcal{N}(A' x_i, \sigma^2I_{d_y})\right)\right]\\
        &=  \sum_{i=1}^n\mathbb{E}_A\left[\frac{1}{2\sigma^2} \mathrm{Tr}(x_i^\top(A-A')^\top (A-A')x_i)\right] \\
        &= \frac{1}{2\sigma^2}\sum_{i=1}^n\mathbb{E}_A\left[\mathrm{Tr}((A-A')^\top (A-A')x_ix_i^\top)\right]\\
        &= \frac{1}{2\sigma^2}\mathbb{E}_A\left[\mathrm{Tr}((A-A')^\top (A-A')X^\top X)\right]\\
        &= \frac{1}{2\sigma^2}\mathrm{Tr}\left((A-A')^\top(A-A')\mathbb{E}_A[X^\top X]\right).
\end{align*}
\end{proof}

\paragraph{Data-processing inequality for multiple hypotheses.} We present a variant of the data processing inequality which allows for using multiple hypothesis in deriving lower bounds. This result is borrowed from  \cite{jedra2022finite} (see their Proposition 2).

\begin{lemma}
Let $kl(p,q)$ the \textrm{KL}-divergence between two Bernoulli with parameters $p$ and $q$. Let $n,m$ be two positive integers and consider $\mathcal{E}_1,...,\mathcal{E}_m$ disjoint events belonging to the filtration $\mathcal{F}_n$. Then, for all $A,A_1,...,A_m$ such that for $i=1,...,m, A_i\neq A$, we have the following 

$$\frac{1}{m}\sum_{i=1}^{m}\mathbb{E}_{A_i}(L_n(A_i,A))\geq kl\left(\frac{1}{m}\sum_{i=1}^m \mathbb{P}_{A_i}(\mathcal{E}_i), \frac{\mathbb{P}_A(\cup_{i=1}^m \mathcal{E}_i)}{m} \right).$$

Furthermore, if $\mathbb{P}_{A_i}(\mathcal{E}_i)\geq 1-\delta$ for all $ i \in  [m]$, and  $\cup_{i=1}^m\mathcal{E}_i \subseteq \mathcal{E}^c$ for some event $\mathcal{E}$ such that $\mathbb{P}_{A}(\mathcal{E})\geq 1-\delta$ then 

\begin{align*}
\frac{1}{m}\sum_{i=1}^{m}\mathbb{E}_{A_i}(L_n(A_i,A)) \geq kl\left(1-\delta, \frac{\delta}{m}\right) \geq\frac{1}{2}\log\left(\frac{m}{4\delta}\right).
\end{align*}
\label{lemma:data_process}
\end{lemma}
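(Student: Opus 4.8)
The plan is to recognize that $\mathbb{E}_{A_i}[L_n(A_i,A)]$ is exactly the Kullback--Leibler divergence $\mathrm{KL}(\mathbb{P}_{A_i},\mathbb{P}_A)$ between the two laws restricted to the filtration $\mathcal{F}_n$: indeed $L_n(A_i,A)=\log(f_{A_i}/f_A)$ and the expectation is taken under $\mathbb{P}_{A_i}$. Once this identification is made, the whole argument reduces to the data-processing inequality together with convexity and monotonicity properties of the binary KL functional $(p,q)\mapsto kl(p,q)$.

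First, for each $i\in[m]$, I would apply the data-processing inequality to the deterministic, $\mathcal{F}_n$-measurable statistic $\omega\mapsto\mathds{1}\{\omega\in\mathcal{E}_i\}$. Pushing both measures through this binary statistic yields Bernoulli laws with parameters $\mathbb{P}_{A_i}(\mathcal{E}_i)$ and $\mathbb{P}_A(\mathcal{E}_i)$, and since KL can only decrease under processing,
\[
\mathbb{E}_{A_i}[L_n(A_i,A)]=\mathrm{KL}(\mathbb{P}_{A_i},\mathbb{P}_A)\geq kl\big(\mathbb{P}_{A_i}(\mathcal{E}_i),\,\mathbb{P}_A(\mathcal{E}_i)\big).
\]
Averaging over $i$ and invoking the joint convexity of $kl$, I would lower bound the average by $kl(\bar p,\bar q)$, where $\bar p=\tfrac1m\sum_i\mathbb{P}_{A_i}(\mathcal{E}_i)$ and $\bar q=\tfrac1m\sum_i\mathbb{P}_A(\mathcal{E}_i)$. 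The crucial simplification comes from the disjointness of the $\mathcal{E}_i$: $\sum_i\mathbb{P}_A(\mathcal{E}_i)=\mathbb{P}_A(\cup_i\mathcal{E}_i)$, so $\bar q=\mathbb{P}_A(\cup_i\mathcal{E}_i)/m$, which gives the first displayed inequality of the statement.

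For the sharper bound, I would substitute the hypotheses. The assumption $\mathbb{P}_{A_i}(\mathcal{E}_i)\geq 1-\delta$ gives $\bar p\geq 1-\delta$, while $\cup_i\mathcal{E}_i\subseteq\mathcal{E}^c$ and $\mathbb{P}_A(\mathcal{E})\geq 1-\delta$ give $\mathbb{P}_A(\cup_i\mathcal{E}_i)\leq\delta$, hence $\bar q\leq\delta/m$. Using that $kl(p,q)$ is nondecreasing in $p$ and nonincreasing in $q$ throughout the region $p\geq q$ (valid here once $\delta\leq 1/2$, since then $\delta/m\leq 1-\delta$), I would conclude $kl(\bar p,\bar q)\geq kl(1-\delta,\delta/m)$. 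The final elementary step lower bounds $kl(1-\delta,\delta/m)$: dropping the nonnegative term $-(1-p)\log(1-q)$ and using the entropy bound $p\log p+(1-p)\log(1-p)\geq-\log 2$ yields $kl(1-\delta,\delta/m)\geq(1-\delta)\log(m/\delta)-\log 2$, and since $1-\delta\geq 1/2$ this is at least $\tfrac12\log(m/\delta)-\log 2=\tfrac12\log(m/(4\delta))$.

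The main obstacle is not any single hard estimate but making the two softening inequalities point in compatible directions: convexity of $kl$ merges the $m$ per-hypothesis bounds into a single binary divergence, and then monotonicity replaces the merged arguments $(\bar p,\bar q)$ by the worst case $(1-\delta,\delta/m)$. The monotonicity step requires care about staying in the regime $p\geq q$ (a mild restriction on $\delta$), and the disjointness of the events is precisely what prevents the union probability from being diluted: without it one would only control $\bar q\leq 1$ rather than $\bar q\leq\delta/m$, losing the $\log m$ gain that makes the bound useful for the multiple-hypothesis packing arguments used later.
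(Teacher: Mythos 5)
Your proof is correct, and it is essentially the argument the paper relies on: the paper does not prove this lemma itself but cites Proposition 2 of \cite{jedra2022finite}, and its accompanying remark describes exactly your route — per-hypothesis binary data processing, joint convexity of $kl$ to merge into $kl\bigl(\tfrac{1}{m}\sum_i \mathbb{P}_{A_i}(\mathcal{E}_i), \tfrac{1}{m}\sum_i \mathbb{P}_A(\mathcal{E}_i)\bigr)$, disjointness to collapse the second argument to the union probability, then monotonicity and the elementary entropy bound. Your observation that the monotonicity step (and indeed the final inequality $kl(1-\delta,\delta/m)\ge \tfrac12\log(m/(4\delta))$ itself) implicitly requires $\delta$ to be small, e.g.\ $\delta\le 1/2$, is a legitimate caveat that the paper's statement leaves tacit.
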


\begin{remark} The first statement of Lemma \ref{lemma:data_process} resembles Fano's inequality, which states that for all events $\mathcal{E}_i$ and probability measures $\mathbb{P}_i, \mathbb{Q}_i$ with $i\in[m]$, one has

    \begin{equation}
        \frac{1}{m}\sum_{i=1}^m\mathbb{P}_i(\mathcal{E}_i)\leq\frac{\frac{1}{m}\sum_{i=1}^m \mathrm{KL}(\mathbb{P}_i, \mathbb{Q}_i)+\log(2)}{\log m}.
        \label{eq:fano}
    \end{equation} 

    More precisely, both results are derived from the elementary data-processing inequality which involves analyzing $\mathrm{kl}\left(\frac{1}{m}\sum_{i=1}^m\mathbb{P}_i(\mathcal{E}_i),\frac{1}{m}\sum_{i=1}^m\mathbb{Q}_i(\mathcal{E}_i)\right)$. While this quantity can be additionally simplified leading to Fano's inequality, see Proposition 4 of \cite{gerchinovitz2020fano} for technical details, we instead leverage it with our stability definitions. This allows us to derive a tight dependency in $\log(\frac{1}{\delta})$ by carefully choosing the events $\mathcal{E}_i$, as  shown by the second statement of Lemma \ref{lemma:data_process}. 

    On the other hand, Fano's inequality cannot recover alone this $\log(\frac{1}{\delta})$ dependency. As a remedy, authors of \cite{ma2024high} propose a separate LeCam's two point method (see their Corollary 6). However as its name indicates and as far as we are aware, it can only be applied to two hypothesis, and therefore cannot leverage the packing argument from which we derive the correct dependency of $\log(\frac{1}{\delta})$ with respect to all model parameters in our lower bounds.  
    \end{remark}

\subsection{Proof of Lemma \ref{lem:stiefel}} \label{app:sec4:packing}

\begin{proof} 
Let $k\le \frac{d}{2}$. Let $\Gamma$ be the usual Gamma function. We first introduce the geodesic distance on the Stiefel manifold, following the formalism of \cite{henkel2005sphere}. For any skew-symmetric $A\in\mathbb{R}^{k\times k}$ (i.e., such that $A^\top=-A$), and $ B\in\mathbb{R}^{(d-k)\times k}$, we define

$$
X= \begin{pmatrix}
    A & -B^\top \\
    B & 0  
\end{pmatrix} \in \RR^{d\times d}.
$$

The geodesic distance between the matrix $Q=\exp(X)\begin{pmatrix}
    I_k \\
    0
\end{pmatrix}\in \mathrm{St}_k^d(\mathbb{R})$ and $I_{d,k}=\begin{pmatrix}
        I_k \\
        0
\end{pmatrix}\in \mathrm{St}_k^d(\mathbb{R})$ (where  $I_k\in\RR^{k\times k}$ is the identity matrix) is defined by $d_g(Q,I_{d,k})=\frac{1}{2}\Vert X\Vert_\F ^2=\frac{1}{2}\Vert A\Vert_\F^2+\Vert B\Vert_F^2$. The geodesic distance $d_g(Q_1,Q_2)$ between arbitrary  $Q_1,Q_2\in \mathrm{St}_k^d(\mathbb{R})$ follows from the isometric transformation $Q'_1=Q_2^TQ_1, \ Q'_2=I_{d,k}$: it corresponds to $d_g(Q_1',I_{d,k})$.
    
For any $m\geq 1$, Theorem 4.1 in \cite{henkel2005sphere} states the existence of a packing $\mathcal{P}_{k,d}$ of $\mathrm{St}_k^d(\mathbb{R})$ of size $m$, with minimal geodesic distance between its elements
    
    $$\forall (Q_1,Q_2)\in \mathcal{P}_{k,d} \quad d_g(Q_1,Q_2)\geq d_0:= ab.$$
    
    where 
    
    $$a=\left(\frac{1}{2}\right)^\frac{\log_2 m}{2dk-k^2}, \qquad 
b=\Bigg((2\sqrt{\pi})^k \frac{\Gamma\left(\frac{k(2d-k)}{2} + 1\right)}{\prod_{i=d-k+1}^d \Gamma(i)}\Bigg)^\frac{1}{2dk-k^2}
.$$ 

We will prove that with an appropriate choice of $m$, we obtain a packing with the properties required in Lemma \ref{lem:stiefel}. Let us select $m$ such that  $\log_2(m)=2dk-k^2=k(2d-k)$. With this choice, we can see that $a=\frac{1}{2}$. The next lemma, proved below, gives a lower bound on $b$.

\begin{lemma}\label{lem:b}
There exists a universal constant $c\approx 0.17$ such that $b\geq e^{\frac{-1-\log(2)}{3}-\frac{1}{2}-c}\sqrt{\frac{k}{2}}$.
\end{lemma}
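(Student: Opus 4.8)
Set $N := k(2d-k)/2$, so that the exponent $2dk-k^2$ appearing in the definition of $b$ equals exactly $2N$. Taking logarithms then reduces the claim to a lower bound on
$$2N\log b = k\log(2\sqrt{\pi}) + \log\Gamma(N+1) - \sum_{i=d-k+1}^{d}\log\Gamma(i).$$
My plan is to control the three pieces with only three elementary ingredients: convexity of $\log\Gamma$ to collapse the sum, Stirling's formula for the surviving gamma factors, and the hypothesis $k\le d/2$ to bound the leftover terms by absolute constants. The leading behaviour will turn out to be $\tfrac12\log k$, and all of the $d$-dependence will have to cancel or be absorbed into constants.

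\textbf{Convexity and Stirling.} First I would use that $\log\Gamma$ is convex: since $\{d-k+1,\dots,d\}$ are $k$ equally spaced points whose mean is the midpoint of the endpoints, bounding each $\log\Gamma(i)$ above by the chord through $i=d-k+1$ and $i=d$ gives $\sum_{i=d-k+1}^d\log\Gamma(i)\le \tfrac{k}{2}\bigl(\log\Gamma(d-k+1)+\log\Gamma(d)\bigr)=\tfrac{k}{2}\log\bigl((d-k)!\,(d-1)!\bigr)$, replacing the whole product by two factorials. I would then apply the two-sided bound $\log m!=m\log m-m+\tfrac12\log(2\pi m)+\mu_m$ with $0<\mu_m<\tfrac{1}{12m}$ to $\Gamma(N+1)=N!$ on the lower side and to $(d-k)!,(d-1)!$ on the upper side. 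The linear-in-$x$ contributions then collapse: their net coefficient is $-N+\tfrac{k}{2}\bigl((d-k)+(d-1)\bigr)=-N+\bigl(N-\tfrac k2\bigr)=-\tfrac k2$, using $N=k(2d-k)/2$. Likewise the $\log(2\pi)$ prefactors combine with $k\log(2\sqrt\pi)$ into the clean constant $\tfrac{1+k}{2}\log 2+\tfrac12\log\pi$.

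\textbf{Leading term.} The genuinely nonlinear remainder is $N\log N-\tfrac{k}{2}\bigl((d-k)\log(d-k)+(d-1)\log(d-1)\bigr)$. Bounding $\log(d-k),\log(d-1)\le\log d$ and again using $\tfrac{k}{2}\bigl((d-k)+(d-1)\bigr)=N-\tfrac{k}{2}$, this is at least $N\log\tfrac{N}{d}+\tfrac{k}{2}\log d=N\log k+N\log\bigl(1-\tfrac{k}{2d}\bigr)+\tfrac{k}{2}\log d$. Dividing the whole accumulated expression by $2N$ produces the advertised $\tfrac12\log k$ (equivalently $\tfrac12\log(k/2)+\tfrac12\log 2$), with everything else being $O(1)$.

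\textbf{Constants and the main obstacle.} What remains — and where I expect the only real difficulty — is to show that every leftover term, after dividing by $2N$, is bounded below by an absolute constant uniformly in all admissible $(d,k)$. Here $k\le d/2$ is essential: it forces $1-\tfrac{k}{2d}\ge\tfrac34$ (so $\tfrac12\log(1-\tfrac{k}{2d})\ge\tfrac12\log\tfrac34$) and, since then $2d-k\ge 3$, it gives $\tfrac{k}{2N}=\tfrac{1}{2d-k}\le\tfrac13$, which is precisely what converts surviving linear-in-$k$ contributions (the $-\tfrac k2$ above together with the $\mu_m$-corrections) into constants of the form $\tfrac{-1-\log 2}{3}$, the additive $-\tfrac12$, and the residual $c\approx 0.17$. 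The bookkeeping is elementary but delicate: one must track the sign of each Stirling remainder, check that every term discarded as a lower bound (for instance $\tfrac{k}{4}\log\tfrac{d^2}{(d-k)(d-1)}\ge 0$) is genuinely nonnegative, and verify that no $d$-dependent quantity escapes the constant, so that the final inequality is a valid lower bound with a fully $d$-free constant.
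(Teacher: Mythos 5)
Your proposal is correct, but it takes a genuinely different route from the paper's. The paper never uses convexity of $\log\Gamma$: it applies a sharpened Stirling bound (Batir's inequality, which is the source of the constant $c\approx 0.17$) to the numerator \emph{and to every factor} $\Gamma(i)$ in the denominator product, and then controls the resulting sum $\sum_{i=d-k}^{d-1} i\log i$ by an integral comparison using convexity of $t\log t$; the constants $\tfrac{-1-\log 2}{3}$, $-\tfrac12$ and $c$ in the statement are artifacts of that particular bookkeeping. You instead collapse the entire sum $\sum_{i=d-k+1}^{d}\log\Gamma(i)$ in one stroke via the chord bound for the convex function $\log\Gamma$ over the equally spaced points $d-k+1,\dots,d$, leaving only three Gamma factors to which Stirling is applied, with the linear terms cancelling algebraically to $-\tfrac k2$. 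I verified your key identities: $\tfrac{k}{2}\bigl((d-k)+(d-1)\bigr)=N-\tfrac k2$; the prefactor constant $\tfrac{k+1}{2}\log 2+\tfrac12\log\pi$; the leading-term extraction $N\log\tfrac Nd+\tfrac k2\log d$ with $\tfrac Nd=k\bigl(1-\tfrac{k}{2d}\bigr)$; and the nonnegativity of $\tfrac k4\log\tfrac{d^2}{(d-k)(d-1)}$. Completing the arithmetic with $k\le d/2$ (so $\tfrac12\log\bigl(1-\tfrac{k}{2d}\bigr)\ge\tfrac12\log\tfrac34$, $\tfrac{k}{4N}=\tfrac{1}{2(2d-k)}\le\tfrac16$, and the Stirling remainders cost at most $\tfrac1{36}$) gives $\log b\ge\tfrac12\log\tfrac k2+\tfrac12\log 2-\tfrac12\log\tfrac43-\tfrac16-\tfrac1{36}>\tfrac12\log\tfrac k2$, i.e., essentially $b\ge\sqrt{k/2}$, which is strictly stronger than the paper's bound (whose additive constant is about $-1.23$). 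Your constants will therefore not reproduce the paper's, contrary to what your last paragraph suggests, but that is immaterial: a larger lower bound implies the lemma. What each approach buys: yours is more elementary (classical Stirling plus log-convexity of $\Gamma$, no Batir) and sharper; the paper's is termwise and leans on a cited refinement. One small repair is needed: $N=k(2d-k)/2$ is not an integer when $k$ is odd, so you cannot write $\Gamma(N+1)=N!$; instead invoke Stirling in its Gamma-function form, $\Gamma(x+1)=\sqrt{2\pi x}\,(x/e)^x e^{\mu(x)}$ with $0<\mu(x)<\tfrac1{12x}$ valid for all real $x>0$, which is exactly what your estimates require.
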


From this lemma, we deduce that the distance between two elements of the packing satisfies:
$$d_0=ab\geq  \frac{e^{\frac{-1-\log(2)}{3}-\frac{1}{2}-c}}{2\sqrt{2}}\sqrt{k}.$$
Hence we have proved that there exists a packing of size $2^{k(2d-k)}\geq 2^{kd}$ with minimal geodesic distance lower bounded by $\sqrt{k}$ up to a universal constant. To complete the proof of Lemma \ref{lem:stiefel}, we have to relate the geodesic distance to that induced by the Frobenius norm. To this aim, we use Corollary 7.2 of \cite{mataigne2024bounds} stating that 
    $$\forall (Q_1,Q_2)\in \mathrm{St}_k^d(\mathbb{R}), \quad  d_g(Q_1,Q_2) \leq \frac{\pi}{2}\Vert Q_1-Q_2\Vert_\F.$$
    
    Hence the packing in the geodesic distance also induces a packing in the Frobenius distance with minimal distance $\frac{\sqrt{k}}{C}$ where $C:=\pi\sqrt{2} e^{\frac{1+\log(2)}{3}+\frac{1}{2}+c}$. 

\end{proof}

\begin{proof}[Proof of Lemma \ref{lem:b}]
We use Theorem 1.6 in \cite{batir2008inequalities}: For any $x\geq 1$

$$\left(\frac{x}{e}\right)^x\sqrt{2\pi x}\leq\Gamma(x+1)\leq\left(\frac{x}{e}\right)^x\sqrt{2\pi(x+c)}$$

where $c\approx0.17$. 

Let $x=\frac{\log_2 m }{2}$, then we can use this result to bound $b$ as follows
\begin{align*}
    b 
    &\geq \left((2\sqrt{\pi})^k\frac{(\frac{x}{e})^x\sqrt{2\pi x}}{\prod_{i=d-k}^{d-1} (\frac{i}{e})^i\sqrt{2\pi i}}\right)^\frac{1}{2dk-k^2}\ \left(\prod_{i=d-k}^{d-1} \sqrt{1+\frac{c}{i}}\right)^\frac{-1}{2dk-k^2}\\
    &=\sqrt{\frac{x}{e}} \left(\frac{\sqrt{2}^{k+1}\sqrt{\pi}}{\prod_{i=d-k}^{d-1} (\frac{i}{e})^i\sqrt{ i}}\right)^\frac{1}{2dk-k^2}\ \left(\prod_{i=d-k}^{d-1} \sqrt{1+\frac{c}{i}}\right)^\frac{-1}{2dk-k^2} \\
    &\geq \sqrt{\frac{x}{e}} \underbrace{\left(\frac{1}{\prod_{i=d-k}^{d-1} (\frac{i}{e})^i\sqrt{ i}}\right)^\frac{1}{2dk-k^2}}_{:=b_0}\ \underbrace{\left(\prod_{i=d-k}^{d-1} \sqrt{1+\frac{c}{i}}\right)^\frac{-1}{2dk-k^2}}_{:=b_1},
\end{align*}
where the second line holds since $(\frac{x}{e}) ^\frac{x}{2dk-k^2}=\sqrt{\frac{x}{e}}$, and  $\sqrt{x}^\frac{1}{2dk-k^2}=x^\frac{1}{x}\geq 1$. Next we compute the logarithms of $b_0$ and $b_1$. 

\underline{\it Computing $\log (b_1)$.} We have:
\begin{align*}
    \log\left(\prod_{i=d-k}^{d-1} \sqrt{1+\frac{c}{i}}\right) &= \frac{1}{2}\sum _{i=d-k}^{d-1}\log\left(1+\frac{c}{i}\right) \leq \frac{1}{2}k\log\left(1+\frac{c}{d-k}\right)\\
    &\leq \frac{kc}{2(d-k)}\leq \frac{c}{2} \qquad \textrm{since $k\leq \frac{d}{2}$.}
\end{align*}
We conclude that:
$b_1=\left(\prod_{i=d-k}^{d-1} \sqrt{1+\frac{c}{i}}\right)^\frac{-1}{2dk-k^2}\geq e^{-\frac{c}{2(2dk-k^2)}}\geq e^{-c}.$

\underline{\it Computing $\log (b_0)$.} We decompose the product $\prod_{i=d-k}^{d-1} (\frac{i}{e})^i\sqrt{ i}$ into two products $\prod_{i=d-k}^{d-1}\sqrt{i}$ and $\prod_{i=d-k}^{d-1}(\frac{i}{e})^i$.

\begin{itemize}
    \item We upper bound the logarithm of the first product as follows: 
    
    $
    \log \left(\prod_{i=d-k}^{d-1}\sqrt{i}\right)=\frac{1}{2}\sum_{i=d-k}^{d-1}\log(i)
    \leq \frac{1}{2}k\log(d).$    
We obtain 
$$\left(\prod_{i=d-k}^{d-1}\sqrt{i}\right)^\frac{-1}{2dk-k^2}\geq e^{-\frac{k\log(d)}{2k(2d-k)}}\geq e^{-\frac{\log(d)}{2(2d-k)}}\geq e^{-\frac{1}{3}}.$$
\item For the second product, its logarithm can also be upper bounded as follows:
$\log\left(\prod_{i=d-k}^{d-1} \left(\frac{i}{e}\right)^i\right)=\sum_{i=d-k}^{d-1}i\log(i)-\sum_{i=d-k}^{d-1}i \leq\sum_{i=d-k}^{d-1}i\log(i).$ Since $t\log(t)$ is an increasing convex function for $t\geq 1$, we get

\begin{align*}
    \sum_{i=d-k}^{d-1}i\log(i)&\leq \int_{i=d-k}^dt\log(t)dt\leq k\frac{(d-k)\log(d-k)+d\log(d)}{2}\\
    &= k\frac{(d-k)\log(d-k)+d(\log(d-k)+\log(1+\frac{k}{d-k}))}{2}\\
    &\leq k\frac{(2d-k)\log(d-k)+d\log 2}{2}.
\end{align*}
We deduce that  $\left(\prod_{i=d-k}^{d-1} \left(\frac{i}{e}\right)^i\right)^\frac{-1}{2dk-k^2}\geq e^{-\frac{\log(d-k)}{2}-\frac{d\log(2)}{2(2d-k)}}
    \geq \frac{1}{\sqrt{d-k}}e^\frac{-\log(2)}{3}$.
\end{itemize}
Plugging the above inequalities together, we obtain $b_0\geq e^\frac{-1-\log(2)}{3} \sqrt{\frac{1}{d-k}}.$ \\
We conclude that 
\begin{align*}
b=\sqrt{\frac{x}{e}}b_0b_1
&\geq  e^{\frac{-1-\log(2)}{3}-\frac{1}{2}-c} \sqrt{\frac{x}{d-k}} =e^{\frac{-1-\log(2)}{3}-\frac{1}{2}-c}\sqrt{\frac{k(2d-k)}{2(d-k)}}
\\&\geq e^{\frac{-1-\log(2)}{3}-\frac{1}{2}-c}\sqrt{\frac{k}{2}}.
\end{align*}
\end{proof}

\subsection{Proof of Theorem \ref{thm:full}}\label{app:sec4:thm1} 
\begin{proof}

Let $\varepsilon>0$ and $\delta\in(0,1).$ We suppose that $\EE(\hat \Sigma)\succ 0$. Let $A$ be such that $ \mathrm{rank}(A)\leq r \leq \frac{1}{2}\underline{d}$. Let $\hat A_n$ be an $(\varepsilon,\delta,r)$-stable algorithm in $A$ and let $N$ be its sample complexity. The lower bound on $N$ stated in Theorem \ref{thm:full} is the maximum over two lower bounds, one that depends on $d_x$ and one that depends on $d_y$. We prove both lower bounds below. 

\textbf{1) Lower bound that depends on $d_y$.} We start by applying Lemmas \ref{lem:log-lik} and \ref{lemma:data_process} to $A$ using the confusing models  from ${\cal C}_{1}(A,r,\varepsilon)$. More precisely, introduce for all $i\in[2^{r d_y}]$, 

$$
A_i=A+\frac{2C\varepsilon}{\sqrt r} Q_iW_{-r}^\top , \quad \text{where }\ Q_i\in \mathcal{P}_{r}^{d_y}, 
$$

and $C\geq 1$ is a universal constant previously defined in Lemma \ref{lem:stiefel}.

\textbf{Application of Lemma \ref{lem:log-lik}}: We have

$$\forall i\in[2^{rd_y}], \quad \EE_{A_i} [L_N(A_i,A)] =  \frac{1}{2\sigma^2}\mathrm{Tr}\left((A-A_i)^\top(A-A_i)\mathbb{E}_{A_i}[X^\top X]\right),$$

where $X\in \RR^{N\times d_x}$ is the (random) matrix of covariates.

\begin{itemize}
    \item In the multivariate regression case, we clearly have: $\mathbb{E}_{A_i}[X^\top X]=\mathbb{E}[X^\top X]=N\Sigma,$ and therefore,
    
    \begin{align*}\EE_{A_i} [L_N(A_i,A)] &=  \frac{N}{2\sigma^2}\mathrm{Tr}((A-A_i)^\top(A-A_i)\Sigma) =\frac{4NC^2\varepsilon^2}{2\sigma^2r}\mathrm{Tr}(Q_iW_{-r}^\top \Sigma W_{-r}Q_i^\top) \\
    &=\frac{2NC^2\varepsilon^2}{\sigma^2r}\mathrm{Tr}(W_{-r}^\top \Sigma W_{-r}) = \frac{2NC^2\varepsilon^2 \underline{\lambda}_r(\Sigma)}{\sigma^2}. 
    \end{align*}

    \item In the system identification case, the trajectory is generated by $A$ itself. Lemma 8 in \cite{jedra2022finite} shows that 
    
    \begin{align*}
        \mathbb{E}_{A_i}\left[\sum_{n=0}^{N-2} x_n x_n^\top \right] & = \sum_{n=0}^{N-2} \Gamma_{n}(A_i) \preceq N \Gamma_\infty (A) + N \left(\Gamma_{\infty}(A_i) - \Gamma_{\infty}(A)\right).
    \end{align*}
    
    Furthermore, since $\Vert A_i-A\Vert_2=\frac{2C\varepsilon}{\sqrt{r}}\leq \frac{\Vert \Gamma_\infty(A)\Vert_2^{-3}}{4}$ then Lemma 1 in \cite{jedra2022finite} implies that 
    
    $$\Vert \Gamma_\infty(A_i)-\Gamma_\infty(A)\Vert_2\leq 16\Vert A_i-A\Vert_2\Vert \Gamma_\infty(A)\Vert_2^3.$$ 
    
    Hence 
    \begin{align*}
        &\mathrm{Tr}(W_{-r}^\top\mathbb{E}_{A_i}[X^\top X] W_{-r}) = \sum_{n=0}^{N-2}\mathrm{Tr}(W_{-r}^\top\Gamma_n(A_i) W_{-r})\\
        &\leq N\mathrm{Tr}(W_{-r}^\top\Gamma_\infty(A) W_{-r}) + N\mathrm{Tr}\bigg(W_{-r}^\top\big(\Gamma_\infty(A_i)-\Gamma_\infty(A)\big) W_{-r}\bigg)\\
        &\leq N\mathrm{Tr}(W_{-r}^\top\Gamma_\infty(A) W_{-r}) + N\Vert \Gamma_\infty(A_i)-\Gamma_\infty(A)\Vert_2\mathrm{Tr}(W_{-r} W_{-r}^\top),
        \end{align*}
        
        where we applied the trace inequality found in Theorem 1 of \cite{mori2002comments}, $\mathrm{Tr}(AB)\leq s_1(A)\mathrm{Tr}(B)$. 
        
        Hence,       
        \begin{align*}
        \mathrm{Tr}(W_{-r}^\top\mathbb{E}_{A_i}[X^\top X] W_{-r})&\leq Nr\underline{\lambda}_{r}(\Gamma_\infty(A))) + 16Nr\Vert A_i-A\Vert_2\Vert \Gamma_\infty(A)\Vert_2^3\\
        &\leq Nr\underline{\lambda}_{r}(\Gamma_\infty(A)) + 16Nr \\
        &\leq 17Nr \underline{\lambda}_{r}(\Gamma_\infty(A)),
    \end{align*}
    
    where the last inequality comes from $\underline{\lambda}_{r}(\Gamma_\infty(A))\geq 1$ since $\Gamma_{\infty}(A)\succeq I$.

    Finally, we obtain the following upper bound in the system identification case

    $$\EE_{A_i} [L_N(A_i,A)] \leq \frac{2C^2\varepsilon^2}{\sigma^2r} 17Nr \underline{\lambda}_{r}(\Gamma_\infty(A))\leq \frac{40NC^2\varepsilon^2\underline{\lambda}_{r}(\Gamma_\infty(A))}{\sigma^2}.$$
\end{itemize}

\textbf{Application of Lemma \ref{lemma:data_process}}: Consider

$$
    \forall i\in[ 2^{rd_y}]\quad \mathcal{E}_i=\{\Vert A_i-\hat A_n\Vert_\F \leq \varepsilon\}\quad \text{and}\quad \mathcal{E}=\{\Vert A-\hat A_n\Vert_\F \leq \varepsilon\}.
$$

Given $N$ samples, by stability of $\hat A_n$, one has for $i\in[ 2^{rd_y}],\ \mathbb{P}_{A_i}(\mathcal{E}_i)\geq 1-\delta$ and $ \mathbb{P}_{A}(\mathcal{E})\geq 1-\delta.$ 

Furthermore, these confusing models verify

$$
\forall i\in[2^{r d_y}], \quad \Vert A_i-A\Vert_\F=\frac{2C\varepsilon}{\sqrt{r}}\Vert Q_iW_{-r}^\top\Vert_\F 
= 2C\varepsilon 
\geq 2\varepsilon, $$

since both $Q_i, W_{-r}$ are semi-orthogonal and the Frobenius norm is unitarily invariant. Similarly, we have 

$$\forall i\neq j\in[2^{r d_y}], \quad 
\Vert A_i-A_j\Vert_\F
=\frac{2C\varepsilon}{\sqrt{r}} \Vert (Q_i-Q_j)W_{-r}^\top\Vert_\F 
=\frac{2C\varepsilon}{\sqrt{r}} \Vert Q_i-Q_j\Vert_\F 
\geq 2\varepsilon, $$

where the last inequality holds by Lemma \ref{lem:stiefel}. Since the $A_i$ are all distant from each other by at least $2\varepsilon $ then the $\mathcal{E}_i$ are all pairwise disjoint. Furthermore, since the $A_i$ are also distant from $A$ by at least $2\varepsilon $ then $\cup_{i=1}^{2^{r d_y}} \mathcal{E}_i \subset \mathcal{E}^c$. We can now apply Lemma \ref{lemma:data_process} and obtain 

\begin{itemize}
    \item For i.i.d multivariate regression: 
    \begin{align*}
    &\frac{1}{2}\log \left(\frac{2^{r d_y}}{4\delta}\right) \leq \frac{1}{2}\log\left(\frac{|\mathcal{P}_{r}^{d_y}|}{4\delta}\right) \leq  \sum_{i=1}^{2^{r d_y}}\frac{\EE_{A_i}[L_N(A_i,A)]}{2^{rd_y}} \\  
    &\implies \quad r d_y\log(2)+\log (\frac{1}{4\delta})\leq\frac{4NC^2\varepsilon^2 \underline{\lambda}_r(\Sigma)}{\sigma^2} \\
    &\implies \quad \log(2)(r d_y+\log (\frac{1}{\delta})-1)\leq\frac{4NC^2\varepsilon^2 \underline{\lambda}_r(\Sigma)}{\sigma^2} \\
    &\implies \frac{\log 2}{2}(rd_y+\log(\frac{1}{\delta}))\leq \frac{4NC^2\varepsilon^2 \underline{\lambda}_r(\Sigma)}{\sigma^2}.
    \end{align*} 
    
    This implies the following lower bound 

    $$ N\ge \frac{\sigma^2\log(2)}{8}\frac{\log (\frac{1}{\delta})+rd_y}{\varepsilon^2\underline{\lambda}_{r}(\Sigma)}\gtrsim \sigma^2\frac{\log (\frac{1}{\delta})+rd_y}{\varepsilon^2\underline{\lambda}_{r}(\Sigma)}.$$

    \item For dynamical systems:   
    \begin{align*}
        \frac{1}{2}\log \left(\frac{2^{r d_y}}{4\delta}\right) \leq \frac{1}{2}\log\left(\frac{|\mathcal{P}_{r}^{d_y}|}{4\delta}\right) &\leq  \sum_{i=1}^{2^{r d_y}}\frac{\EE_{A_i}[L_N(A_i,A)]}{2^{rd_y}}\\
        \implies \quad rd_y \log(2)+\log (\frac{1}{4\delta})&\leq\frac{80NC^2\varepsilon^2\underline{\lambda}_{r}(\Gamma_\infty(A))}{\sigma^2}.
    \end{align*}
    This implies the following lower bound

$$N\ge \frac{\sigma^2\log(2)}{160}\frac{\log (\frac{1}{\delta})+rd_y}{\varepsilon^2\underline{\lambda}_{r}(\Gamma_\infty(A))}\gtrsim \sigma^2\frac{\log (\frac{1}{\delta})+rd_y}{\varepsilon^2\underline{\lambda}_{r}(\Gamma_\infty(A))}.$$
\end{itemize}

\textbf{2) Lower bound that depends on $d_x$.} This case is relevant only for the multivariate regression since in this case, $d_x$ might differ from $d_y$. 
 
\textbf{Application of Lemma \ref{lem:log-lik}}: We use the confusing models from ${\cal C}_{2}(A,r,\varepsilon)$. Introduce for all $i\in[2^{r d_y}]$, 

$$
    A_i=A+\frac{2C\varepsilon}{\sqrt r} U_rR_i^\top , \qquad \text{where } R_i\in \mathcal{P}_{r}^{d_x}. 
$$

Then
\begin{align*}\EE_{A_i} [L_N(A_i,A)] &=  \frac{N}{2\sigma^2}\mathrm{Tr}((A-A_i)^\top(A-A_i)\Sigma) =\frac{4NC^2\varepsilon^2}{2\sigma^2r}\mathrm{Tr}(U_rR_i^\top\Sigma R_i U_r^\top) \\
    &=\frac{2NC^2\varepsilon^2}{\sigma^2r}\mathrm{Tr}(R_i^\top\Sigma R_i) \leq \frac{2NC^2\varepsilon^2 \bar{\lambda}_r(\Sigma)}{\sigma^2} \quad \text{by Lemma \ref{lem:min_max}.}  
    \end{align*}

\textbf{Application of Lemma \ref{lemma:data_process}}: Using the same arguments as to derive the lower bound that depends on $d_y$, we obtain
\begin{align*}
    \frac{1}{2}\log \left(\frac{2^{r d_x}}{4\delta}\right) \leq \frac{1}{2}\log\left(\frac{|\mathcal{P}_{r}^{d_x}|}{4\delta}\right) &\leq  \sum_{i=1}^{2^{r d_x}}\frac{\EE_{A_i}[L_N(A_i,A)]}{2^{rd_x}} \\    \implies \quad r d_x\log(2)+\log (\frac{1}{4\delta})&\leq\frac{4NC^2\varepsilon^2 \bar{\lambda}_r(\Sigma)}{\sigma^2}. 
    \end{align*} 
    
This implies the following lower bound 

    $$ N \gtrsim \sigma^2\frac{\log (\frac{1}{\delta})+rd_x}{\varepsilon^2\bar{\lambda}_{r}(\Sigma)}.$$

Combining both lower bounds on the sample complexity, we obtain the stated result.
\end{proof}

\subsection{Proof of Theorem \ref{thm:reduced}} \label{app:sec4:thm2}

\begin{proof}
Let $\varepsilon>0$ and $\delta\in(0,1).$ We suppose that $\EE(\hat \Sigma)\succ 0$. Let $A$ such that $ \mathrm{rank}(A)\leq r \leq \frac{1}{2}\underline{d}$.  Let $\hat A_n$ be an $(\varepsilon,\delta)$-stable algorithm in $A$ and let $N$ be its sample complexity. By stability assumption, $N$ verifies $\Vert A-\Pi_{k^\star_{A,N}}(A) \Vert_\F\leq \varepsilon$. Hence, letting $\varepsilon'=2\varepsilon-\Vert A-\Pi_{k^\star_{A,N}}(A) \Vert_\F \geq\varepsilon $, we also have by stability 

$$\forall A'\in \mathcal{D}(A,N,\varepsilon), \quad \mathbb{P}_{ A'}(\Vert \hat A_n-A'\Vert _\F\leq\varepsilon')\geq 1-\delta.$$ 

The remainder of the proof is identical to that of Theorem \ref{thm:full}. We only provide the proof for the multivariate regression case (simply replace $\Sigma$ by $\Gamma_\infty(A)$ for dynamical systems, and adjust the universal constant).

\textbf{1) Lower bounds that depends on $d_y$}: Consider the confusing models from ${\cal C}_{1}(A,k^\star_{A,N},\varepsilon)$:

$$\forall i\in[2^{k^\star_{A,N}d_y}], \quad A_i=A+\frac{2C\varepsilon'}{\sqrt {k^\star_{A,N}}} Q_iW_{-k^\star_{A,N}}^\top, \quad \text{ where } Q_i\in \mathcal{P}_{k^\star_{A,N}}^{d_y}.  $$

We apply Lemmas \ref{lem:log-lik} and \ref{lemma:data_process} as in the proof of Theorem \ref{thm:full} (replace $r$ by $k^\star_{A,N}$, and $\varepsilon$ by $\varepsilon'$), and derive following lower bound:

    $$\varepsilon'^2\ge \frac{\sigma^2\log(2)}{8}  \frac{\log (\frac{1}{\delta})+k^\star_{A,N}d_y}{N\underline{\lambda}_{k^\star_{A,N}}(\Sigma)}.$$

    % Let $\alpha$ be the universal constant hidden inside the $\gtrsim$ i.e such that 

    % $$\varepsilon'^2\geq \alpha \sigma^2\frac{\log (\frac{1}{\delta})+k^\star_{A,N}d_y}{N\underline{\lambda}_{k^\star_{A,N}}(\Sigma)}$$
    
    Since $\varepsilon\ge\Vert A-\Pi_{k^\star_{A,N}}(A) \Vert_\F$, we have:
    \begin{align*} \varepsilon'^2&=4\varepsilon^2+\Vert A-\Pi_{k^\star_{A,N}}(A) \Vert_\F^2-4\varepsilon\Vert A-\Pi_{k^\star_{A,N}}(A) \Vert_\F\\
&\leq 4\varepsilon^2 - 3\Vert A-\Pi_{k^\star_{A,N}}(A) \Vert_\F^2.
\end{align*} 

Combining the two previous inequalities yields: 
\begin{align*}
&4\varepsilon^2\geq \frac{\sigma^2\log(2)}{8}\frac{\log (\frac{1}{\delta})+k^\star_{A,N}d_y}{N\underline{\lambda}_{k^\star_{A,N}}(\Sigma)}+3\Vert A-\Pi_{k^\star_{A,N}}(A) \Vert_\F^2\\
&\implies \frac{32}{\log(2)}\varepsilon^2\geq \gamma_A^\delta(N),
\end{align*}
This finally implies that: 

$$N\geq\min\{n:\gamma_A^\delta(n)\leq \frac{32}{\log(2)}\varepsilon^2\}. $$

\textbf{2) Lower bound that depends on $d_x$}: The proof follows along the same lines as those in the proof of Theorem \ref{thm:full}. 

\end{proof}

% \subsection{Change-of-measure techniques}

\subsection{Extremal partial trace}

\begin{lemma}

For any $k\leq d$ and two matrices $Q\in \mathrm{St}_k^d(\mathbb{R}),\Sigma\in\mathbb{R}^{d\times d}$ symmetric positive definite, we have 
\begin{equation}
\sum_{i=d-k+1}^d\lambda_i(\Sigma)\leq \mathrm{Tr}(Q^\top\Sigma Q)\leq \sum_{i=1}^k\lambda_i(\Sigma).
\end{equation}
\label{lem:min_max}
\end{lemma}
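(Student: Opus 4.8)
The statement is the Ky Fan / Courant–Fischer type characterization of partial traces restricted to a $k$-dimensional orthonormal frame: among all $Q \in \mathrm{St}_k^d(\mathbb{R})$, the quantity $\mathrm{Tr}(Q^\top \Sigma Q)$ is maximized by aligning the columns of $Q$ with the top-$k$ eigenvectors of $\Sigma$ and minimized by aligning with the bottom-$k$ eigenvectors. I would prove the two inequalities as mirror images of each other, so the real content is one of them, say the upper bound.

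\textbf{Setup via the spectral decomposition.} First I would diagonalize $\Sigma = V \Lambda V^\top$ with $V$ orthogonal and $\Lambda = \mathrm{diag}(\lambda_1(\Sigma), \dots, \lambda_d(\Sigma))$ in decreasing order. Writing $P = V^\top Q \in \mathbb{R}^{d \times k}$, I note that $P^\top P = Q^\top V V^\top Q = Q^\top Q = I_k$, so $P$ is again semi-orthogonal, and
\[
\mathrm{Tr}(Q^\top \Sigma Q) = \mathrm{Tr}(P^\top \Lambda P) = \sum_{i=1}^d \lambda_i(\Sigma)\, w_i, \qquad w_i := \sum_{j=1}^k P_{ij}^2 = \|e_i^\top P\|_2^2.
\]
This reduces everything to a linear optimization over the weights $w_i$. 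The key observation is that these weights satisfy two constraints: each $w_i \in [0,1]$ because $w_i$ is the squared norm of the $i$-th row of a matrix whose columns are orthonormal (so any single row has norm at most $1$, as it is a coordinate projection of an orthonormal set), and $\sum_{i=1}^d w_i = \mathrm{Tr}(P^\top P) = \mathrm{Tr}(I_k) = k$.

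\textbf{The optimization argument.} Now the problem is purely combinatorial: maximize (resp. minimize) $\sum_i \lambda_i w_i$ over $w \in [0,1]^d$ with $\sum_i w_i = k$ and $\lambda_1 \ge \dots \ge \lambda_d$. Since the objective is linear and the eigenvalues are sorted, the maximum is clearly attained by putting all the mass on the largest eigenvalues, i.e.\ $w_1 = \dots = w_k = 1$ and the rest zero, giving $\sum_{i=1}^k \lambda_i(\Sigma)$; the minimum is attained by the opposite choice $w_{d-k+1} = \dots = w_d = 1$, giving $\sum_{i=d-k+1}^d \lambda_i(\Sigma)$. I would justify this rigorously with the standard exchange/rearrangement inequality: for any feasible $w$, $\sum_i \lambda_i w_i \le \sum_{i=1}^k \lambda_i$ follows by a telescoping bound using the prefix sums $\sum_{i\le m} w_i \le m$ (Abel summation), and symmetrically for the lower bound.

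\textbf{Anticipated obstacle.} There is no deep obstacle here; the only point requiring a little care is verifying the two constraints on the weights $w_i$ — in particular the bound $w_i \le 1$, which rests on the fact that the rows of a column-orthonormal matrix have Euclidean norm at most $1$. A clean way to see this is that $PP^\top$ is an orthogonal projection (it is symmetric and idempotent since $(PP^\top)^2 = P(P^\top P)P^\top = PP^\top$), so its diagonal entries $w_i = (PP^\top)_{ii}$ lie in $[0,1]$ and sum to $\mathrm{Tr}(PP^\top) = \mathrm{Tr}(P^\top P) = k$. Once these two facts are in hand, both inequalities follow immediately from the rearrangement argument, and the proof is complete.
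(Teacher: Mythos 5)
Your proof is correct, but it takes a different route from the paper in the most literal sense: the paper does not prove this lemma at all, it simply cites it as a classical result (``extremal partial trace'', 20.A.2 of Marshall and Olkin's book on majorization). What you have written is essentially the standard self-contained proof underlying that reference: diagonalize $\Sigma$, observe that the weights $w_i = (PP^\top)_{ii}$ are the diagonal of an orthogonal projection of rank $k$ (hence lie in $[0,1]$ and sum to $k$), and then solve the resulting linear program over that polytope. This buys the paper's reader a complete argument instead of a pointer to the literature, at the cost of a page of routine work. One small precision issue in your sketch of the final step: Abel summation with only the prefix bound $\sum_{i\le m} w_i \le m$ does \emph{not} telescope to $\sum_{i=1}^k \lambda_i(\Sigma)$ (it gives the weaker bound $\sum_{m=1}^{d-1}\lambda_m - (d-1-k)\lambda_d$); you need the combined bound $\sum_{i \le m} w_i \le \min(m,k)$, where the $\le k$ part comes from $\sum_i w_i = k$ and $w_i \ge 0$. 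Since you establish both constraints explicitly, this is a one-line repair, not a gap; alternatively you could note that a linear functional on the polytope $\{w \in [0,1]^d : \sum_i w_i = k\}$ attains its extrema at the vertices, which are exactly the $0$--$1$ vectors with $k$ ones.
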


\begin{proof}
    The proof of this classical result, sometimes referred to as \textit{extremal partial trace}, can be found in 20.A.2 of \cite{marshall1979inequalities}.
\end{proof}

\newpage
\section{Proofs of results presented in Section \ref{sec:denoising}}\label{app:sec5}

In this section, we first consider an estimator $\bar{A}$ of $A$ and derive a tight upper bound on $\Vert \Pi_k(\bar{A}) - A\Vert_\F$ for arbitrary values of $k$ (Lemma \ref{lem:trade_off_k}), highlighting the trade-off between estimation error and approximation error. We then address the problem of selecting the optimal value of $k$ to minimize this upper bound. While solving this optimization problem exactly would require knowledge of $A$, which is unknown by definition, we propose a universal thresholding rule that bypasses this limitation (Theorem \ref{thm:opt_best_k}). Finally in \textsection \ref{subsec:nn}, we explain how to apply the thresholding procedure to improve existing estimators. We illustrate the procedure for the estimator obtained via regression with nuclear norm regularization.

\subsection{Proof of Lemma \ref{lem:trade_off_k}}
\begin{proof} 

Let $k\in[\underline{d}]$, we have by the Eckart-Young's theorem:

$$
\Vert \Pi_k(A+ Z) -A-Z\Vert_\F^2 \leq  \Vert \Pi_k(A) - A - Z \Vert_\F^2.
$$ 

By decomposing the squares, we obtain: 

$$
    \Vert \Pi_k(A+Z) -A \Vert^2 \leq \Vert A-\Pi_k(A)\Vert_\F^2  +2 \langle \Pi_k(A+Z) -\Pi_k(A), Z \rangle.
$$

For the l.h.s, we have, by reverse triangle inequality,

\begin{align*}
    \Vert \Pi_k(A+Z) -A \Vert^2 
    &\geq \left| \Vert \Pi_k(A+Z) -\Pi_k(A)\Vert_\F -  \Vert A -\Pi_k(A)\Vert_\F \right|^2\\
    &=\Vert \Pi_k(A+Z) -\Pi_k(A)\Vert_\F^2 + \Vert A -\Pi_k(A)\Vert_\F^2 - 2\Vert \Pi_k(A+Z) -\Pi_k(A)\Vert_\F\Vert A -\Pi_k(A)\Vert_\F. 
\end{align*}

Hence, we get

$$\frac{1}{2}\Vert \Pi_k(A+Z) -\Pi_k(A)\Vert_\F^2\leq  \langle \Pi_k(A+Z) -\Pi_k(A), Z \rangle + \Vert \Pi_k(A+Z) -\Pi_k(A)\Vert_\F\Vert A -\Pi_k(A)\Vert_\F.$$

Furthermore, since $\Pi_k(A+Z) -\Pi_k(A)$ is of rank at most $2k$, we have by Lemma 2 of \cite{xiang2012optimal}

$$    \langle \Pi_k(A+Z) -\Pi_k(A), Z \rangle 
\leq \Vert \Pi_k(A+Z) -\Pi_k(A)\Vert_\F \Vert \Pi_{2k}(Z)\Vert_\F.
$$

% \begin{multline*}
%     <\Pi_k(A+Z) -\Pi_k(A), Z> \ 
% \\\leq \Vert \Pi_k(A+Z) -\Pi_k(A)\Vert_\F \Vert \Pi_{2k}(Z)\Vert_\F
% \end{multline*}
This implies: 

$$        \frac{1}{2}\Vert \Pi_k(A+Z) -\Pi_k(A)\Vert_\F \leq \Vert \Pi_{2k}(Z)\Vert_\F+ \Vert A -\Pi_k(A)\Vert_\F.
$$

We conclude by noting that:

\begin{align*}
\Vert \Pi_k(\bar A)-A\Vert_\F &\leq \Vert \Pi_k(A+Z) -\Pi_k(A)\Vert_\F + \Vert A -\Pi_k(A)\Vert_\F \\
&\leq 2\Vert \Pi_{2k}(Z)\Vert_\F + 3\Vert A -\Pi_k(A)\Vert_\F.
\end{align*}
\end{proof}

\subsection{Proof of Theorem \ref{thm:opt_best_k}}
\begin{proof} 

We first prove an auxiliary useful inequality.

Let $\alpha\geq 0$ and $K=\max\{i:s_i(\bar A)\geq (2+\alpha)\Vert Z\Vert_2\}$. Further define $k^{\star}=\max\{i:s_i(A)\geq (3+\alpha)\Vert Z\Vert_2\}$. By Weyl's inequality, we have:

\begin{itemize}
    \item $s_{K+1}(A)\leq s_{K+1}(\bar A) + \Vert Z\Vert_2\leq (3+\alpha)\Vert Z \Vert_2$, which implies that $k^{\star}\leq K$. 
    \item $\forall i\leq K, \ s_i(A)\geq s_i(\bar A)-\Vert Z\Vert_2\geq (1+\alpha)\Vert Z\Vert_2 $.
\end{itemize}  
By Lemma \ref{lem:trade_off_k}:
\begin{align}
    \frac{\Vert \Pi_{K}(\bar A)-A\Vert_\F^2}{18}
    &\leq K\Vert Z\Vert_2^2+ \sum_{i>K}s_i^2(A) = k^{\star}\Vert Z\Vert_2^2+\sum_{i>k^{\star}}s_i^2(A) +\psi\nonumber\\
    &\leq k^{\star}\Vert Z\Vert_2^2+\sum_{i>k^{\star}}s_i^2(A) \nonumber\\
    &\leq (3+\alpha)^2k^{\star}\Vert Z\Vert_2^2+\sum_{i>k^{\star}}s_i^2(A) \nonumber\\
    &= \min_{k\in[r]} \left\{(3+\alpha)^2k\Vert Z\Vert_2^2+ \sum_{i=k+1}^rs_i^2(A)\right\},\label{eq:ffr} 
\end{align}
where $\psi=\sum_{i=k^{\star}+1}^{K}\left(\Vert Z\Vert_2^2-s_i^2(A)\right)<0$. Let $\xi\geq 2\Vert Z\Vert_2$ and $\alpha=\frac{\xi}{\Vert Z\Vert_2}-2$. Then we have: $\xi =(2+\alpha)\Vert Z\Vert_2$. The Theorem is obtained by combining (\ref{eq:ffr}) and the observation that $(3+\alpha)\Vert Z\Vert_2=(1+\frac{\xi}{\Vert Z\Vert_2})\Vert Z\Vert_2\leq 2\xi$.
\end{proof}

\subsection{Thresholded nuclear norm estimator}\label{subsec:nn}

We demonstrate how Theorem \ref{thm:opt_best_k} can be applied to devise a rank-adaptive algorithm, starting from an estimator $\bar{A}$ obtained via regression with nuclear norm regularization. We restrict our attention to the case of multivariate regression with Gaussian noise. The estimator $\bar{A}$ is defined in \cite{negahban2011estimation} by: 

$$
\bar A = \arg\min_{B} \{\Vert Y-XB\Vert_\F ^2+\mu \Vert B\Vert_1\}.
$$
The performance of this estimator is well understood (see Corollary 10.14 in \cite{wainwright2019high}): if $\bar{A}$ is constructed using $n$ samples and with $\mu=10\sigma\sqrt{\lambda_\mathrm{max}(\hat \Sigma)}(\sqrt{\frac{d_x+d_y}{n}}+\sqrt{\frac{\log (\frac{1}{2\delta})}{2n}})$, then the error $Z=\bar{A}-A$ satisfies, with probability at least $1-\delta$:

$$2\Vert Z\Vert_2\leq \xi: = \frac{20\sigma\sqrt{\lambda_\mathrm{max}(\hat \Sigma)}(\sqrt{\frac{d_x+d_y}{n}}+\sqrt{\frac{\log (\frac{1}{2\delta})}{2n}})}{\lambda_\mathrm{min}(\hat \Sigma)}.$$

Combining this concentration result to the result of Theorem \ref{thm:opt_best_k} yields: 

\begin{corollary}\label{coro:nuclear}
The {\it thresholded nuclear norm regularized estimator} $\hat{A}_n:=\bar{A}(\xi)$ satisfies, with probability at least $1-\delta:
\Vert \hat{A}_n-A\Vert_\F^2\lesssim \min_{k\in[r]}\bigg[ k\sigma^2\kappa(\hat\Sigma)\frac{d_x+d_y+\log(\frac{1}{\delta})}{n\lambda_\mathrm{min}(\hat\Sigma)}+ \sum_{i>k}s_i^2(A)\bigg].$
\end{corollary}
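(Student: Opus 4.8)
The plan is to feed the cited operator-norm concentration bound for the nuclear-norm estimator directly into Theorem~\ref{thm:opt_best_k}. By Corollary 10.14 in \cite{wainwright2019high}, the event $\mathcal{G} = \{2\Vert Z\Vert_2 \le \xi\}$, with $Z = \bar{A}-A$ and $\xi$ as stated, holds with probability at least $1-\delta$. I would condition on $\mathcal{G}$ throughout: on this event the hypothesis $\xi \ge 2\Vert Z\Vert_2$ of Theorem~\ref{thm:opt_best_k} is satisfied, so applying that theorem to the thresholded estimator $\hat{A}_n = \bar{A}(\xi)$ yields
$$
\Vert \hat{A}_n - A\Vert_\F^2 \le 18\min_{k\in[r]}\Big(4k\xi^2 + \sum_{i>k} s_i^2(A)\Big).
$$

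It then remains only to simplify $\xi^2$. Squaring the definition of $\xi$ and applying $(a+b)^2 \le 2(a^2+b^2)$ to separate the dimension term from the confidence term, I would bound $(\sqrt{(d_x+d_y)/n} + \sqrt{\log(1/(2\delta))/(2n)})^2 \lesssim (d_x + d_y + \log(\frac{1}{\delta}))/n$, where the factor $\tfrac12$ in front of the log and the replacement of $\log(1/(2\delta))$ by $\log(1/\delta)$ are absorbed into the universal constant. This gives
$$
\xi^2 \lesssim \sigma^2\,\frac{\lambda_\mathrm{max}(\hat\Sigma)}{\lambda_\mathrm{min}(\hat\Sigma)^2}\cdot\frac{d_x+d_y+\log(\frac{1}{\delta})}{n} = \sigma^2\,\kappa(\hat\Sigma)\,\frac{d_x+d_y+\log(\frac{1}{\delta})}{n\,\lambda_\mathrm{min}(\hat\Sigma)},
$$
using $\lambda_\mathrm{max}(\hat\Sigma)/\lambda_\mathrm{min}(\hat\Sigma) = \kappa(\hat\Sigma)$ for the symmetric positive-definite matrix $\hat\Sigma$.

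Substituting this estimate back into the minimum, the constant $72 = 18\times 4$ together with the constants from the previous display are absorbed into $\lesssim$, and the factor $k$ simply multiplies the per-$k$ bound on $\xi^2$; this reproduces exactly the stated bound, valid on $\mathcal{G}$ and hence with probability at least $1-\delta$.

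I do not anticipate a genuine obstacle here: the corollary is a direct consequence of Theorem~\ref{thm:opt_best_k} once the cited concentration bound is in hand, with all the analytic work already carried out in that theorem and in \cite{wainwright2019high}. The only points demanding care are the algebraic identification of $\lambda_\mathrm{max}(\hat\Sigma)/\lambda_\mathrm{min}(\hat\Sigma)^2$ with $\kappa(\hat\Sigma)/\lambda_\mathrm{min}(\hat\Sigma)$, and the bookkeeping that merges the additive dimension and confidence terms and the various universal constants into the single $\lesssim$ relation.
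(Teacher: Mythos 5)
Your proposal is correct and follows exactly the paper's own route: the paper's proof is literally to plug the threshold $\xi$ from the cited concentration bound into Theorem~\ref{thm:opt_best_k}, and your write-up just spells out the algebra (the $(a+b)^2\le 2(a^2+b^2)$ step and the identification $\lambda_\mathrm{max}(\hat\Sigma)/\lambda_\mathrm{min}(\hat\Sigma)^2=\kappa(\hat\Sigma)/\lambda_\mathrm{min}(\hat\Sigma)$) that the paper leaves implicit.
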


\begin{proof}
    We plug in the value of $\xi=\frac{20\sigma\sqrt{\lambda_\mathrm{max}(\hat \Sigma)}(\sqrt{\frac{d_x+d_y}{n}}+\sqrt{\frac{\log (\frac{1}{2\delta})}{2n}})}{\lambda_\mathrm{min}(\hat \Sigma)}$ inside the upper bound given by Theorem \ref{thm:opt_best_k}. 
\end{proof}

In \cite{bunea2011optimal}, the authors derive, in their Theorem 12, an error upper bound for the plain nuclear norm regularized estimator (without thresholding). Their upper bound is similar to the one presented in the above corollary, but is $\kappa(\hat\Sigma)$ times larger. Hence, incorporating the singular value thresholding procedure results in an algorithm with stronger performance guarantees.

\newpage
\section{Proofs of results presented in Section \ref{sec:lse}}\label{app:sec6}

In this section, we provide the proofs of all results presented in Section \ref{sec:lse}. These results concern the performance of our algorithms \RLSE\ and \TLSE. We first observe that all stated inequalities in Section \ref{sec:lse} are trivially true when $\hat{\Sigma}$ is singular since then each upper bound becomes infinity. Therefore, we will assume from now on that $\hat{\Sigma}$ is non-singular. In that case, the LSE is expressed as

$$\hat{A}_n = Y^\top X(X^\top X)^{-1}=A+E^\top X(X^\top X)^{-1}:=A+Z,$$

allowing us to apply denoising results from Section \ref{sec:denoising}. Since the norms considered are invariant by transposition, we will consider $Z:=(X^\top X)^{-1}X^\top E$ for notational convenience.

\subsection{Multivariate regression}

\subsubsection{Proof of Lemma \ref{lem:pi_k_lse} and Equation (\ref{eq:thresh_LR})}

\begin{proof} 

Let $\delta\in(0,1)$ and $k\in[\underline{d}]$. We have:
\begin{align*}
    Z &:= (X^\top X)^{-1} X^\top E = (X^\top X)^{-1} (X^\top X)(X^\top X)^{-1} X^\top E.
\end{align*}

Furthermore, 

\begin{itemize}
    \item Lemma 3 of \cite{bunea2011optimal} states: with probability at least $1-\delta$,
    $$\Vert X(X^\top X)^{-1} X^\top E\Vert_2 \leq \sigma\left(\sqrt{d_x}+\sqrt{d_y}+\sqrt{\log\left(\frac{1}{\delta}\right)}\right). $$
    \item For all matrices $P,Q$, we have 
    $$\Vert \Pi_k(PQ)\Vert_\F^2=\sum_{i=1}^ks_i^2(PQ)\leq \sum_{i=1}^ks_i^2(P)s_1^2(Q)=\Vert \Pi_k(P)\Vert_\F^2\Vert Q\Vert_2^2.$$

\end{itemize}

Hence, we can write
\begin{align*}
        \Vert \Pi_k(Z)\Vert_\F^2&\leq\Vert \Pi_k\left((X^\top X)^{-1} X^\top\right)\Vert_\F^2\Vert X(X^\top X)^{-1} X^\top E\Vert_2^2\\
        &\leq\left(\sum_{i=d_x-k+1}^{d_x}\frac{1}{n\lambda_i(\hat{\Sigma})}\right) \Vert X(X^\top X)^{-1} X^\top E\Vert_2^2\\
        &\leq \frac{\sum_{i=d_x-k+1}^{d_x}\frac{1}{\lambda_i(\hat \Sigma)}}{k}\frac{ k\sigma^2(\sqrt{d_x}+\sqrt{d_y}+\sqrt{\log(\frac{1}{\delta})})^2}{n},
    \end{align*}

which concludes the proof of Lemma \ref{lem:pi_k_lse}.
    
    The proof of Equation (\ref{eq:thresh_LR}) is similar and uses the sub-multiplicativity of the operator norm to obtain  
        \begin{align*}
        \Vert Z\Vert_2&\leq \Vert (X^\top X)^{-1} X^\top\Vert_2 \Vert X(X^\top X)^{-1} X^\top E\Vert_2\\
        &\leq \frac{1}{\sqrt{n\lambda_\mathrm{min}(\hat\Sigma)}}  \Vert X(X^\top X)^{-1} X^\top E\Vert_2\\
&\leq\frac{\sigma\left(\sqrt{d_x}+\sqrt{d_y}+\sqrt{\log(\frac{1}{\delta})}\right)}{\sqrt{n\lambda_\mathrm{min}(\hat\Sigma)}}.
    \end{align*}
\end{proof}

\subsubsection{General error bounds for any $\hat \Sigma$}
Before stating the proofs of our main Theorems \ref{thm:rlse-gauss} and \ref{thm:tlse-gauss} presented in our main body, we start by stating general concentration-free result, valid for any $\hat{\Sigma}$. 

\begin{theorem}
    Assume that $\mathrm{rank}(A)\leq r$, the \RLSE~satisfies, with probability at least $1-\delta$:
    
    $$ \Vert \hat{A}_n-A\Vert_\F^2\leq 6\sqrt 2 r\sigma^2 \frac{\bar d+\log(\frac{1}{\delta})} {n\underline{\lambda}^H_r(\hat \Sigma)}.$$
\label{thm:rlse-gen}
\end{theorem}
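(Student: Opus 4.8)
The statement is a purely deterministic reduction composed with a single high-probability event, so the plan is to keep these two ingredients separate. Writing the LSE as $\bar A = A + Z$ with $Z = (X^\top X)^{-1}X^\top E$ and recalling that $\hat A_n = \Pi_r(\bar A)$, the first move is to invoke the decomposition Lemma \ref{lem:trade_off_k} at $k = r$. The key simplification is that $\mathrm{rank}(A)\le r$ forces $\Pi_r(A) = A$, so the approximation term $3\Vert A - \Pi_r(A)\Vert_\F$ vanishes identically and the bound collapses to the estimation term alone, namely $\Vert \hat A_n - A\Vert_\F \le 2\sqrt{2}\,\Vert \Pi_r(Z)\Vert_\F$. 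Squaring gives $\Vert \hat A_n - A\Vert_\F^2 \le 8\,\Vert \Pi_r(Z)\Vert_\F^2$, which isolates exactly the quantity controlled by the concentration lemma.

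The second move is to feed in Lemma \ref{lem:pi_k_lse} at $k = r$: with probability at least $1-\delta$, one has $\Vert \Pi_r(Z)\Vert_\F^2 \le r\sigma^2\big(\sqrt{d_x}+\sqrt{d_y}+\sqrt{\log(1/\delta)}\big)^2/\big(n\,\underline{\lambda}^H_r(\hat\Sigma)\big)$. Substituting this into the previous display and intersecting with the single event of probability $1-\delta$ produces a bound of the right shape, with the harmonic-mean factor $\underline{\lambda}^H_r(\hat\Sigma)$ appearing precisely because the lemma is stated in those terms, and with no further randomness to track since the reduction is deterministic. This is also why the result is genuinely concentration-free in $\hat\Sigma$: no spectral control of $\hat\Sigma$ is invoked, only the operator-norm deviation of the projected noise from Lemma 3 of \cite{bunea2011optimal}.

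The third and final move is to reconcile the dimensional factor $\big(\sqrt{d_x}+\sqrt{d_y}+\sqrt{\log(1/\delta)}\big)^2$ with the stated $\bar d + \log(1/\delta)$, using $\sqrt{d_x},\sqrt{d_y}\le \sqrt{\bar d}$ together with an AM--GM estimate on the cross terms, and then to collect the resulting numerical constant. I expect the entire conceptual content to be carried by the two lemmas, so the only real obstacle is this last step of constant bookkeeping: one must use the reduction in its tight form (passing through $\Vert \Pi_{2r}(Z)\Vert_\F \le \sqrt{2}\,\Vert \Pi_r(Z)\Vert_\F$ rather than a crude $(a+b+c)^2\le 3(a^2+b^2+c^2)$ split) and bound the cross terms carefully to land on the clean coefficient $6\sqrt{2}$ rather than a looser constant. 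Everything else is a direct chaining of the two displayed inequalities under the exact-rank hypothesis.
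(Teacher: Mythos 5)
Your strategy coincides exactly with the paper's own proof: apply Lemma \ref{lem:trade_off_k} at $k=r$, use $\mathrm{rank}(A)\le r$ to make $\Vert A-\Pi_r(A)\Vert_\F$ vanish, then plug in Lemma \ref{lem:pi_k_lse} at $k=r$ on the single event of probability $1-\delta$. Up to the value of the universal constant, this chain is correct and complete, and your handling of the squares is in fact cleaner than the paper's.

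The genuine gap is the final step you deferred: no bookkeeping, however careful, lands on $6\sqrt{2}$. Your square-consistent reduction gives $\Vert \hat A_n-A\Vert_\F^2\le 8\,\Vert \Pi_r(Z)\Vert_\F^2$ (and routing through $\Vert\Pi_{2r}(Z)\Vert_\F$ with Lemma \ref{lem:pi_k_lse} at $k=2r$ reproduces the same factor $8$), so to reach $6\sqrt{2}$ you would need $\bigl(\sqrt{d_x}+\sqrt{d_y}+\sqrt{\log(1/\delta)}\bigr)^2\le \tfrac{6\sqrt{2}}{8}\bigl(\bar d+\log(1/\delta)\bigr)$ with $\tfrac{6\sqrt 2}{8}\approx 1.06$. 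But this ratio is at least $\bigl(d_x+d_y+\log(1/\delta)\bigr)/\bigl(\bar d+\log(1/\delta)\bigr)$, and taking $d_x=d_y=\bar d$ and $\log(1/\delta)=\bar d/4$ it equals $5$; the best constant this route can deliver is therefore $8\times 5=40$. The reason the paper nevertheless states $6\sqrt{2}$ is that its own proof is dimensionally inconsistent: it bounds the \emph{unsquared} error by $2\sqrt{2}\,\Vert\Pi_r(Z)\Vert_\F$ and then substitutes the bound on $\Vert\Pi_r(Z)\Vert_\F^2$, yielding $C_{\RLSE}=2\sqrt{2}\cdot 3=6\sqrt{2}$ by conflating a norm with its square (it also silently replaces $d_x+d_y$ by $\bar d$ at the end). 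So your proposal fails to prove the theorem only in the sense that the theorem's stated constant is itself not supported by any version of this argument; to make your write-up honest, you should either state the bound with a constant of order $40$ (or with $\lesssim$, which is all that downstream results use), rather than promising that cross-term estimates will produce $6\sqrt{2}$.
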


\begin{proof}
We first apply Lemma \ref{lem:pi_k_lse} with $k=r$ which gives with probability at least $1-\delta$
$$\Vert \Pi_r(Z)\Vert_\F^2\leq \frac{ r\sigma^2(\sqrt{d_x}+\sqrt{d_y}+\sqrt{\log(\frac{1}{\delta})})^2}{n\underline{\lambda}^H_r(\hat\Sigma)}.$$

We plug this inequality in Lemma \ref{lem:trade_off_k}, also applied to $k=r$, which gives 

\begin{align*}
\Vert \hat{A}_n - A\Vert_\F &=\Vert \Pi_r(\bar A)-A\Vert_\F \leq 2\sqrt{2}\Vert \Pi_{r}(Z)\Vert_\F \\
&\leq 2\sqrt{2}\frac{ r\sigma^2(\sqrt{d_x}+\sqrt{d_y}+\sqrt{\log(\frac{1}{\delta})})^2}{n\underline{\lambda}^H_r(\hat\Sigma)}\leq C_\RLSE \frac{ r\sigma^2(d_x+d_y+\log(\frac{1}{\delta}))}{n\underline{\lambda}^H_r(\hat\Sigma)},
\end{align*}

where $C_\RLSE=6\sqrt{2}$.

\end{proof}

In the adaptive case, one has the following 

\begin{theorem}
   The \TLSE~ $\hat{A}_n:=\bar{A}(\xi_{\textup{MR}})$ satisfies, with probability at least $1-\delta$:
   
    $$
\Vert \hat{A}_n-A\Vert_\F^2\leq 864\min_{k\in[r] }\left( k\sigma^2\frac{\bar d+\log(\frac{1}{\delta})}{n\lambda_\mathrm{min}(\hat \Sigma)}+ \sum_{i>k}s_i^2(A)\right).
$$
\label{thm:tlse-gen}
\end{theorem}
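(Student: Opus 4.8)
The plan is to read this bound off directly from two facts already in hand: the operator-norm concentration of the \LSE~error recorded in Equation~(\ref{eq:thresh_LR}), and the deterministic denoising guarantee of Theorem~\ref{thm:opt_best_k}. The design of the threshold $\xi_{\textup{MR}}$ is exactly what makes the two fit together: it is chosen so that, on a $(1-\delta)$-probability event, the hypothesis $\xi \ge 2\Vert Z\Vert_2$ of Theorem~\ref{thm:opt_best_k} is met, after which the trade-off between estimation and approximation error is handled entirely by that theorem.

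Concretely, I would first invoke Equation~(\ref{eq:thresh_LR}): with probability at least $1-\delta$, the error $Z=(X^\top X)^{-1}X^\top E$ of the \LSE~satisfies $2\Vert Z\Vert_2 \le \xi_{\textup{MR}}$, and I would condition on this event for the rest of the argument. On it, the pair $(\bar A,\xi_{\textup{MR}})$ satisfies the hypothesis of Theorem~\ref{thm:opt_best_k} with $\xi=\xi_{\textup{MR}}$, so applying that theorem to the \TLSE~$\hat A_n=\bar A(\xi_{\textup{MR}})$ yields
$$
\Vert \hat A_n-A\Vert_\F^2 \le 18\min_{k\in[r]}\Big(4k\,\xi_{\textup{MR}}^2+\sum_{i>k}s_i^2(A)\Big).
$$
It then remains to substitute the explicit value $\xi_{\textup{MR}}^2 = 4\sigma^2(\sqrt{d_x}+\sqrt{d_y}+\sqrt{\log(1/\delta)})^2/(n\lambda_{\min}(\hat\Sigma))$ into the estimation term. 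Using the Cauchy--Schwarz (power-mean) bound $(\sqrt{d_x}+\sqrt{d_y}+\sqrt{\log(1/\delta)})^2 \le 3(d_x+d_y+\log(1/\delta))$, followed by $d_x+d_y \le 2\bar d$, I would fold everything into a single universal constant multiplying $k\sigma^2(\bar d+\log(1/\delta))/(n\lambda_{\min}(\hat\Sigma))$, carrying the approximation term $\sum_{i>k}s_i^2(A)$ through unchanged and preserving the minimum over $k$.

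There is essentially no conceptual obstacle: all the difficulty has already been absorbed into Theorem~\ref{thm:opt_best_k} (the estimation/approximation trade-off delivered by universal singular value thresholding) and into the concentration statement~(\ref{eq:thresh_LR}). The only step requiring genuine attention is the bookkeeping of the universal constant and the harmless passage from $d_x+d_y$ to $\bar d$; collecting the factor $18\cdot 4$ from Theorem~\ref{thm:opt_best_k} together with the factors coming from $\xi_{\textup{MR}}^2$ and the power-mean step produces the explicit constant stated in the theorem.
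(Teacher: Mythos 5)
Your proposal follows exactly the paper's own proof: condition on the event $2\Vert Z\Vert_2 \le \xi_{\textup{MR}}$ from Equation~(\ref{eq:thresh_LR}), apply Theorem~\ref{thm:opt_best_k} with $\xi=\xi_{\textup{MR}}$, substitute the explicit value of $\xi_{\textup{MR}}^2$, and use the power-mean inequality $(\sqrt{d_x}+\sqrt{d_y}+\sqrt{\log(1/\delta)})^2 \le 3(d_x+d_y+\log(1/\delta))$ to collect the constant $18\times 16\times 3 = 864$. The only difference is that you add the explicit step $d_x+d_y\le 2\bar d$ (which would strictly give $1728$ in front of $\bar d+\log(1/\delta)$), whereas the paper's proof simply stops at $d_x+d_y+\log(1/\delta)$ with constant $864$; your bookkeeping is if anything slightly more careful than the paper's, and the approach is the same.
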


\begin{proof}

We proved above that $\xi_{\textup{MR}} :=  \frac{2\sigma\left(\sqrt{d_x}+\sqrt{d_y}+\sqrt{\log(\frac{1}{\delta})}\right)}{\sqrt{n\lambda_\mathrm{min}(\hat\Sigma)}}\geq 2\Vert Z\Vert_2$ with probability at least $1-\delta$. We can plug this value inside the upper bound given by Theorem \ref{thm:opt_best_k} to obtain 

\begin{align*}
    \Vert \hat{A}_n - A\Vert_\F^2&=\Vert \bar A(\xi_{\textup{MR}}) - A\Vert_\F\\ 
    &\leq 18\min_{k\in[r]}\left( 4 k\xi_{\textup{MR}}^2+ \sum_{i>k}s_i^2(A)\right) \\
    &\leq 18\min_{k\in[r]}\left( 4k\frac{4\sigma^2\left(\sqrt{d_x}+\sqrt{d_y}+\sqrt{\log(\frac{1}{\delta})}\right)}{n\lambda_\mathrm{min}(\hat\Sigma)}^2+ \sum_{i>k}s_i^2(A)\right) \\
    &\leq C_\TLSE\min_{k\in[r]}\left( \frac{k\sigma^2\left(d_x+d_y+\log(\frac{1}{\delta})\right)}{n\lambda_\mathrm{min}(\hat\Sigma)}+ \sum_{i>k}s_i^2(A)\right),
\end{align*}

where $C_\TLSE=18\times16\times3=864$.
\end{proof}

To obtain matching lower and upper bounds, we need to further apply concentration results uniformly on the spectrum of $\hat \Sigma$. Such results have recently been studied in \cite{barzilai2024simple} and applied to Gaussian inputs. We collect some of their findings in Appendix \ref{app:stability} and leverage them to prove Theorems \ref{thm:rlse-gauss} and \ref{thm:tlse-gauss}. 

\subsubsection{Performance of \RLSE: proof of Theorem \ref{thm:rlse-gauss}}

\begin{proof}
    Let $n\geq 288(d_x+\log(\frac{2}{\delta}))$. We first apply Theorem \ref{thm:rlse-gen} which gives with probability at least $1-\delta/2$:
    $$ \Vert \hat{A}_n-A\Vert_\F^2\leq 6\sqrt 2 r\sigma^2 \frac{\bar d+\log(\frac{2}{\delta})} {n\underline{\lambda}^H_r(\hat \Sigma)}.$$
    
    By Lemma \ref{lem:cov-concentration-1}, we can replace $\hat \Sigma$ by $\Sigma$ up to a universal constant with probability at least $1-\delta/2$.

    Combining both results concludes the proof of Theorem \ref{thm:rlse-gauss}.
    
\end{proof} 

\subsubsection{Performance of \TLSE: proof of Theorem \ref{thm:tlse-gauss}}
\begin{proof}
Let $n\geq 288(d_x+\log(\frac{2}{\delta}))$. We first apply Theorem \ref{thm:tlse-gen} which gives with probability at least $1-\delta/2$:
    $$ \Vert \hat{A}_n-A\Vert_\F^2\leq 864\min_{k\in[r] }\left( k\sigma^2\frac{\bar d+\log(\frac{2}{\delta})}{n\lambda_\mathrm{min}(\hat \Sigma)}+ \sum_{i>k}s_i^2(A)\right).$$
    
    By Lemma \ref{lem:cov-concentration-1}, we can replace $\hat \Sigma$ by $\Sigma$ up to a universal constant with probability at least $1-\delta/2$.

    Combining both results concludes the proof of Theorem \ref{thm:tlse-gauss}.

\end{proof}
\subsection{Linear system identification}

\subsubsection{Proof of Lemma \ref{lem:pi_k_lse_lds} and  Equation (\ref{eq:thresh_LDS})}
\begin{proof} 

Let $n$ verifying inequality (\ref{eq:lti_condition}). 

We first prove (i).
Let $\delta\in (0,1)$ and $k\in[\underline{d}]$.

We use a slightly different decomposition of the LSE error compared to the multivariate regression case. This decomposition will allow us to apply recent concentration results.

$$
    Z = (X^\top X)^{-1} X^\top E = (X^\top X)^{-\frac{1}{2}} (X^\top X)^{-\frac{1}{2}} X^\top E.
$$

In the proof of Theorem 3 of \cite{jedra2022finite}, it is shown that

    \begin{equation}
        \mathbb{P}\left[\Vert (X^\top X)^{-\frac{1}{2}} X^\top E\Vert_2\geq \sigma\sqrt{d_x+\log\left(\frac{1}{\delta}\right)}\right]\leq \delta,
        \label{eq:err-lti}
    \end{equation} 

    when the following condition is verified
    
  $$\lambda_\mathrm{min}\left(\sum_{i=0}^{n-2}\Gamma_i(A)\right)\ge c_0\sigma^4\left(\sum_{i\geq 0}\Vert A^i\Vert_2\right)^2\left(d_x+\log\left(\frac{1}{\delta}\right)\right).$$

  for some universal constant $c_0>0$.

  Applying items (i) and (iii) of Lemma \ref{lem:properties-Gamma}, we can replace $\lambda_\mathrm{min}\left(\sum_{i=0}^{n-2}\Gamma_i(A)\right)$ by $n\lambda_\mathrm{min}(\Gamma_\infty(A))$ and $\left(\sum_{i\geq 0}\Vert A^i\Vert_2\right)^2$ by $\Vert \Gamma_\infty(A)\Vert_2^3$ up to some positive universal constants which we incorporate into $c_0$. Therefore, verifying inequality (\ref{eq:lti_condition}) is sufficient to obtain Equation \ref{eq:err-lti}.

    In that case, using similar arguments as in the proof of Lemma \ref{lem:pi_k_lse}, we have with probability at least $1-\delta$,
    
    \begin{align*}
        \Vert \Pi_k(Z)\Vert_\F^2&\leq\Vert \Pi_k\left((X^\top X)^{-\frac{1}{2}}\right)\Vert_\F^2\Vert (X^\top X)^{-\frac{1}{2}} X^\top E\Vert_2^2\\        
        &\leq \frac{\sum_{i=d_x-k+1}^{d_x}\frac{1}{\lambda_i(\hat \Sigma)}}{k}\frac{ k\sigma^2(d_x+\log(\frac{1}{\delta}))^2}{n},
    \end{align*}

We now prove (ii). Since $\Gamma_\infty(A) \succ 1$ then $\frac{\Vert \Gamma_\infty(A)\Vert_2^3}{\lambda_{\min}(\Gamma_\infty(A))}\geq \Vert \Gamma_\infty(A) \Vert_2$. Therefore verifying inequality (\ref{eq:lti_condition}) ensures that $n\geq 2\log(2)\Vert \Gamma_\infty(A)\Vert_2$. By (iii) of Theorem \ref{thm:cov-concentration-2}, we obtain the desired result. This concludes the proof of Lemma \ref{lem:pi_k_lse_lds}.

The proof of Equation (\ref{eq:thresh_LDS}) uses the sub-multiplicativity of the operator norm to obtain 
\begin{align*}
    \Vert Z\Vert_2 
    &\leq  \Vert(X^\top X)^{-\frac{1}{2}}\Vert_2 \Vert (X^\top X)^{-\frac{1}{2}} X^\top E\Vert_2 \leq \sigma\sqrt{\frac{ d_x+\log(\frac{1}{\delta})}{n\lambda_\mathrm{min}(\hat \Sigma)}}.
\end{align*}

    \end{proof}

\subsubsection{Performance of \RLSE: proof of Theorem \ref{thm:rlse-lti} }
\begin{proof}

We first apply both (i) and (ii) of Lemma \ref{lem:pi_k_lse_lds} with $k=r$, which gives with probability at least $1-\delta$,
$$\Vert \Pi_r(Z)\Vert_\F^2\lesssim \frac{ r\sigma^2(\sqrt{d_x}+\sqrt{d_y}+\sqrt{\log(\frac{1}{\delta})})^2}{n\underline{\lambda}^H_r(\Gamma_\infty(A))}.$$

We plug this inequality in Lemma \ref{lem:trade_off_k}, also applied to $k=r$, which gives $\Vert \Pi_r(\bar A)-A\Vert_\F \leq 2\sqrt{2}\Vert \Pi_{r}(Z)\Vert_\F $ to obtain the desired result.   
\end{proof}

\subsubsection{Performance of \TLSE: proof of Theorem \ref{thm:tlse-lti}}

\begin{proof}
We proved above that $\xi_{\textup{SysID}} :=  2\sigma\sqrt{\frac{ d_x+\log(\frac{1}{\delta})}{n\lambda_\mathrm{min}(\hat \Sigma)}}\geq 2\Vert Z\Vert_2$ with probability at least $1-\delta$. We can plug this value inside the upper bound given by Theorem \ref{thm:opt_best_k} to obtain 

$$
\Vert \hat{A}_n-A\Vert_\F^2\lesssim \min_{k \in [r]}\left( \frac{k\sigma^2(d_x+\log(\frac{1}{\delta}))}{n\lambda_\mathrm{min} (\hat \Sigma)}+ \sum_{i>k}s_i^2(A)\right).
$$

We then use (ii) of Lemma \ref{lem:pi_k_lse_lds} to replace $\hat \Sigma $ by $\Gamma_\infty(A)$ up to a positive universal constant and obtain the desired result.
\end{proof}

\newpage
\section{Stability of \RLSE\ and \TLSE\ 
%and Tighter Lower Bounds
}\label{app:stability}

{
In this section, we present results showing that \RLSE~and \TLSE~are stable in the sense of the definitions presented in \textsection\ref{sec:lb}. The results follow from Theorems \ref{thm:rlse-gauss}   and Theorem \ref{thm:tlse-gauss} but require the intermediate step of establishing concentration bounds on the spectrum empirical covariance matrix $\hat{\Sigma}$ in both the multivariate regression and system identification settings.  In Lemma \ref{lem:cov-concentration-1} and Theorem \ref{thm:cov-concentration-2}, we present concentration on the spectrum of such matrices. Equipped with these concentration bounds we can revisit our guarantees for \RLSE~and \TLSE~ to provide upper bounds that only exhibit a dependence on the true covariance matrix $\Sigma$ in the case multivariate regression and on $\Gamma_\infty(A)$ in the case of linear system identification. 
}

%We conclude this section by proposing improved neighborhoods, leading to sample complexity lower bounds tighter than those presented in Theorems \ref{thm:full} and \ref{thm:reduced}.

% We start with identifying the performance of R-LSE and T-LSE on the neighborhoods defined in Definitions \ref{def:full} and \ref{def:full2} in order to derive their respective sample complexity.
\subsection{Concentration of the empirical covariance matrix}
% To derive the sample complexity of \RLSE~and \TLSE~, we need to consider settings where $\hat{\Sigma}$ can be replaced by $\Sigma$, with high probability, such that the error upper bounds are not random variables. In the multivariate regression case, we have the following known result.
% For simplicity, we will consider the well-studied Gaussian case with $n\geq d_x+\log(\frac{1}{\delta})$ such that $\hat{\Sigma}$ can be replaced by $\Sigma$ in our upper bounds
To obtain sample complexity upper bounds for   \RLSE~and \TLSE~, we need to derive concentration bounds on the entire spectrum of $\hat{\Sigma}$. The derivation of such concentrations depends a lot on the collection of random variables $x_1, \dots, x_n$.

\paragraph{Multivariate regression.} When $x_1, \dots, x_n$ are i.i.d. gaussian random variables with zero mean and covariance $\Sigma$, we have the following result.

%Below, we present Lemma \ref{lem:cov-concentration-1}

\begin{lemma}[Theorem 8 of \cite{barzilai2024simple}]\label{lem:cov-concentration-1}
    Let $X\in\mathbb{R}^{n\times d_x}$, with $d_x\leq n$, be a matrix whose rows $x_i\sim \mathcal{N}(0,\Sigma)$ are i.i.d. Then for any $\delta>0$, it holds with probability at least $1-\delta$,
    
    $$\forall i\in[d_x], \quad |\lambda_i(\hat{\Sigma})-\lambda_i(\Sigma)|\leq 2\varepsilon+\varepsilon^2, \quad \text{where} \quad \varepsilon=\sqrt{\frac{d_x}{n}}+\sqrt{\frac{2\log(\frac{1}{\delta})}{n}}.$$
    % Hence, when $n\geq 288(d+\log(\frac{1}{\delta}))$ then $\varepsilon\leq \frac{1}{6}$ and therefore

    % $$\forall i\in[d], \quad \frac{1}{2}\lambda_i(\Sigma)\leq\lambda_i(\hat{\Sigma}). $$
    
    In particular, for any $\delta > 0$, provided $n\geq 288(d_x+\log(\frac{1}{\delta}))$, it holds with probability at least $1-\delta$,
    
    $$\forall i\in[d], \quad \frac{1}{2}\lambda_i(\Sigma)\leq\lambda_i(\hat{\Sigma}).$$
    %$n\geq 288(d+\log(\frac{1}{\delta}))$ then $\varepsilon\leq \frac{1}{6}$ 
\end{lemma}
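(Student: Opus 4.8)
The plan is to recognize that this is precisely Theorem~8 of \cite{barzilai2024simple}, so the task reduces to reconstructing its (standard) argument, which proceeds by whitening followed by a \emph{relative} (multiplicative) perturbation bound rather than an additive one. First I would write $x_i = \Sigma^{1/2} g_i$ with $g_i \sim \mathcal{N}(0, I_{d_x})$ i.i.d., so that, stacking the $g_i$ as rows of $G \in \RR^{n\times d_x}$, we have $\hat\Sigma = \Sigma^{1/2}\hat G\,\Sigma^{1/2}$ where $\hat G := \frac1n G^\top G$ is the empirical covariance of an isotropic Gaussian sample. This congruence is the key structural reduction: it replaces an anisotropic problem by an isotropic one, at the cost of having to pass back through $\Sigma^{1/2}$ at the end.

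The main probabilistic input is a sharp non-asymptotic control of the extreme singular values of the standard Gaussian matrix $G$. By Gaussian concentration of the $1$-Lipschitz maps $G \mapsto s_{\max}(G)$ and $G \mapsto s_{\min}(G)$, together with Gordon's comparison inequality bounding their expectations by $\sqrt n \pm \sqrt{d_x}$, one obtains that with probability at least $1-\delta$,
\[
\sqrt n - \sqrt{d_x} - \sqrt{2\log(\tfrac1\delta)} \;\le\; s_{\min}(G) \;\le\; s_{\max}(G) \;\le\; \sqrt n + \sqrt{d_x} + \sqrt{2\log(\tfrac1\delta)}.
\]
Squaring and dividing by $n$, and writing $\varepsilon = \sqrt{d_x/n} + \sqrt{2\log(1/\delta)/n}$, this says that every eigenvalue of $\hat G$ lies in $[(1-\varepsilon)^2,(1+\varepsilon)^2]$, i.e. $(1-\varepsilon)^2 I \preceq \hat G \preceq (1+\varepsilon)^2 I$, whence $\|\hat G - I\|_2 \le 2\varepsilon + \varepsilon^2 =: \eta$.

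I would then transfer this estimate back through the congruence. Since conjugation by the symmetric PSD matrix $\Sigma^{1/2}$ preserves the Loewner order, the two-sided control of $\hat G$ yields $(1-\eta)\Sigma \preceq \hat\Sigma \preceq (1+\eta)\Sigma$. Applying monotonicity of eigenvalues (Weyl / Courant--Fischer) termwise gives $(1-\eta)\lambda_i(\Sigma) \le \lambda_i(\hat\Sigma) \le (1+\eta)\lambda_i(\Sigma)$ for every $i$, which is the claimed spectrum-wide bound. For the ``in particular'' statement, under $n \ge 288(d_x + \log(1/\delta))$ one has $\sqrt{d_x/n} \le 1/(12\sqrt2)$ and $\sqrt{2\log(1/\delta)/n} \le 1/12$, hence $\varepsilon \le 0.15$ and $\eta = 2\varepsilon + \varepsilon^2 \le \tfrac12$; the lower Loewner bound then gives $\lambda_i(\hat\Sigma) \ge (1-\eta)\lambda_i(\Sigma) \ge \tfrac12\lambda_i(\Sigma)$.

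The conceptual point I would emphasize is that additive Weyl perturbation is the wrong tool: $\hat\Sigma - \Sigma = \Sigma^{1/2}(\hat G - I)\Sigma^{1/2}$ has operator norm only $\lesssim \eta\,\lambda_1(\Sigma)$, which would lose a factor $\kappa(\Sigma)$ on the small eigenvalues and destroy the $\tfrac12\lambda_i(\Sigma)$ conclusion, whereas the relative/congruence argument preserves the per-eigenvalue scaling. The only genuinely delicate ingredient is obtaining the extreme singular value bound with the correct $\sqrt{2\log(1/\delta)}$ dependence (rather than a loose $\sqrt{\log(1/\delta)}$ times a large constant), since it is exactly this sharpness that makes the universal constant $288$ in the second statement come out after the routine union bound over the two tails; this is precisely the content borrowed from \cite{barzilai2024simple}, which I would cite rather than re-derive the Gordon estimate.
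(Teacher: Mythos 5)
Your proposal is correct, and it is worth noting up front that the paper itself gives no proof of this lemma: it is imported wholesale as Theorem~8 of \cite{barzilai2024simple}, so the only internal comparison point is the machinery the paper cites alongside it, namely Lemma~\ref{lem:perturbation} (Theorem~5 of \cite{barzilai2024simple}), which is exactly your congruence/Loewner-monotonicity step. Your route — whiten via $x_i=\Sigma^{1/2}g_i$, control $s_{\min}(G)$ and $s_{\max}(G)$ by Gordon's comparison plus Gaussian concentration, then push the bound $\Vert \hat G - I\Vert_2\le 2\varepsilon+\varepsilon^2$ back through $\hat\Sigma=\Sigma^{1/2}\hat G\Sigma^{1/2}$ — is the standard argument behind the cited theorem, so there is no methodological divergence to report. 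Two remarks. First, what your argument actually proves is the \emph{relative} bound $|\lambda_i(\hat\Sigma)-\lambda_i(\Sigma)|\le(2\varepsilon+\varepsilon^2)\,\lambda_i(\Sigma)$, not the absolute bound as the lemma is literally printed; this is a defect of the printed statement rather than of your proof, since an absolute bound of $2\varepsilon+\varepsilon^2$ could never yield the ``in particular'' conclusion $\lambda_i(\hat\Sigma)\ge\tfrac12\lambda_i(\Sigma)$ from a sample-size condition that does not involve $\lambda_{\min}(\Sigma)$ — your closing observation about additive Weyl perturbation losing a factor $\kappa(\Sigma)$ is precisely the right diagnosis of why the $\lambda_i(\Sigma)$ factor must be there. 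Second, a small constant slips: Gaussian concentration gives each one-sided tail probability $e^{-t^2/2}$, so controlling $s_{\max}(G)$ and $s_{\min}(G)$ simultaneously with $t=\sqrt{2\log(1/\delta)}$ leaves failure probability $2\delta$; you should take $t=\sqrt{2\log(2/\delta)}$. This only inflates $\varepsilon$ by $\sqrt{2\log 2/n}\le\sqrt{2\log 2/288}\approx 0.07$ under the condition $n\ge 288(d_x+\log(\tfrac{1}{\delta}))$, so one still gets $2\varepsilon+\varepsilon^2\le\tfrac12$ and the second conclusion survives with the same constant $288$.
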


\paragraph{Linear system identification.} In this setting, the random variables $x_1, \dots, x_n$ are dependent, which makes the analysis of $\hat{\Sigma}$ more challenging, especially if we need a concentration result on the entire spectrum. We present the following result. 
\begin{theorem}\label{thm:cov-concentration-2} 
Let $\delta > 0$. Assume that for all $i\ge 0$, $x_{i+1} = A x_i + \eta_i$, where $x_0 = 0$, and $(\eta_i)_{i\ge 0}$ are i.i.d. zero-mean, $\sigma^2$-subgaussian random variables. Recall that $\hat{\Sigma} = \frac{1}{n} X^\top X$ and $\Sigma = \frac{1}{n}\sum_{i=0}^{n-1}\Gamma_i(A)$. Then, with probability $1-\delta$, it holds 

\begin{align*}
    \forall i \in [d_x], \qquad  \frac{1}{2}\lambda_i(\Sigma) \le \lambda_i(\hat{\Sigma}) \le \frac{3}{2}\lambda_i(\Sigma) 
\end{align*}

provided that 
$
n \ge \frac{ c_1 \sigma^4 \Vert \Gamma_\infty(A) \Vert_2^3 }{\lambda_{\min}(\Sigma) } \left( \log\left(\frac{1}{\delta}\right) + c_2 d_x\right)
$
where $c_1, c_2$ are positive universal constants.  

In particular, with probability $1-\delta$, it holds that 

\begin{align*}
    \forall i \in [d_x], \qquad  \frac{1}{8}\lambda_i(\Gamma_{\infty}(A)) \le \lambda_i(\hat{\Sigma}) \le \frac{3}{2}\lambda_i(\Gamma_\infty(A)) 
\end{align*}

provided that $
n \ge \frac{ c_3 \sigma^4 \Vert \Gamma_\infty(A) \Vert_2^3 }{\lambda_{\min}(\Gamma_\infty(A))} \left( \log\left(\frac{1}{\delta}\right) + c_4 d_x\right)
$ where $c_3, c_4$ are positive universal constants.  
\end{theorem}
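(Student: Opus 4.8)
The plan is to derive both spectral bounds from a single operator-norm estimate on the \emph{whitened} empirical covariance, and to neutralize the temporal dependence between the $x_i$ by rewriting the relevant quadratic forms as quadratic forms in the independent noise, to which I can apply a Hanson--Wright-type inequality combined with a covering argument over the sphere. This mirrors the i.i.d.\ statement (Lemma~\ref{lem:cov-concentration-1}) but must contend with non-stationarity and dependence. Writing $\Sigma=\EE[\hat\Sigma]$ (deterministic and, by assumption, positive definite), it suffices to show $\|\Sigma^{-1/2}\hat\Sigma\,\Sigma^{-1/2}-I_{d_x}\|_2\le\tfrac12$ with probability $1-\delta$: this is equivalent to $\tfrac12\Sigma\preceq\hat\Sigma\preceq\tfrac32\Sigma$ in the Loewner order, and Weyl's monotonicity theorem then yields $\tfrac12\lambda_i(\Sigma)\le\lambda_i(\hat\Sigma)\le\tfrac32\lambda_i(\Sigma)$ for every $i$.

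\emph{Defusing the dependence.} Unrolling the recursion gives $x_i=\sum_{j=0}^{i-1}A^{j}\eta_{i-1-j}$, so stacking the covariates yields $X=\mathcal T\eta$, where $\mathcal T$ is the block lower-triangular Toeplitz operator with blocks $A^{\,i-1-j}$ and $\eta$ collects the i.i.d.\ $\sigma^2$-subgaussian noise vectors. Fixing a unit vector $v$ and setting $w=\Sigma^{-1/2}v$, the diagonal entry of the whitened matrix in direction $v$ becomes
\[
v^\top\Sigma^{-1/2}\hat\Sigma\,\Sigma^{-1/2}v=\frac1n\sum_{i=1}^n(w^\top x_i)^2=\frac1n\,\eta^\top L_w\,\eta,\qquad L_w=\mathcal T^\top (I_n\otimes ww^\top)\mathcal T\succeq 0 .
\]
The point of this rewriting is that all the dependence is now frozen inside the deterministic PSD matrix $L_w$, while $\eta$ is independent; since the diagonal blocks of $\mathcal T\mathcal T^\top$ equal $\Gamma_{i-1}(A)$, one finds $\frac{\sigma^2}{n}\mathrm{tr}(L_w)=w^\top\Sigma w=1$, so the form is centered at the target value $1$.

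\emph{Concentration and net.} For fixed $v$ I would invoke the Hanson--Wright inequality, whose deviation exponent is governed by $\|L_w\|_\F$ and $\|L_w\|_2$. Using $\mathrm{tr}(L_w)=n\,w^\top\Sigma w$ (hence $\|L_w\|_\F^2\le\|L_w\|_2\,\mathrm{tr}(L_w)$) and $\|L_w\|_2\le\|\mathcal T\|_2^2\,\|w\|^2\le\|\mathcal T\|_2^2/\lambda_{\min}(\Sigma)$, both exponents scale like $n/\|L_w\|_2$. The crucial systems-theoretic estimate is $\|\mathcal T\|_2^2=\|\mathcal T\mathcal T^\top\|_2\le\big(\sum_{k\ge0}\|A^k\|_2\big)^2\lesssim\|\Gamma_\infty(A)\|_2^3$, which is exactly item~(iii) of Lemma~\ref{lem:properties-Gamma}. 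Taking a union bound over a $\tfrac14$-net of the unit sphere (of cardinality $e^{O(d_x)}$) and passing to the full norm via $\|M\|_2\le 2\max_{v}|v^\top M v|$, I require the exponent to dominate $d_x+\log(\tfrac1\delta)$; tracking the noise scale (consistent with condition~\eqref{eq:lti_condition}, itself resting on the self-normalized bounds of \cite{jedra2022finite}) produces the threshold $n\gtrsim \sigma^4\|\Gamma_\infty(A)\|_2^3(\log(\tfrac1\delta)+d_x)/\lambda_{\min}(\Sigma)$.

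\emph{Passage to $\Gamma_\infty(A)$ and main obstacle.} For the second statement I would compare $\Sigma=\tfrac1n\sum_{i=0}^{n-1}\Gamma_i(A)$ with $\Gamma_\infty(A)$: since $\Gamma_i(A)\preceq\Gamma_\infty(A)$ we get $\Sigma\preceq\Gamma_\infty(A)$ at once, while the monotone convergence $\Gamma_i(A)\uparrow\Gamma_\infty(A)$ controlled through item~(i) of Lemma~\ref{lem:properties-Gamma} shows that once $n$ exceeds the mixing scale $\gtrsim\|\Gamma_\infty(A)\|_2$, a constant fraction of the summands obey $\Gamma_i(A)\succeq\tfrac14\Gamma_\infty(A)$, whence $\Sigma\succeq\tfrac14\Gamma_\infty(A)$. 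Chaining $\tfrac12\Sigma\preceq\hat\Sigma\preceq\tfrac32\Sigma$ with $\tfrac14\Gamma_\infty(A)\preceq\Sigma\preceq\Gamma_\infty(A)$ and applying Weyl's monotonicity once more gives $\tfrac18\lambda_i(\Gamma_\infty(A))\le\lambda_i(\hat\Sigma)\le\tfrac32\lambda_i(\Gamma_\infty(A))$, the extra mixing requirement being absorbed since $\|\Gamma_\infty(A)\|_2^3/\lambda_{\min}(\Gamma_\infty(A))\ge\|\Gamma_\infty(A)\|_2$. I expect the hardest step to be the uniform control over the sphere in the dependent regime: specifically the operator-norm bound $\|\mathcal T\|_2^2\lesssim\|\Gamma_\infty(A)\|_2^3$, which converts the accumulated system response into the cubic Gramian factor, together with carrying the constants through Hanson--Wright and the net so that the dimension term is genuinely $d_x$ while the correct noise power is retained.
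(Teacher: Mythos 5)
Your proposal is correct in its overall architecture, and that architecture coincides with the paper's: (a) establish an operator-norm bound on the whitened matrix $\Sigma^{-1/2}\hat\Sigma\Sigma^{-1/2}-I_{d_x}$ at level $\varepsilon=\tfrac12$, (b) transfer it to eigenvalue-by-eigenvalue bounds, and (c) pass from $\Sigma$ to $\Gamma_\infty(A)$ via the Gramian comparison $\tfrac14\Gamma_\infty(A)\preceq\Sigma\preceq\Gamma_\infty(A)$. The difference is in how stages (a) and (b) are discharged. The paper treats both as black boxes: stage (a) is exactly Lemma \ref{lem:cov-concentration-2} (Lemma 4 of \cite{jedra2022finite}, with exact constants in Lemma B.1 of \cite{ziemann2023tutorial}) applied with $\varepsilon=\tfrac12$, and stage (b) is the relative perturbation bound of Lemma \ref{lem:perturbation} (Theorem 5 of \cite{barzilai2024simple}). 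You instead prove (a) from scratch — Toeplitz representation $X=\mathcal{T}\eta$, Hanson--Wright on the frozen quadratic form $\eta^\top L_w\eta$, and a $\tfrac14$-net — and replace (b) by the elementary congruence-plus-Weyl-monotonicity argument, which is perfectly valid and arguably cleaner than invoking the cited perturbation theorem. Your route buys self-containedness and exposes the mechanism (it is essentially how the cited concentration lemma is itself proved in the literature); the paper's route buys brevity and exact constants. Three caveats if you were to execute your plan fully: (i) standard Hanson--Wright requires the stacked noise vector to have \emph{independent} subgaussian coordinates, whereas the theorem only assumes i.i.d.\ subgaussian noise \emph{vectors}, so you would need either a vector version of Hanson--Wright or the self-normalized machinery of \cite{jedra2022finite}; (ii) your references to Lemma \ref{lem:properties-Gamma} are permuted — the bound $\sum_{k\ge0}\Vert A^k\Vert_2\lesssim\Vert\Gamma_\infty(A)\Vert_2^{3/2}$ is item (i), not (iii), and the convergence rate you use for the Gramian comparison is item (ii), with the conclusion $\tfrac14\Gamma_\infty(A)\preceq\Sigma$ being item (iii); (iii) the identity should read $\tfrac1n\mathrm{tr}(L_w)=w^\top\Sigma w=1$, and the centering of $\eta^\top L_w\eta$ at $n$ holds only when the noise covariance is the identity — a normalization mismatch between the variance proxy $\sigma^2$ and the definition of $\Gamma_i(A)$ that the paper's own statement inherits from \cite{jedra2022finite}, so it does not count against you, but your $\tfrac{\sigma^2}{n}\mathrm{tr}(L_w)=1$ as written is off by a factor $\sigma^2$.
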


Before proving this theorem, we need to review some key intermediate lemmas. First, we provide the following result which lists certain properties about the controllability Gramians.

\bigskip
\begin{lemma}[Lemma 8 in \cite{jedra2022finite}]\label{lem:properties-Gamma} Assume that $A$ is stable, meaning its spectral radius satisfies  $\rho(A)<1$. Then, the following properties hold
\begin{itemize}
    \item [(i)] 
   % $\left\Vert \sum_{i=0}^\infty \Gamma_i(A) \right\Vert_2 \le (1+\sqrt{2}) \Vert \Gamma_\infty(A)\Vert_2^{3/2}$
    $\sum_{t=0}^\infty \left\Vert A^t \right\Vert_2 \le (1+\sqrt{2}) \Vert \Gamma_\infty(A)\Vert_2^{3/2}$
    \item [(ii)] for all $i\ge 0$,  $\left(1- \exp(-\frac{i+1}{ \Vert \Gamma_\infty(A) \Vert_2  - 1})\right)\Gamma_\infty(A) \preceq \Gamma_i(A) \preceq \Gamma_\infty(A)$ \\
    \item [(iii)] for $n \ge 2\log(2) \Vert \Gamma_\infty(A)\Vert_2 $, 
    we have $ \frac{1}{4} \Gamma_\infty(A) \preceq \frac{1}{n}\sum_{i=0}^{n-1} \Gamma_i(A) \preceq  \Gamma_\infty(A)$.
\end{itemize}
    % \begin{align*}
    %     \left\Vert \sum_{i=0}^\infty \Gamma_i(A) \right\Vert \le (1+\sqrt{2}) \Vert \Gamma_\infty(A)\Vert^{3/2}
    % \end{align*}
    % \begin{align*}
    %     \left(1- \exp(-\frac{t-1}{ \Vert \Gamma_\infty(A) \Vert_2  - 1})\right)\Gamma_\infty(A) \preceq \Gamma_t(A) \preceq \Gamma_\infty(A)
    % \end{align*}
\end{lemma}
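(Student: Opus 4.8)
The plan is to reduce everything to a single operator-norm control on the whitened deviation $S := \Sigma^{-1/2}(\hat{\Sigma} - \Sigma)\Sigma^{-1/2}$. If I can show $\|S\|_2 \le 1/2$ with probability $1-\delta$, then $\tfrac12\Sigma \preceq \hat{\Sigma} \preceq \tfrac32\Sigma$ in the Loewner order, and Weyl's eigenvalue monotonicity immediately gives $\tfrac12\lambda_i(\Sigma)\le\lambda_i(\hat{\Sigma})\le\tfrac32\lambda_i(\Sigma)$ for every $i\in[d_x]$. So the whole difficulty is the high-probability bound on $\|S\|_2$. I would handle this by a net argument: fix a $1/4$-net $\mathcal{N}$ of the unit sphere of $\mathbb{R}^{d_x}$ with $|\mathcal{N}|\le 9^{d_x}$, so that $\|S\|_2 \le 2\max_{v\in\mathcal{N}}|v^\top S v|$ since $S$ is symmetric, and it suffices to bound $|v^\top S v|$ for each fixed $v$ with failure probability at most $\delta\, 9^{-d_x}$.

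For a fixed unit vector $v$, set $w=\Sigma^{-1/2}v$ and note $v^\top\Sigma^{-1/2}\hat{\Sigma}\Sigma^{-1/2}v = w^\top\hat{\Sigma}w = \tfrac1n\|Xw\|^2$. Using $x_i=\sum_{j<i}A^{\,i-1-j}\eta_j$ (valid as $x_0=0$), the vector $Xw\in\mathbb{R}^n$ is a linear image $M_w\eta$ of the stacked noise $\eta=(\eta_0,\dots,\eta_{n-1})$, whose blocks are independent and $\sigma^2$-subgaussian. Hence $v^\top\Sigma^{-1/2}\hat{\Sigma}\Sigma^{-1/2}v = \eta^\top B\eta$ with $B=\tfrac1n M_w^\top M_w\succeq 0$, whose mean is $\mathrm{Tr}(B)=w^\top\Sigma w=1$ (using $\Sigma=\mathbb{E}[\hat{\Sigma}]=\tfrac1n\sum_i\Gamma_i(A)$). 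Thus $v^\top S v$ is a centered subgaussian quadratic form, to which I apply the Hanson--Wright inequality for vectors with independent $\sigma^2$-subgaussian blocks.

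The crux is bounding the two norms entering Hanson--Wright with the correct spectral quantities. For the operator norm I use a convolution (Young) bound: writing $(M_w\eta)_i=\sum_{j<i}w^\top A^{\,i-1-j}\eta_j$ and $g_k:=\|w^\top A^k\|$, I get $\|M_w\|_2\le \sum_{k\ge0}g_k\le \|w\|\sum_{k\ge0}\|A^k\|_2$; combining $\|w\|^2=v^\top\Sigma^{-1}v\le \lambda_{\min}(\Sigma)^{-1}$ with part (i) of Lemma \ref{lem:properties-Gamma} gives $\|M_w\|_2^2\lesssim \|\Gamma_\infty(A)\|_2^3/\lambda_{\min}(\Sigma)$, hence $\|B\|_2=\tfrac1n\|M_w\|_2^2\lesssim \|\Gamma_\infty(A)\|_2^3/(n\,\lambda_{\min}(\Sigma))$. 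Since $B\succeq 0$ with $\mathrm{Tr}(B)=1$, I also get $\|B\|_F^2\le\|B\|_2\,\mathrm{Tr}(B)=\|B\|_2$, and this is exactly what produces the $\sigma^4$ (rather than $\sigma^2$) dependence: with deviation level $t=1/4$, the Frobenius branch of Hanson--Wright gives failure probability at most $2\exp(-c\,t^2/(\sigma^4\|B\|_F^2))\le 2\exp(-c'\,n\,\lambda_{\min}(\Sigma)/(\sigma^4\|\Gamma_\infty(A)\|_2^3))$, which dominates the operator branch. Choosing $n\ge \tfrac{c_1\sigma^4\|\Gamma_\infty(A)\|_2^3}{\lambda_{\min}(\Sigma)}(\log(1/\delta)+c_2 d_x)$ makes this at most $\delta\,9^{-d_x}$, so the union bound over $\mathcal{N}$ yields $\|S\|_2\le 1/2$ with probability $\ge 1-\delta$, proving the first sandwich. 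For the ``in particular'' claim I invoke part (iii) of Lemma \ref{lem:properties-Gamma}: when $n\ge 2\log(2)\|\Gamma_\infty(A)\|_2$ we have $\tfrac14\Gamma_\infty(A)\preceq\Sigma\preceq\Gamma_\infty(A)$, so $\tfrac12\Sigma\preceq\hat{\Sigma}\preceq\tfrac32\Sigma$ upgrades to $\tfrac18\Gamma_\infty(A)\preceq\hat{\Sigma}\preceq\tfrac32\Gamma_\infty(A)$; and since $\lambda_{\min}(\Sigma)\ge\tfrac14\lambda_{\min}(\Gamma_\infty(A))$ while $\Gamma_\infty(A)\succeq I$ forces $\|\Gamma_\infty(A)\|_2^3/\lambda_{\min}(\Gamma_\infty(A))\ge\|\Gamma_\infty(A)\|_2$, the rephrased condition $n\ge\tfrac{c_3\sigma^4\|\Gamma_\infty(A)\|_2^3}{\lambda_{\min}(\Gamma_\infty(A))}(\log(1/\delta)+c_4 d_x)$ implies both the $\Sigma$-version and $n\ge 2\log(2)\|\Gamma_\infty(A)\|_2$, after absorbing numerical factors into $c_3,c_4$.

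I expect the main obstacle to be getting the spectral dependence exactly right in the two steps above: one must (a) extract $\|\Gamma_\infty(A)\|_2^{3/2}$ from $\sum_k\|A^k\|_2$ through Lemma \ref{lem:properties-Gamma}(i) rather than a crude geometric-series estimate, and (b) exploit $\mathrm{Tr}(B)=1$ so that the Frobenius term $\|B\|_F^2\le\|B\|_2$ is binding, which is precisely what upgrades $\sigma^2$ to the correct $\sigma^4$. A secondary technical point is that Hanson--Wright must be applied in its form for vectors with independent $\sigma^2$-subgaussian blocks $\eta_j$ rather than Gaussian coordinates, but this is standard and does not alter the structure of the argument.
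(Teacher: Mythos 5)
Your proposal does not prove the statement at hand. Lemma \ref{lem:properties-Gamma} is a purely \emph{deterministic} statement about the controllability Gramians of a stable matrix: (i) a bound on $\sum_{t\ge 0}\Vert A^t\Vert_2$ by $(1+\sqrt{2})\Vert \Gamma_\infty(A)\Vert_2^{3/2}$, (ii) the Loewner comparison $\bigl(1-e^{-(i+1)/(\Vert\Gamma_\infty(A)\Vert_2-1)}\bigr)\Gamma_\infty(A)\preceq\Gamma_i(A)\preceq\Gamma_\infty(A)$, and (iii) the Ces\`aro-average sandwich $\frac{1}{4}\Gamma_\infty(A)\preceq\frac{1}{n}\sum_{i=0}^{n-1}\Gamma_i(A)\preceq\Gamma_\infty(A)$ once $n\ge 2\log(2)\Vert\Gamma_\infty(A)\Vert_2$. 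No randomness, no $\hat\Sigma$, and no $\delta$ appear in it. What you have written is instead a proof sketch of the downstream concentration result, Theorem \ref{thm:cov-concentration-2}: a net argument plus Hanson--Wright to control $\Vert\Sigma^{-1/2}(\hat\Sigma-\Sigma)\Sigma^{-1/2}\Vert_2$, followed by Weyl's inequality. Worse, your argument explicitly \emph{invokes} parts (i) and (iii) of Lemma \ref{lem:properties-Gamma} as ingredients --- (i) to bound $\Vert M_w\Vert_2$ via $\sum_{k\ge 0}\Vert A^k\Vert_2$, and (iii) to pass from $\Sigma$ to $\Gamma_\infty(A)$ in your ``in particular'' step --- so as a proof of the lemma it is circular: the statement to be proved is assumed, and none of the three claims (i), (ii), (iii) is ever established.

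For comparison, the paper's proof is short and deterministic: items (i) and (ii) are imported from Lemma 8 of \cite{jedra2022finite} (where they rest on a Lyapunov-type decay estimate for $\Vert A^t\Vert_2$ and a geometric series), and (iii) is derived from (ii) alone. The upper bound in (iii) is immediate since each $\Gamma_i(A)\preceq\Gamma_\infty(A)$; for the lower bound, one discards the first half of the sum, applies (ii) to each index $i\ge\lfloor n/2\rfloor$ to get $\frac{1}{n}\sum_{i=0}^{n-1}\Gamma_i(A)\succeq\frac{1}{2}\bigl(1-e^{-(\lfloor n/2\rfloor+1)/(\Vert\Gamma_\infty(A)\Vert_2-1)}\bigr)\Gamma_\infty(A)$, and then checks that $n\ge 2\log(2)\Vert\Gamma_\infty(A)\Vert_2$ forces the parenthesized factor to be at least $\frac{1}{2}$. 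Your probabilistic outline is, broadly, a reasonable route to Theorem \ref{thm:cov-concentration-2} (the paper instead routes that theorem through Lemma 4 of \cite{jedra2022finite} with $\varepsilon=1/2$ and the relative perturbation bound of \cite{barzilai2024simple}, rather than redoing a Hanson--Wright argument), but it leaves the actual lemma unproved: you would need to supply the deterministic arguments for (i) and (ii), and the averaging argument above for (iii).
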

\begin{proof}[Proof of Lemma \ref{lem:properties-Gamma}]
The results corresponding of \emph{(i)} and \emph{(ii)} are borrowed from Lemma 8 in \cite{jedra2022finite}. We will therefore omit their proofs here. To prove \emph{(iii)}, note that $\Gamma_\infty(A) \succeq \frac{1}{n} \sum_{i=0}^{n-1} \Gamma_i(A)$ is immediate from \emph{(ii)}. Let us then prove the lower bound which also follows from  \emph{(ii)}. Note that 

\begin{align*}
    \Sigma & = \frac{1}{n}\sum_{i=0}^{n-1} \Gamma_{i}(A) \succ \frac{1}{n} \sum_{i=\lfloor n/2 \rfloor}^{n-1} \Gamma_{i}(A) \\
   & \succeq \frac{1}{n}\left(\sum_{i=\lfloor n/2 \rfloor}^{n-1} 
  (1 - e^{-\frac{i+1}{\Vert \Gamma_\infty(A) \Vert_2 - 1} })\Gamma_\infty(A)\right) \\
  & \succeq \frac{1}{n}\left( 1 - e^{ - \frac{\lfloor n/2\rfloor+1 }{\Vert \Gamma_\infty(A) \Vert_2 - 1 }}\right)   (n - \lfloor n/2\rfloor)\Gamma_\infty(A) \\
  & \succeq \frac{1}{2}\left( 1 - e^{ - \frac{\lfloor n/2\rfloor+1 }{\Vert \Gamma_\infty(A) \Vert_2 - 1 }}\right) \Gamma_\infty(A).
\end{align*}
We also have 
\begin{align*}
    n \ge 2 \log(2)\Vert \Gamma_\infty(A)  \Vert_2 \implies  \left( 1 - e^{ - \frac{\lfloor n/2\rfloor+1 }{\Vert \Gamma_\infty(A) \Vert_2 - 1 }}\right) \ge \frac{1}{2}.
\end{align*}
Thus, whenever the above condition holds it follows that  
\begin{align*}
    \Sigma \succeq \frac{1}{4} \Gamma_\infty(A).
\end{align*}
\end{proof}

Next, we present a concentration result which is central in the proof of Theorem \ref{thm:cov-concentration-2}.
% Luckily, when the matrix $A$ is stable, i.e., $\rho(A)<1$, we have the following result  
\begin{lemma}[Lemma 4 in \cite{jedra2022finite}]\label{lem:cov-concentration-2} Let $\varepsilon, \delta > 0$. Assume that for all $i\ge 0$, $x_{i+1} = A x_i + \eta_i$, where $x_0 = 0$, and $(\eta_i)_{i\ge 0}$ are i.i.d. zero-mean, $\sigma^2$-subgaussian random variables. We recall that $\Sigma = \frac{1}{n}\sum_{i=0}^{n-1} \Gamma_i(A)$. Then, with probability at least $1- \delta$, it holds 

\begin{align*}
    \left\Vert \frac{1}{n}\left(X \Sigma^{-\frac{1}{2}}\right)^\top  \left( X \Sigma^{-\frac{1}{2}} \right)  - I_{d_x} \right\Vert_2 \le \max( \varepsilon, \varepsilon^2), 
\end{align*}

provided $n \ge \frac{ 576(3+2\sqrt{2}) \sigma^4 \Vert \Gamma_\infty(A) \Vert_2^3 }{\varepsilon^2  \lambda_{\min}(\Sigma) } \left( \log\left(\frac{1}{\delta}\right) + \log(18) d_x\right)$.

% with probability at least 
% \begin{align*}
%     1 -  \exp\left(- \frac{\varepsilon^2}{ 576(3+2\sqrt{2})\sigma^2 n \Vert \Sigma \Vert^2 \Vert \Gamma_\infty(A) \Vert_2^3} + \log(18) d_x\right)
% \end{align*}
\end{lemma}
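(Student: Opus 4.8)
The plan is to reduce the claimed per-eigenvalue sandwich bounds to the single operator-norm concentration inequality of Lemma \ref{lem:cov-concentration-2}, instantiated at the fixed resolution $\varepsilon = 1/2$, and then to propagate the resulting relative spectral control through the Loewner order. First I would apply Lemma \ref{lem:cov-concentration-2} with $\varepsilon = 1/2$, observing that $\frac{1}{n}(X\Sigma^{-1/2})^\top(X\Sigma^{-1/2}) = \Sigma^{-1/2}\hat{\Sigma}\Sigma^{-1/2}$ (since $\Sigma^{-1/2}$ is symmetric and $\hat{\Sigma} = \frac{1}{n}X^\top X$) and that $\max(1/2,1/4) = 1/2$. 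This yields, with probability at least $1-\delta$,
$$\left\Vert \Sigma^{-1/2}\hat{\Sigma}\Sigma^{-1/2} - I_{d_x}\right\Vert_2 \le \frac{1}{2},$$
provided $n \ge \frac{4\cdot 576(3+2\sqrt{2})\sigma^4\Vert \Gamma_\infty(A)\Vert_2^3}{\lambda_{\min}(\Sigma)}\bigl(\log(\tfrac{1}{\delta}) + \log(18)d_x\bigr)$, which pins down the universal constants $c_1 = 4\cdot 576(3+2\sqrt{2})$ and $c_2 = \log(18)$.

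Next, since $\Sigma^{-1/2}\hat{\Sigma}\Sigma^{-1/2}$ is symmetric, the operator-norm bound is equivalent to the two-sided Loewner inequality $\frac{1}{2}I_{d_x}\preceq \Sigma^{-1/2}\hat{\Sigma}\Sigma^{-1/2}\preceq \frac{3}{2}I_{d_x}$. Congruence by $\Sigma^{1/2}$ preserves the Loewner order, so conjugating throughout gives $\frac{1}{2}\Sigma \preceq \hat{\Sigma} \preceq \frac{3}{2}\Sigma$. The first claim then follows from Weyl's monotonicity theorem (equivalently, the Courant–Fischer min–max characterization): $M \preceq N$ implies $\lambda_i(M) \le \lambda_i(N)$ for every $i$. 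Applying this to each half of the sandwich gives $\frac{1}{2}\lambda_i(\Sigma) \le \lambda_i(\hat{\Sigma}) \le \frac{3}{2}\lambda_i(\Sigma)$ for all $i \in [d_x]$.

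For the \emph{in particular} statement I would chain this with part (iii) of Lemma \ref{lem:properties-Gamma}, which gives $\frac{1}{4}\Gamma_\infty(A) \preceq \Sigma \preceq \Gamma_\infty(A)$ whenever $n \ge 2\log(2)\Vert \Gamma_\infty(A)\Vert_2$, hence $\frac{1}{4}\lambda_i(\Gamma_\infty(A)) \le \lambda_i(\Sigma) \le \lambda_i(\Gamma_\infty(A))$ by the same monotonicity argument. Composing the two eigenvalue chains yields $\frac{1}{8}\lambda_i(\Gamma_\infty(A)) \le \lambda_i(\hat{\Sigma}) \le \frac{3}{2}\lambda_i(\Gamma_\infty(A))$. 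To recast the sample-size condition in terms of $\lambda_{\min}(\Gamma_\infty(A))$, I would use $\lambda_{\min}(\Sigma) \ge \frac{1}{4}\lambda_{\min}(\Gamma_\infty(A))$ (again from Lemma \ref{lem:properties-Gamma}(iii)) to replace $\lambda_{\min}(\Sigma)^{-1}$ by $4\lambda_{\min}(\Gamma_\infty(A))^{-1}$ in the first condition, absorbing the factor into $c_3$ and keeping $c_4 = c_2$.

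Since the genuine probabilistic content is entirely outsourced to Lemma \ref{lem:cov-concentration-2}, the only delicate bookkeeping — and thus the step I would treat most carefully — is verifying that a single lower bound on $n$ of the advertised form simultaneously enforces the condition of Lemma \ref{lem:cov-concentration-2} and the auxiliary requirement $n \ge 2\log(2)\Vert \Gamma_\infty(A)\Vert_2$ needed to invoke Lemma \ref{lem:properties-Gamma}(iii). This holds because $\Gamma_\infty(A) \succeq I$ forces $\lambda_{\min}(\Gamma_\infty(A)) \le \Vert \Gamma_\infty(A)\Vert_2$, so $\Vert \Gamma_\infty(A)\Vert_2^3/\lambda_{\min}(\Gamma_\infty(A)) \ge \Vert \Gamma_\infty(A)\Vert_2^2 \ge \Vert \Gamma_\infty(A)\Vert_2$; choosing $c_3$ large enough (and, if $\sigma$ may be arbitrarily small, reading $\sigma^4$ as $\max(\sigma^4,1)$ as in condition (\ref{eq:lti_condition})) makes the concentration condition dominate the Gramian condition. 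With both conditions on $n$ in force, the single probability-$(1-\delta)$ event from Lemma \ref{lem:cov-concentration-2}, combined with the deterministic Gramian inequalities, delivers both assertions.
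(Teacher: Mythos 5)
There is a fundamental mismatch here: the statement you were asked to prove is Lemma \ref{lem:cov-concentration-2} itself — the self-normalized concentration bound $\bigl\Vert \frac{1}{n}(X\Sigma^{-1/2})^\top(X\Sigma^{-1/2}) - I_{d_x}\bigr\Vert_2 \le \max(\varepsilon,\varepsilon^2)$ for the dependent covariates of a linear dynamical system — but your proposal never proves it. Instead, you \emph{invoke} Lemma \ref{lem:cov-concentration-2} (at $\varepsilon = 1/2$) as your main tool and derive from it the eigenvalue sandwich $\frac{1}{2}\lambda_i(\Sigma) \le \lambda_i(\hat\Sigma) \le \frac{3}{2}\lambda_i(\Sigma)$ together with the $\Gamma_\infty(A)$ version. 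That derivation is a proof of Theorem \ref{thm:cov-concentration-2}, a \emph{downstream consequence} of the lemma, not of the lemma itself. You acknowledge this when you write that ``the genuine probabilistic content is entirely outsourced to Lemma \ref{lem:cov-concentration-2}'' — but that probabilistic content \emph{is} the statement under consideration, so the argument is circular with respect to the target. A genuine proof would have to establish the concentration of the whitened empirical covariance for the dependent process $x_{i+1} = Ax_i + \eta_i$, e.g., via a discretization of the unit sphere combined with exponential inequalities for quadratic forms of the sub-Gaussian noise exploiting the causal structure (this is the content of Lemma 4 in the cited work of Jedra and Proutière, with exact constants in Ziemann's tutorial, Lemma B.1), followed by the conversion of the raw condition — involving $\lambda_{\min}(\sum_i \Gamma_i(A))$ and $(\sum_i \Vert A^i\Vert_2)^2$ — into the stated form $\Vert \Gamma_\infty(A)\Vert_2^3/\lambda_{\min}(\Sigma)$ via Lemma \ref{lem:properties-Gamma} (i) and (iii). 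None of this appears in your proposal; note also that the lemma is stated for arbitrary $\varepsilon > 0$, so an argument pinned to $\varepsilon = 1/2$ could not recover it even in principle. (The paper itself treats this lemma as imported, citing the references above and only adding the Gramian bound from Lemma \ref{lem:properties-Gamma}.)

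As a secondary remark: taken on its own terms, your derivation of the eigenvalue sandwich is correct and differs mildly from the paper's proof of Theorem \ref{thm:cov-concentration-2}. Where the paper applies the relative perturbation bound of Lemma \ref{lem:perturbation} (Theorem 5 of Barzilai et al.), you pass through the equivalent Loewner sandwich $\frac{1}{2}\Sigma \preceq \hat\Sigma \preceq \frac{3}{2}\Sigma$ by congruence with $\Sigma^{1/2}$ and then use Weyl monotonicity; both routes are valid and yield identical constants, and your compatibility check that the concentration condition on $n$ dominates the requirement $n \ge 2\log(2)\Vert \Gamma_\infty(A)\Vert_2$ (via $\Gamma_\infty(A) \succeq I$) is a careful touch the paper leaves implicit. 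But this correct work addresses the wrong statement, so the proof of Lemma \ref{lem:cov-concentration-2} remains entirely missing.
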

We refer the reader to \cite{jedra2022finite} for a proof (see their Lemma 4) or \cite{ziemann2023tutorial} for a revised proof with exact constants (see their Lemma B.1). We further used Lemma \ref{lem:properties-Gamma} to upper bound $\Vert \sum_{i=0}^{n-1}\Gamma_i(A) \Vert_2$.  
%The universal constants presented in the Lemma 4 of \ref{jedra2022finte} were precised in Lemma B.1. of \cite{ziemann2023tutorial}.

Next, we present a relative perturbation bound on the spectrum of the empirical covariance matrix.
\begin{lemma}[Theorem 5 in \cite{barzilai2024simple}] \label{lem:perturbation}
Let $X \in \RR^{n \times d_x}$, with $n \ge d_x$. Let $\Sigma \in \RR^{d_x\times d_x}$ be a symmetric positive definite matrix. Define $\hat{\Sigma} = \frac{1}{n} X^\top  X$. For all $i \in [d_x]$, we have 

    \begin{align*}
        \vert \lambda_i(\hat{\Sigma}) - \lambda_i(\Sigma) \vert \le \lambda_{i}(\Sigma)  \left\Vert \frac{1}{n}\left(X \Sigma^{-\frac{1}{2}}\right)^\top  \left( X \Sigma^{-\frac{1}{2}} \right)  - I_{d_x} \right\Vert_2 
    \end{align*}
\end{lemma}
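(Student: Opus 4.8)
The plan is to reduce the claim to a two-sided Loewner (operator) inequality between $\hat{\Sigma}$ and $\Sigma$, and then invoke monotonicity of the eigenvalue maps under the Loewner order. First I would introduce the whitened matrix $M := \Sigma^{-\frac{1}{2}}\hat{\Sigma}\Sigma^{-\frac{1}{2}}$, which is exactly $\frac{1}{n}(X\Sigma^{-\frac{1}{2}})^\top(X\Sigma^{-\frac{1}{2}})$ since $\hat{\Sigma} = \frac{1}{n}X^\top X$ and $\Sigma\succ 0$ makes $\Sigma^{-\frac{1}{2}}$ well defined. Writing $E := M - I_{d_x}$, the operator norm $\Vert E\Vert_2$ is precisely the quantity appearing on the right-hand side of the statement, and by construction $\hat{\Sigma} = \Sigma^{\frac{1}{2}} M \Sigma^{\frac{1}{2}} = \Sigma^{\frac{1}{2}}(I_{d_x}+E)\Sigma^{\frac{1}{2}}$.

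The key step is to turn this congruence identity into an inequality comparing $\hat{\Sigma}$ and $\Sigma$ directly. For an arbitrary $v \in \RR^{d_x}$, setting $w := \Sigma^{\frac{1}{2}}v$ gives $v^\top \hat{\Sigma} v = w^\top(I_{d_x}+E)w = v^\top \Sigma v + w^\top E w$. Since $E$ is symmetric, $\vert w^\top E w\vert \le \Vert E\Vert_2 \Vert w\Vert_2^2 = \Vert E\Vert_2\, v^\top\Sigma v$, so that $(1-\Vert E\Vert_2)\,v^\top\Sigma v \le v^\top \hat{\Sigma}v \le (1+\Vert E\Vert_2)\,v^\top\Sigma v$ holds for every $v$. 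This is the two-sided Loewner inequality $(1-\Vert E\Vert_2)\Sigma \preceq \hat{\Sigma}\preceq (1+\Vert E\Vert_2)\Sigma$.

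Finally I would apply the monotonicity of eigenvalues under the Loewner order (a direct consequence of the Courant--Fischer min-max characterization): $B \preceq B'$ implies $\lambda_i(B)\le \lambda_i(B')$ for all $i$. Combined with positive homogeneity $\lambda_i(cB)=c\,\lambda_i(B)$ for $c\ge 0$, the upper bound yields $\lambda_i(\hat{\Sigma})\le (1+\Vert E\Vert_2)\lambda_i(\Sigma)$ and the lower bound yields $\lambda_i(\hat{\Sigma})\ge (1-\Vert E\Vert_2)\lambda_i(\Sigma)$; subtracting $\lambda_i(\Sigma)$ and taking absolute values gives $\vert \lambda_i(\hat{\Sigma})-\lambda_i(\Sigma)\vert \le \Vert E\Vert_2\,\lambda_i(\Sigma)$, which is the claim. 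Note this works even when $1-\Vert E\Vert_2<0$, since only the one-sided inequalities are used.

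I do not expect a serious obstacle: the whole argument is standard perturbation theory. The one point that requires care is the passage from the congruence $\hat{\Sigma}=\Sigma^{\frac{1}{2}}(I_{d_x}+E)\Sigma^{\frac{1}{2}}$ to an eigenvalue comparison between $\hat{\Sigma}$ and $\Sigma$ — a congruence does not preserve eigenvalues in general, so one must route through the quadratic-form / Loewner-order argument rather than comparing the spectra of $M$ and $\hat{\Sigma}$ directly. An equivalent route, should the Loewner step feel delicate, is to observe that $\hat{\Sigma}$ is similar to $M\Sigma = \Sigma + E\Sigma$ and apply a multiplicative Weyl-type bound; but I expect the symmetric quadratic-form version to be the cleanest.
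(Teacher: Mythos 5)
Your proposal is correct in its main line of argument, but there is nothing internal to compare it against: the paper never proves this lemma, it imports it verbatim as Theorem 5 of \cite{barzilai2024simple} and uses it as a black box (in the proof of Theorem \ref{thm:cov-concentration-2}). Your self-contained derivation---whitening to $M=\Sigma^{-\frac{1}{2}}\hat{\Sigma}\Sigma^{-\frac{1}{2}}$, the quadratic-form sandwich $(1-\Vert E\Vert_2)\Sigma \preceq \hat{\Sigma} \preceq (1+\Vert E\Vert_2)\Sigma$, then eigenvalue monotonicity under the Loewner order---is the standard route to relative perturbation bounds of this type, and it is a perfectly adequate replacement for the citation; what it buys is that the lemma no longer depends on an external reference, at the cost of a few lines of classical matrix analysis.

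One detail does need repair. Your closing remark that the lower bound ``works even when $1-\Vert E\Vert_2<0$, since only the one-sided inequalities are used'' is not right as stated: for $c<0$ the identity $\lambda_i(cB)=c\,\lambda_i(B)$ fails, because negation reverses the ordering of the spectrum, so $\lambda_i(cB)=c\,\lambda_{d_x+1-i}(B)$. Consequently, monotonicity applied to $(1-\Vert E\Vert_2)\Sigma \preceq \hat{\Sigma}$ yields $\lambda_i(\hat{\Sigma}) \ge (1-\Vert E\Vert_2)\,\lambda_{d_x+1-i}(\Sigma)$, which is weaker than the desired bound for every index $i$ with $\lambda_{d_x+1-i}(\Sigma) > \lambda_i(\Sigma)$. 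The case $\Vert E\Vert_2>1$ should instead be dispatched trivially: $\hat{\Sigma}=\frac{1}{n}X^\top X \succeq 0$, so $\lambda_i(\hat{\Sigma}) \ge 0 > (1-\Vert E\Vert_2)\,\lambda_i(\Sigma)$ since $\Sigma \succ 0$ forces $\lambda_i(\Sigma)>0$. With that one-sentence substitution the proof is complete.
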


\begin{proof}[Proof of Theorem \ref{thm:cov-concentration-2}]
    The concentration result follows immediately by applying first Lemma \ref{lem:cov-concentration-2} with $\varepsilon = 1/2$, then using the relative perturbation bounds given by Lemma \ref{lem:perturbation}. The second statement of the Theorem is an immediate consequence of the first and simply follows by further using the inequality \emph{(iii)} of Lemma \ref{lem:properties-Gamma}.
\end{proof}

\subsection{On the $(\varepsilon,\delta,r)$-stability of \RLSE}

\begin{corollary}
    Let $\varepsilon>0$, $\delta\in(0,1)$ and $A$ such that $r\geq \mathrm{rank}(A)$. 
    %Assume the covariates satisfy $x_i\sim\mathcal N (0,\Sigma)$ for some $\Sigma\succ 0$.Given $n\gtrsim d_x+\log(\frac{1}{\delta})$ samples,  \RLSE~ verify\\ 
    Then:

    (i) Multivariate regression. Let $x_i\sim \mathcal{N}(0,\Sigma)$ with $\Sigma\succ 0$. For any number of samples $n\gtrsim d_x+\log(\frac{1}{\delta})$,  \RLSE~satisfies
    
    $$\mathbb{P}_A\left[\Vert \hat{A}-A\Vert_\F\lesssim r\sigma^2\frac{ d_x+d_y+\log(\frac{1}{\delta})}{n\underline{\lambda}^H_r(\Sigma)}\right]\geq 1-\delta, $$   
    and is $(\varepsilon,\delta,r)$-stable with sample complexity
    
    $$N\lesssim  r\sigma^2\frac{d_x+d_y+\log(\frac{1}{\delta})}{\varepsilon^2\underline{\lambda}^H_r(\Sigma)}.$$
    (ii) Linear system identification. For any number of samples $n$ verifying inequality (\ref{eq:lti_condition}), \RLSE~satisfies
    
    $$\mathbb{P}_A\left[\Vert \hat{A}-A\Vert_\F\lesssim r\sigma^2\frac{ d_x+\log(\frac{1}{\delta})}{n\underline{\lambda}^H_r(\Gamma_\infty(A))}\right]\geq 1-\delta, $$   
    and is $(\varepsilon,\delta,r)$-stable with sample complexity 
    
    $$N\lesssim  r\sigma^2\frac{d_x+\log(\frac{1}{\delta})}{\varepsilon^2\underline{\lambda}^H_r(\Gamma_\infty(A))}.$$ 
\end{corollary}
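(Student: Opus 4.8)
The plan is to read off both parts of the corollary from the finite-sample guarantees already established for \RLSE. For part (i), the high-probability error bound is a restatement of Theorem~\ref{thm:rlse-gauss}, using $\bar d \le d_x + d_y$ to pass from $\bar d+\log(1/\delta)$ to $d_x+d_y+\log(1/\delta)$; likewise part (ii) restates Theorem~\ref{thm:rlse-lti}. The sample complexity then follows by inverting these bounds: setting the right-hand side $\lesssim \varepsilon^2$ and solving for $n$ yields directly $N \lesssim r\sigma^2 (d_x+d_y+\log(1/\delta))/(\varepsilon^2\,\underline{\lambda}^H_r(\Sigma))$ in the regression case, and its $\Gamma_\infty(A)$ analogue in the identification case.

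The substance to verify is that this threshold $N$ certifies $(\varepsilon,\delta,r)$-stability in the sense of Definition~\ref{def:full}. Two facts are immediate: \RLSE\ returns a rank-$r$ matrix by construction ($\hat A_n=\Pi_r(\bar A)$), and it is consistent, so it belongs to the admissible class; it remains to lift the guarantee from the single instance $A$ to the whole finite neighborhood $\mathcal{C}(A,r,\varepsilon)$. For each $A'\in\mathcal{C}(A,r,\varepsilon)$ I would run the estimator on data generated by $A'$ and apply Lemma~\ref{lem:trade_off_k} with $k=r$, giving
$$\Vert\hat A_n - A'\Vert_\F \le 2\sqrt{2}\,\Vert\Pi_r(Z)\Vert_\F + 3\,\Vert A'-\Pi_r(A')\Vert_\F.$$
Since $\mathrm{rank}(A)\le r$ and the neighborhood is built at scale $\Vert A'-A\Vert_\F = 2C\varepsilon$, the approximation term obeys $\Vert A'-\Pi_r(A')\Vert_\F \le \Vert A'-A\Vert_\F = 2C\varepsilon = O(\varepsilon)$, while the estimation term is driven $\lesssim\varepsilon$ exactly as in the proofs of Theorems~\ref{thm:rlse-gauss} and \ref{thm:rlse-lti} once $n\ge N$. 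Because stability asks for a separate $(1-\delta)$-guarantee under each measure $\mathbb{P}_{A'}$ rather than a joint one, no union bound is needed: it suffices that the per-instance bound takes the same form for every $A'$, which gives the uniform $(\varepsilon,\delta,r)$-guarantee up to a universal multiplicative constant in $\varepsilon$ (the constant being unavoidable, since a rank-$r$ estimator cannot beat the possibly nonzero approximation term attached to the rank-up-to-$2r$ neighbors of $\mathcal{C}_1$).

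The crux is the uniformity of the estimation term across the neighborhood. In the regression case this is free: the covariates are i.i.d. with covariance $\Sigma$ regardless of the generating matrix, so Lemma~\ref{lem:pi_k_lse} together with the spectral concentration of Lemma~\ref{lem:cov-concentration-1} produces the same $\underline{\lambda}^H_r(\Sigma)$-dependent bound for every $A'$. The identification case is the genuine obstacle, since the Gramian $\Gamma_\infty(A')$ depends on $A'$: I would invoke the perturbation estimate used in the proof of Theorem~\ref{thm:full} (Lemma~1 of \cite{jedra2022finite}), which shows that the smallness $\Vert A'-A\Vert_2 = 2C\varepsilon/\sqrt{r}$ forces $\Gamma_\infty(A')$ within a constant factor of $\Gamma_\infty(A)$, so that both the sampling requirement \eqref{eq:lti_condition} and the spectral equivalences of Theorem~\ref{thm:cov-concentration-2} and Lemma~\ref{lem:properties-Gamma} remain valid uniformly, with $\Gamma_\infty(A)$ in place of $\Gamma_\infty(A')$, as long as $\varepsilon$ stays in the small-perturbation regime $\varepsilon\le\Vert\Gamma_\infty(A)\Vert_2^{-3}/12$ inherited from the lower bound. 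Managing these constants so that the final statement reads cleanly in terms of $\Gamma_\infty(A)$ alone is the delicate bookkeeping step.
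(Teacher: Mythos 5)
Your treatment of the error bounds and the inversion to obtain $N$ matches the paper, and your observation that in regression the design $X$ is independent of the generating matrix (so the concentration bounds hold identically under every $\mathbb{P}_{A'}$) is exactly right. The gap is in the stability step. You apply Lemma \ref{lem:trade_off_k} with $k=r$ to each neighbor $A'$ and bound the approximation term by $\Vert A'-\Pi_r(A')\Vert_\F\le\Vert A'-A\Vert_\F=2C\varepsilon$, arriving at $\Vert \hat A_n-A'\Vert_\F\le 2\sqrt{2}\Vert\Pi_r(Z)\Vert_\F+6C\varepsilon$ and concluding stability ``up to a universal multiplicative constant in $\varepsilon$.'' But Definition \ref{def:full} requires $\mathbb{P}_{A'}(\Vert\hat A_n-A'\Vert_\F\le\varepsilon)\ge 1-\delta$ with the \emph{same} $\varepsilon$ that sets the scale of $\mathcal{C}(A,r,\varepsilon)$; since $C\ge 1$, a bound containing the additive term $6C\varepsilon$ can never be driven below $\varepsilon$ by increasing $n$, and the constant cannot be rescaled away because $\varepsilon$ is pinned by the neighborhood itself. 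So your argument establishes $(c\varepsilon,\delta,r)$-stability for some universal $c>1$, which is strictly weaker than the statement. Moreover, your parenthetical remark that the constant is ``unavoidable'' for a rank-$r$ output is in direct tension with the claim you are trying to prove: you cannot simultaneously assert that no rank-$r$ estimator beats the approximation term and that \RLSE\ satisfies the definition verbatim.

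The paper takes a different route at precisely this point: it chooses $n$ with a factor-two slack (so the error at the center $A$ is at most $\varepsilon^2/2$), and then for a neighbor $A'$ it invokes Theorem \ref{thm:rlse-gauss} treating $A'$ as a matrix of rank at most $2r$, obtaining the estimation-only bound $C_{\RLSE}\,2r\sigma^2(d_x+d_y+\log(1/\delta))/(n\,\underline{\lambda}^H_r(\Sigma))\le\varepsilon^2$ with no approximation term at all; the doubled rank is exactly absorbed by the slack in $n$, and the $1-2\delta$ confidence is repaired by adjusting constants. Whether that invocation is fully legitimate for the fixed output $\Pi_r(\bar A)$ is a fair question — your point that any rank-$r$ matrix is at distance at least $\Vert A'-\Pi_r(A')\Vert_\F$ from $A'$, which can equal $2C\varepsilon$ when the perturbation is orthogonal to a rank-exactly-$r$ matrix $A$, puts genuine pressure on that step — but it is the mechanism by which the paper reaches the clean $\le\varepsilon$ conclusion, and your route, which concedes the approximation term, cannot. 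Finally, your concern in the system identification case (data under $A'$ has Gramian $\Gamma_\infty(A')$, so a perturbation bound such as Lemma 1 of \cite{jedra2022finite} is needed to state everything in terms of $\Gamma_\infty(A)$) is well taken; the paper's proof dismisses this with ``just replace $\Sigma$ by $\Gamma_\infty(A)$,'' so on that particular point your proposal is more careful than the source.
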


\begin{proof}
Let $\varepsilon>0$ and $\delta\in(0,1)$ and $A$ such that $r\geq \mathrm{rank}(A)$. We focus on the multivariate regression case (for system identification, just replace $\Sigma$ by $\Gamma_\infty(A)$). Let $n\geq 288(d_x+\log(\frac{1}{\delta}))$.

    Let $C_\RLSE>1$ the universal constant defined in the Proof of Theorem \ref{thm:rlse-gauss} such that given $n$ samples, one has with probability at least $1-\delta$,
    $$\Vert \hat{A}_n - A\Vert_\F^2\leq C_\RLSE \frac{r\sigma^2(d_x+d_y+\log(\frac{1}{\delta}))}{n\underline{\lambda}^H_r(\Sigma)}.$$

% By Lemma \ref{lem:cov-concentration-1} (use Theorem \ref{thm:cov-concentration-2} in the system identification case), we can replace $\hat{\Sigma}$ by $\Sigma$ such that one has with probability at least $1-2\delta$, 
% $$\Vert \hat{A}_n - A\Vert_\F^2\leq C_\RLSE \frac{r\sigma^2(d_x+d_y+\log(\frac{1}{\delta}))}{n\frac{\underline{\lambda}^H_r(\Sigma)}{2}}.$$

    Consider now 
    $$n=2C_\RLSE \frac{r\sigma^2(d_x+d_y+\log(\frac{1}{\delta}))}{\varepsilon^2\underline{\lambda}^H_r(\Sigma)},$$ 
    
    such that by Theorem \ref{thm:rlse-gauss} , we have $\Vert \hat{A}_n-A\Vert_\F^2\leq \varepsilon^2/2\leq \varepsilon^2$.
    
    Let now $A'\in\mathcal{C}(A,r,\varepsilon)$. Since $A'$ has rank at most $2r$ then given $n$ samples we also have by Theorem \ref{thm:rlse-gauss}, with probability at least $1-2\delta$,     
    $$\Vert \hat{A}_n-A'\Vert_\F^2\leq C_\RLSE\frac{2r\sigma^2(d_x+d_y+\log(\frac{1}{\delta}))}{n \underline{\lambda}^H_r(\Sigma)}\leq \varepsilon^2.$$

    Hence, we have just shown that \RLSE~ is $(\varepsilon,2\delta,r)$-stable  with $n$ samples (we can easily replace $2\delta$ by $\delta$ simply by adding universal constants in the definition of $n$). Since the sample complexity $N$ is defined as the minimum integer such that stability holds then we get that
    
    $$N\leq n\lesssim r\sigma^2\frac{d_x+d_y+\log(\frac{1}{\delta})}{\varepsilon^2\underline{\lambda}^H_r( \Sigma)}.$$

\end{proof}

\subsection{On the $(\varepsilon,\delta)$-stability of \TLSE}

\begin{corollary}
   % \textcolor{blue}{Let $\varepsilon>0$, $\delta\in(0,1)$ and $A$ such that $r\geq \mathrm{rank}(A)$. Assume the covariates satisfy $x_i\sim\mathcal N (0,\Sigma)$ for some $\Sigma\succ 0$. Given $n\gtrsim d_x+\log(\frac{1}{\delta})$ samples,  \TLSE~ verify\\
   %  $$\mathbb{P}_A\left[\Vert \hat{A}-A\Vert_\F\lesssim \min_{k\in[r]}\left( k\sigma^2\frac{d_x+d_y+\log(\frac{1}{\delta})}{n\lambda_\mathrm{min}(\Sigma)}+ \sum_{i>k}s_i^2(A)\right)\right]\geq 1-\delta, $$   
   %  and is $(\varepsilon,\delta)$-stable with sample complexity    
   %  $$N\leq \min\left\{m: k^\star_{A,m}\sigma^2\frac{d_x+d_y+\log(\frac{1}{\delta})}{m\lambda_\mathrm{min}(\Sigma)}+ \sum_{i>k^\star_{A,m}}s_i^2(A)\leq c\varepsilon^2\right\}.$$
   %  where $0<c<1$ is a universal constant.}
   Let $\varepsilon>0$, $\delta\in(0,1)$ and $A$ such that $r\geq \mathrm{rank}(A)$. Then:

    (i) Multivariate regression. Let $x_i\sim \mathcal{N}(0,\Sigma)$ with $\Sigma\succ 0$. For any number of samples $n\gtrsim d_x+\log(\frac{1}{\delta})$,  \TLSE~satisfies
    
     $$\mathbb{P}_A\left[\Vert \hat{A}_n-A\Vert_\F\lesssim \min_{k\in[r]}\left( k\sigma^2\frac{d_x+d_y+\log(\frac{1}{\delta})}{n\lambda_\mathrm{min}(\Sigma)}+ \sum_{i>k}s_i^2(A)\right)\right]\geq 1-\delta, $$   
    and is $(\varepsilon,\delta)$-stable with sample complexity    
    
    $$N\leq \min\left\{m: k^\star_{A,m}\sigma^2\frac{d_x+d_y+\log(\frac{1}{\delta})}{m\lambda_\mathrm{min}(\Sigma)}+ \sum_{i>k^\star_{A,m}}s_i^2(A)\leq c_5\varepsilon^2\right\},$$
    where $0<c_5<1$ is a universal constant.
    
    (ii) Linear system identification. For any number of samples $n$ verifying inequality (\ref{eq:lti_condition}), \TLSE~satisfies
    
    $$\mathbb{P}_A\left[\Vert \hat{A}_n-A\Vert_\F\lesssim \min_{k\in[r]}\left( k\sigma^2\frac{d_x+d_y+\log(\frac{1}{\delta})}{n\lambda_\mathrm{min}(\Gamma_\infty(A))}+ \sum_{i>k}s_i^2(A)\right)\right]\geq 1-\delta, $$   
    and is $(\varepsilon,\delta)$-stable with sample complexity    
    
    $$N\leq \min\left\{m: k^\star_{A,m}\sigma^2\frac{d_x+d_y+\log(\frac{1}{\delta})}{m\lambda_\mathrm{min}(\Gamma_\infty(A))}+ \sum_{i>k^\star_{A,m}}s_i^2(A)\leq c_6\varepsilon^2\right\},$$
    where $0<c_6<1$ is a universal constant.
\end{corollary}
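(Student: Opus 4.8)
The plan is to obtain both parts of the corollary from the finite-sample \TLSE~guarantees already proved (Theorem \ref{thm:tlse-gauss} for multivariate regression, Theorem \ref{thm:tlse-lti} for system identification) together with the spectral concentration of $\hat\Sigma$ (Lemma \ref{lem:cov-concentration-1} and Theorem \ref{thm:cov-concentration-2}). The first displayed high-probability bound is just a rewriting of those theorems, using $\bar d+\log(1/\delta)\asymp d_x+d_y+\log(1/\delta)$ (and $d_x=d_y$ in the system-identification case), so that $\hat\Sigma$ may be replaced by $\Sigma$, respectively $\Gamma_\infty(A)$, up to universal constants. Hence the substantive work is to verify the two requirements of Definition \ref{def:full2} at the claimed threshold $N$; as in the preceding \RLSE~corollary we may assume $N\gtrsim d_x+\log(1/\delta)$, since otherwise the stated sample-complexity bound is vacuous.

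First I would fix $N=\min\{m:\phi(m)\le c_5\varepsilon^2\}$ with $\phi(m)=k^\star_{A,m}\sigma^2\frac{d_x+d_y+\log(1/\delta)}{m\lambda_{\min}(\Sigma)}+\sum_{i>k^\star_{A,m}}s_i^2(A)$, where $c_5\in(0,1)$ is a small universal constant fixed at the end. Condition $(a)$ is then immediate: since $\Vert A-\Pi_{k^\star_{A,N}}(A)\Vert_\F^2=\sum_{i>k^\star_{A,N}}s_i^2(A)\le\phi(N)\le c_5\varepsilon^2\le\varepsilon^2$, we get $\Vert A-\Pi_{k^\star_{A,N}}(A)\Vert_\F\le\varepsilon$.

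For condition $(b)$ I would fix $n\ge N$ and an arbitrary $A'\in\mathcal{D}(A,N,\varepsilon)=\mathcal{C}(A,k^\star_{A,N},\varepsilon)$, and apply the \TLSE~guarantee with $A'$ as the target matrix, noting that each $A'$ is handled separately so no union bound over the (finite) neighborhood is needed. The decisive step is to upper-bound the resulting $\min_k$ expression by its value at the single index $k=2k^\star_{A,N}$; this is legitimate because $k^\star_{A,N}\le\underline d/2$ forces $2k^\star_{A,N}\le\underline d$, and the denoising bound underlying Theorem \ref{thm:opt_best_k} holds for every $k\le\underline d$. Writing $A'=A+P$ with $P$ of rank at most $k^\star_{A,N}$, Weyl's inequality gives $s_{i+k^\star_{A,N}}(A')\le s_i(A)$, hence $\sum_{i>2k^\star_{A,N}}s_i^2(A')\le\sum_{i>k^\star_{A,N}}s_i^2(A)$. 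Combining this with the monotonicity of the estimation term in $n$ (and $n\ge N$) yields $\Vert\hat A_n-A'\Vert_\F^2\lesssim C_\TLSE\bigl(2k^\star_{A,N}\sigma^2\frac{d_x+d_y+\log(1/\delta)}{N\lambda_{\min}(\Sigma)}+\sum_{i>k^\star_{A,N}}s_i^2(A)\bigr)\le 2C_\TLSE\,\phi(N)\le 2C_\TLSE c_5\varepsilon^2$ with probability at least $1-\delta$. Choosing $c_5$ inversely proportional to $C_\TLSE$ (absorbing the covariance-concentration constant) makes this at most $\varepsilon^2$, so $N$ is an admissible stability threshold and the sample complexity is bounded as claimed.

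The main obstacle is the system-identification case, where the covariate law depends on the generating matrix: invoking Theorem \ref{thm:tlse-lti} under $A'$ requires that the sample-size condition (\ref{eq:lti_condition}) and the spectral concentration of $\hat\Sigma$ (Theorem \ref{thm:cov-concentration-2}) hold for the dynamics driven by $A'$ rather than $A$. I would resolve this exactly as in the lower-bound proof, through the Gramian perturbation estimate $\Vert\Gamma_\infty(A')-\Gamma_\infty(A)\Vert_2\le 16\Vert A'-A\Vert_2\Vert\Gamma_\infty(A)\Vert_2^3$ (Lemma 1 of \cite{jedra2022finite}); since $\Vert A'-A\Vert_2=2C\varepsilon/\sqrt{k^\star_{A,N}}$ is controlled, $\Gamma_\infty(A')$ and $\Gamma_\infty(A)$ are equivalent up to universal constants, so all required conditions transfer to $A'$ at the cost of adjusting the constant $c_6$. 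The rest of the argument is identical to the multivariate one with $\Sigma$ replaced by $\Gamma_\infty(A)$.
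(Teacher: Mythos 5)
Your proposal is correct and follows essentially the same route as the paper's own proof: you define $N$ by the same threshold condition with $c_5$ taken inversely proportional to $C_\TLSE$, verify condition (a) from $\sum_{i>k^\star_{A,N}}s_i^2(A)\le\phi(N)$, and verify condition (b) by applying the \TLSE~guarantee to each $A'$ separately and evaluating the minimum at $k=2k^\star_{A,N}$, where your Weyl-inequality step $\sum_{i>2k^\star_{A,N}}s_i^2(A')\le\sum_{i>k^\star_{A,N}}s_i^2(A)$ is the same singular-value inequality (Fan/Weyl) that the paper applies to the decomposition $A'=\bigl[\Pi_{K}(A)+\tfrac{2C\varepsilon}{\sqrt{K}}QW_{-K}^\top\bigr]+\bigl[A-\Pi_K(A)\bigr]$. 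If anything, you are more explicit than the paper on two points it glosses over, namely the monotonicity of the estimation term in $n\ge N$ and the transfer of the covariance-spectrum conditions to the dynamics generated by $A'$ in the system-identification case, which the paper handles by simply saying ``replace $\Sigma$ by $\Gamma_\infty(A)$'' but which, as you note, really rests on the Gramian perturbation bound of Lemma 1 in \cite{jedra2022finite} (and hence on a smallness condition on $\varepsilon$ that the corollary statement leaves implicit).
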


\begin{proof}
    Let $\varepsilon>0$ and $\delta\in(0,1)$ and $A$ such that $r\geq \mathrm{rank}(A)$. We focus on the multivariate regression case (for system identification, just replace $\Sigma$ by $\Gamma_\infty(A)$). Let $n\geq 288(d_x+\log(\frac{1}{\delta}))$.

    Let $C_\TLSE>1$ the universal constant defined in the Proof of Theorem \ref{thm:tlse-gauss} such that given $n$ samples, one has with probability at least $1-\delta$,
    
    $$\Vert \hat{A}_n-A\Vert_\F^2\leq C_\TLSE\min_{k\in[r]}\left( k\sigma^2\frac{d_x+d_y+\log(\frac{1}{\delta})}{n\lambda_\mathrm{min}(\Sigma)}+ \sum_{i>k}s_i^2(A)\right).$$

    % By Lemma \ref{lem:cov-concentration-1} (use Theorem \ref{thm:cov-concentration-2} in the system identification case), we can replace $\hat{\Sigma}$ by $\Sigma$ such that one has with probability at least $1-2\delta$, 
    % $$\Vert \hat{A}_n-A\Vert_\F^2\leq C_\TLSE\min_{k\in[r]}\left( k\sigma^2\frac{d_x+d_y+\log(\frac{1}{\delta})}{n\frac{\lambda_\mathrm{min}( \Sigma)}{2}}+ \sum_{i>k}s_i^2(A)\right).$$

    Consider  
    
    $$n= \min\left\{m: k^\star_{A,m}\sigma^2\frac{d_x+d_y+\log(\frac{1}{\delta})}{m\lambda_\mathrm{min}(\Sigma)}+ \sum_{i>k^\star_{A,m}}s_i^2(A)\leq \frac{\varepsilon^2}{2C_\TLSE}\right\}.$$

    By Theorem \ref{thm:tlse-gauss}, we have
    \begin{align*}
    \Vert \hat{A}_n-A\Vert_\F^2&\leq C_\TLSE\min_{k\in[r]}\left( k\sigma^2\frac{d_x+d_y+\log(\frac{1}{\delta})}{n\lambda_\mathrm{min}(\Sigma)}+ \sum_{i>k}s_i^2(A)\right)\\
    &\leq C_\TLSE\left( k^\star_{A,n}\sigma^2\frac{d_x+d_y+\log(\frac{1}{\delta})}{n\lambda_\mathrm{min}(\Sigma)}+ \sum_{i>k^\star_{A,n}}s_i^2(A)\right) \\
    &\leq \frac{\varepsilon^2}{2}\leq \varepsilon^2.    
    \end{align*}

    The previous computation shows also that $n$ must verify

    $$\Vert A-\Pi_{k^\star_{A,n}}(A)\Vert_\F\leq\varepsilon$$

    Let now $A'\in \mathcal{D}(A,n,\varepsilon)$ and denote for simplicity $K=k^\star_{A,n}$. 
    
    $A'$ can also be rewritten as following
    
    $$A'=\Pi_{K}(A)+\frac{2C\varepsilon}{\sqrt{K}}QW^\top_{-K} + A-\Pi_K(A),$$ 
    
    for some $Q\in\mathcal{P}_{K}^{d_y}$  (replace $QW^\top_{-K}$ by $U_KR^\top$ for some $R\in\mathcal{P}_{K}^{d_x}$ in the other possible case).  
    
    We have the following  result on singular values, see Theorem 2 of \cite{fan1951maximum}: For any matrices $P,Q$ and integers $(i,j)$, 
    
    $$
    s_{i+j-1}(P+Q)\leq s_i(P)+s_j(Q).
    $$
    
    Therefore, for any $i=1,...,r-K$ ($A'$ has rank at most $r+K$),   
    
     \begin{align*}
         s_{2K+i}(A')&\leq s_{2K+1} (\Pi_{K}(A)+\frac{2C\varepsilon}{\sqrt{K}}QW^\top_{-K} )+s_i(A-\Pi_K(A))\\
         &=s_i(A-\Pi_K(A))\\
         &=s_{K+i}(A).
     \end{align*}

     Given $n$ samples, we also have by Theorem \ref{thm:tlse-gauss}, with probability at least $1-2\delta$,
    
     \begin{align}\label{eq:stability_tlse}
     \begin{split}
         \Vert \hat{A}_n -A'\Vert_\F^2&\leq C_\TLSE\min_{k\in[r]}\bigg[ 2k\sigma^2\frac{d_x+d_y+\log(\frac{1}{\delta})}{n\lambda_\mathrm{min}(\Sigma)}+ \sum_{i>k}s_i^2(A')\bigg]
         \\&\leq C_\TLSE\left(2K\sigma^2\frac{d_x+d_y+\log(\frac{1}{\delta})}{n\lambda_\mathrm{min}(\Sigma)}+ \sum_{i>2K}s_i^2(A')\right) \\
         &\leq 2C_\TLSE\left(K\sigma^2\frac{d_x+d_y+\log(\frac{1}{\delta})}{n\lambda_\mathrm{min}(\Sigma)}+\sum_{i>K}s_i^2(A)\right)\\ 
         &\leq \varepsilon^2.
    \end{split}
    \end{align}

 Hence, we have just shown that \TLSE~ is $(\varepsilon,2\delta)$-stable with $n$ samples (again, we can easily replace $2\delta$ by $\delta$ by adding universal constants in the definition of $n$). Since the sample complexity $N$ is defined as the minimum integer such that stability holds then we get that

 $$N\leq n =\min\left\{m: k^\star_{A,m}\sigma^2\frac{d_x+d_y+\log(\frac{1}{\delta})}{m\lambda_\mathrm{min}(\Sigma)}+ \sum_{i>k^\star_{A,m}}s_i^2(A)\leq \frac{\varepsilon^2}{2C_\TLSE}\right\}.$$
\end{proof}

\begin{remark}
    Finally, note that these results also hold for other rank-constrained and rank-adaptive algorithms, for instance those of \cite{bunea2011optimal} or the nuclear norm estimator in Appendix \ref{app:sec5}, since they enjoy similar but looser upper bounds (up to model-dependent constants).
\end{remark}

\newpage
\section{Numerical experiments}\label{app:sec7}

In this section, we present further numerical experiments to illustrate the theoretical results obtained in Sections \ref{sec:denoising} and \ref{sec:lse}. We first compare our denoising lemma (derived in Section \ref{sec:denoising}) to Lemma 3.5 of \cite{chatterjee2015matrix}, using a simple synthetic example. Next, we study the performance of our estimators \RLSE~ and \TLSE, and investigate the tightness of the error upper bound of \TLSE\ derived in Theorem  \ref{thm:tlse-gauss}. Finally, we describe the computational resources used for our experiments. 

\subsection{Matrix denoising lemmas}

We first restate Lemma 3.5 in \cite{chatterjee2015matrix}: for $Z=\bar{A}-A$,

$$
\|\bar{A}\left((1+\tau)\|Z\|_2\right)-A\Vert_\F^2\leq f(\tau) \Vert Z\Vert_2 \Vert A\Vert_1
$$

where $f(\tau) = ((4+2\tau)\sqrt{\frac{2}{\tau}}+\sqrt{2+\tau})^2$.

In comparison, our Lemma \ref{lem:trade_off_k} yields

$$\Vert \bar{A}\left((1+\tau)\|Z\|_2\right)-A\Vert_\F^2\leq 18(k(\tau)\Vert Z\Vert_2^2+ \sum_{i>k(\tau)}s_i^2(A))$$
where $k(\tau)=\max\{i:s_i(\bar A)\geq (1+\tau)\Vert Z\Vert_2\}$. We are interested in comparing both upper bounds with respect to the parameter $\tau$.  $A$ is a $50\times 50$ matrix of rank $r=10$ with entries sampled uniformly at random in $[-1,1]$. We then generate $\bar A = A+Z$ where the entries of $Z$ are i.i.d sampled from ${\cal N}(0,1)$. Given  $Z$ and $A$, we evaluate the two upper bounds for several values of $\tau$, and average them over $T=100$ runs.

Figure \ref{fig:denoising_bounds} depicts the behavior of both upper bounds as a function of $\tau$. We can observe that our upper bound is indeed smaller, and stays bounded between $r\Vert Z\Vert_2^2$ (corresponding to $k(\tau)=r$) and $\Vert A\Vert_\F^2$ (corresponding to $k(\tau)=0$).
\begin{figure}[H]
    \centering
    \includegraphics[width=0.5\columnwidth]{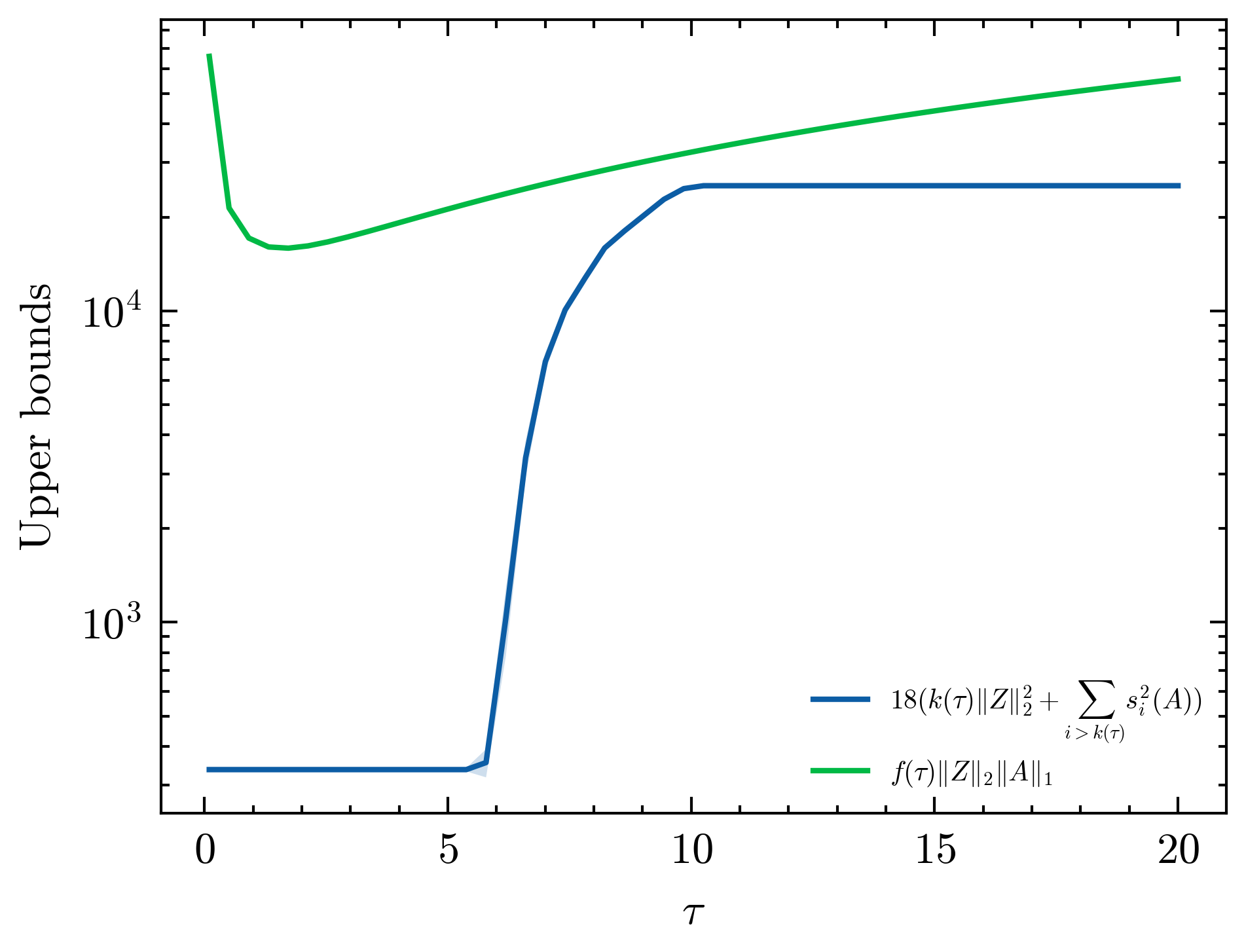}    
    \vspace*{-3mm}
    \caption{Comparison of Lemma \ref{lem:trade_off_k} with the upper bound from \cite{chatterjee2015matrix} as a function of the threshold $\tau$.}
    \label{fig:denoising_bounds}
\end{figure}

 \subsection{On the importance of adaptivity}\label{subsec:exp-p}

We consider the same experiments as those already presented in Section \ref{sec:exp}, which we re-describe briefly. We estimate $A\in\mathbb{R}^{d\times d}$ ($d=50$), of rank $r$, from a linear model $Y=XA+E$. 1) Entries of $A$ are first sampled uniformly at random in $[0,1]$. 2) We compute its SVD $A=USV^\top$. 3) We keep its singular vectors but change its singular values as follows: $s_j(A)=\frac{1}{j^b}$ where $b$ is a parameter acting as a proxy for signal-to-noise ratio. In Section \ref{sec:exp}, we considered the case where the rank of $A$ was $r=10$. Here we investigate the high-rank regime with $r=45,\sigma=0.4$. Figure~\ref{fig:fro_rank_high} shows the performance of \RLSE{} and its adaptive counterparts, \TLSE{} and RSC, as a function of $b$, using $n = 1000$ samples. Similar to the low-rank regime, we observe that as the noise level increases, both \TLSE{} and RSC adapt their effective rank to concentrate on the singular subspaces that can be reliably estimated from the noisy data. 
%In particular, one can  observe the limits of estimation in Frobenius norm as when $b$ increases, there is not enough signal to exploit and the thresholded estimators return the $0$ matrix. In such settings, it becomes more interesting to obtain estimators tailored for operator norm error minimization. 
%In the high-rank regime, \RLSE{} achieves slightly better performance, which can be attributed to its knowledge of the exact rank of $A$.

\begin{figure}
    \includegraphics[width=0.485\columnwidth]{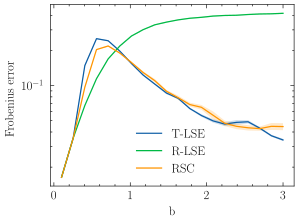}
    \hspace{0.3cm}
    \includegraphics[width=0.47\columnwidth]{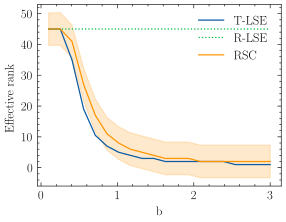}
    \vspace*{-3mm}
    \caption{Multivariate regression: Frobenius error (left) and effective rank (right) vs noise level $b$, $(r=45)$.}
    \label{fig:fro_rank_high}
\end{figure}

\subsection{System identification}

We further give an example of rank-adaptive estimation for the case of linear system identification. We first construct a stable matrix $A\in\mathbb{R}^{d\times d}$ with $d=50$ (stable means such that $\rho(A)<1$). To do so, we consider a symmetric positive definite $A$ constructed as follows. 
%which guarantees that its singular values are equal to its eigenvalues. 
1) We first sample its entries uniformly at random in $[0,1]$. 2) We compute its SVD $A=USV^\top$. 3) We change $A$ to $USU^\top$. 4) We change the entries of $S$ to $(\frac{1}{(j+1)^b})_{j=1}^r$. For this choice of $A$, the system is stable. We assume that $x_{t+1}=Ax_t+\mathcal{N}(0,\sigma^2 I_d)$ and $x_0=0$ with $\sigma=0.1$.  Similarly to the Gaussian input case, Figure \ref{fig:lti_exp} shows  that adaptiveness becomes preferable than its rank-constrained counterpart as $b$ increases and the signal to noise ratio decreases. 
%When $b$ becomes too large, estimation in Frobenius norm becomes again not adapted for inference.%Furthermore \TLSE~demonstrates better performance than RSC, for which no guarantees are given in the system identification context.

\begin{figure*}[b]
    \includegraphics[width=0.485\textwidth]{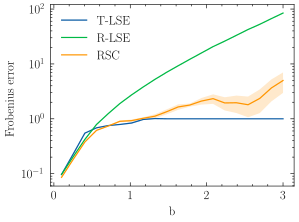}
    \hspace{0.3cm}
    \includegraphics[width=0.47\textwidth]{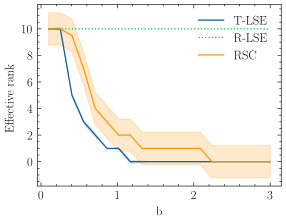}
    \vspace*{-3mm}
    \caption{System identification: Frobenius error (left) and effective rank (right) vs noise level $b$, $(r=10)$.}
    \label{fig:lti_exp}
\end{figure*}

\subsection{Tightness of our upper bound}\label{sec:additional_exp}

In this section, we investigate the tightness of the error upper bound for \TLSE{} established in Theorem~\ref{thm:tlse-gauss}. Specifically, we aim to determine whether the bound accurately captures the dependence of the true error $\|\hat A_n - A\|_\F^2$ (and not the relative error anymore) on the underlying problem parameters.

For sake of simplicity, we consider the multivariate regression setting. $A$ is a $(d,d)$-square matrix, with rank $r$. Its entries are sampled uniformly at random in $[0,1]$. The noise matrix $E$ entries are i.i.d sampled from $\mathcal{N}(0,1)$. The covariates $x_i$ are i.i.d sampled from a multivariate Gaussian distribution $\mathcal{N}(0,I_d)$. The outputs $y_i$ are the rows of $Y=XA+E$. We study the performances of \TLSE~ as a function of 1) number of samples $n$, 2) dimension $d$.  We compare them to the upper bound given by Theorem \ref{thm:tlse-gauss}.   

For each experiment, i.e., for each choice of $(n,d)$, we sample $A,X$ once and then sample $T$ times $E$ to obtain $Y$ and $\hat A_n$. We compute the identification error $\Vert \hat A_n - A\Vert_\F^2$ in each trial and output the average. 

\begin{enumerate}
 
\item {As a function of $n$ ($d=50$, $r=10$).} The true error of \TLSE\ and our upper bound are close. Both exhibit a hyperbolic decrease towards zero as can be seen in Figure \ref{fig:tlse_n}.

% \begin{figure}[H]
%     \centering
%     \begin{minipage}{0.48\columnwidth}
%         \centering
%         \includegraphics[width=\linewidth]{images/rlse_n.png}
%     \end{minipage}
%     \hspace{0.02\columnwidth}
%     \begin{minipage}{0.48\columnwidth}
%         \centering
%         \includegraphics[width=\linewidth]{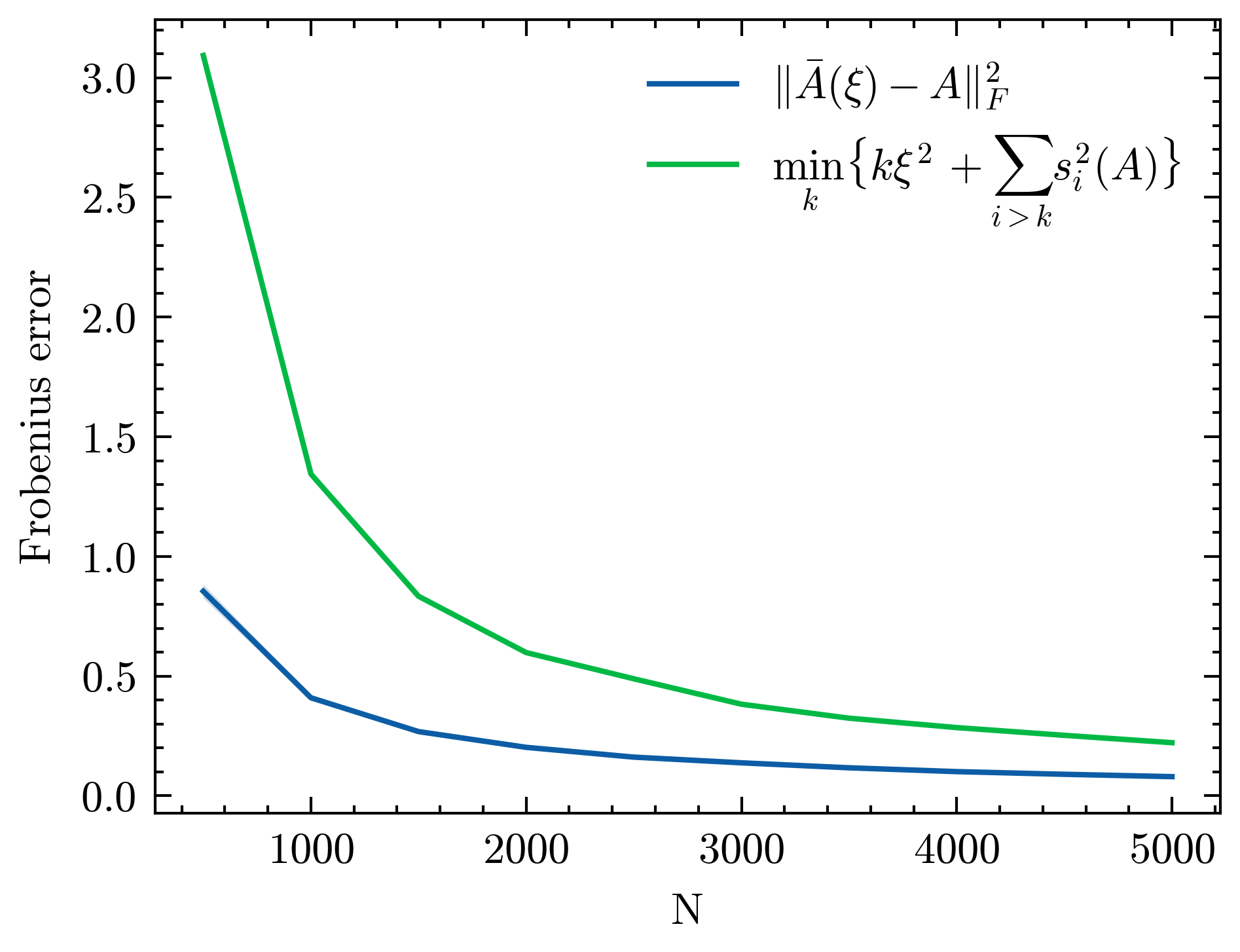}
%     \end{minipage}
%     \vspace{-3mm}
%     \caption{Performance of \RLSE~(left) and \TLSE~(right) and their corresponding error upper bounds as a function of $n$.}
%     \label{fig:rlse_n}
% \end{figure}

\begin{figure}[H]
    \centering
        \includegraphics[width=0.5\linewidth]{}
    \vspace{-3mm}
    \caption{Performance of \TLSE~and its corresponding error upper bound as a function of $n$.}
    \label{fig:tlse_n}
\end{figure}

\item {As a function of $d$ ($n=5000$, $r=10$).} We observe in Figure \ref{fig:tlse_d} that the \TLSE\ error is linear in $d$ as predicted by the upper bound (and lower bound). For the upper bound, the observed deviation from perfect linearity arises from the degradation of the accuracy of the empirical eigenvalue $\underline{\lambda}_r^H(\hat \Sigma)$ as $n$ is kept fixed.

% \begin{figure}[h!]
%     \centering
%     \begin{minipage}{0.48\columnwidth}
%         \centering
%         \includegraphics[width=\linewidth]{images/rlse_d.png}
%     \end{minipage}
%     \hspace{0.02\columnwidth}
%     \begin{minipage}{0.48\columnwidth}
%         \centering
%         \includegraphics[width=\linewidth]{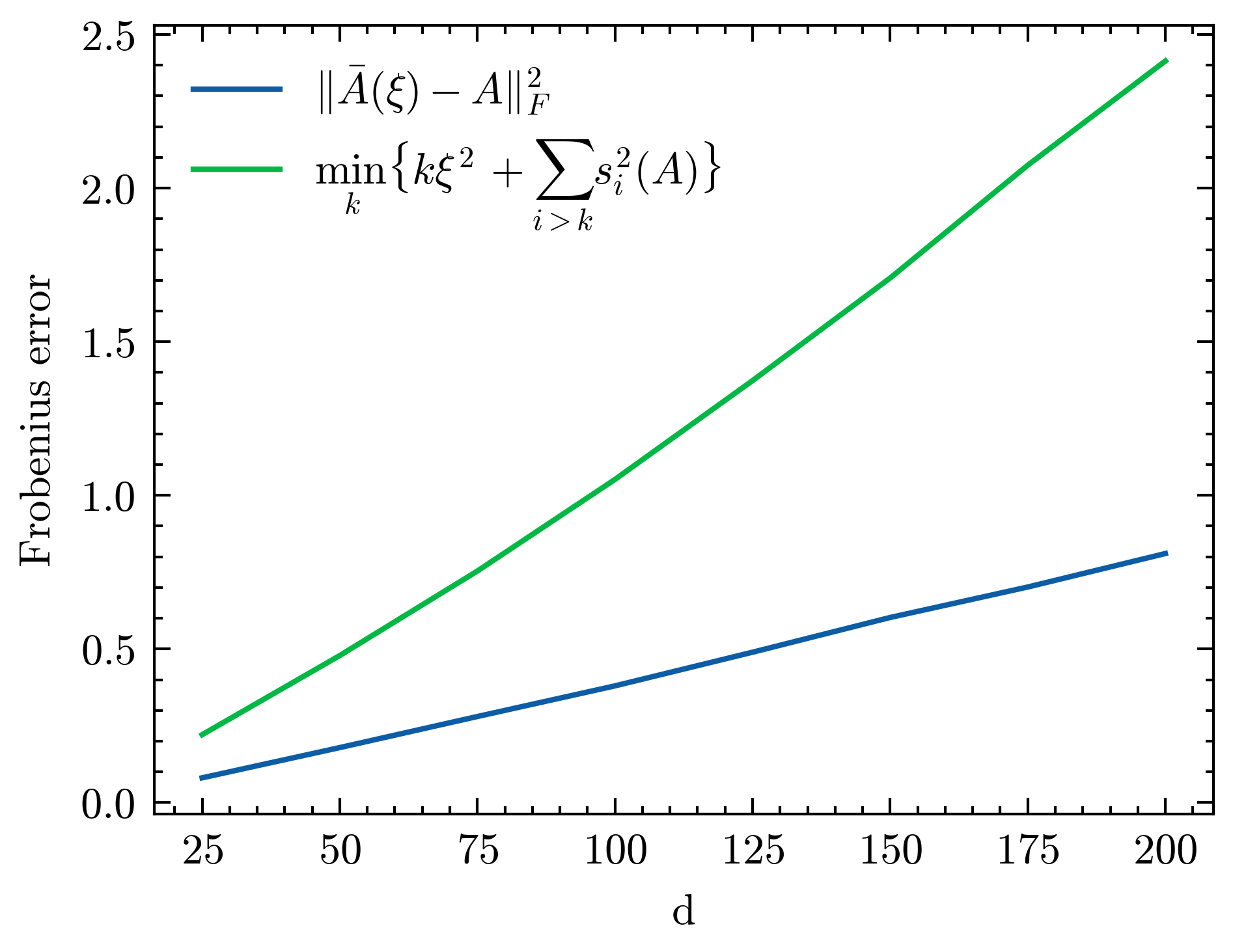}
%     \end{minipage}
%     \vspace{-3mm}
%     \caption{Performance of \RLSE~(left) and \TLSE~(right) and their corresponding error upper bounds as a function of $d$.}
%     \label{fig:rlse_d}
% \end{figure}

\begin{figure}[H]
    \centering
        \includegraphics[width=0.5\linewidth]{}
    \vspace{-3mm}
    \caption{Performance of  \TLSE~ and its corresponding error upper bound as a function of $d$.}
    \label{fig:tlse_d}
\end{figure}

% \item{As a function of $r$ ($d=50$, $N=5000$).} Interestingly, we observe in Figure \ref{fig:rlse_r} that the \TLSE\ error is not exactly linear in $r$, as predicted by the upper bound, but rather of the form $r(2d-r)$. In particular, this is predicted by our adaptive lower bound as the proof of Lemma \ref{lem:stiefel} actually shows that there exists a packing of the Stiefel manifold $\mathrm{St}^d_r(\mathbb{R})$ with size at least $r(d-r)$ (which we then lower bounded for clarity by $\frac{rd}{2}$ since $r\leq \frac{d}{2}$).

% \begin{figure}
%     \centering
%         \includegraphics[width=0.5\linewidth]{images/tlse_r.png}
%     \vspace{-3mm}
%     \caption{Performance of  \TLSE~ and its corresponding error upper bound as a function of $r$.}
%     \label{fig:rlse_r}
% \end{figure}
\end{enumerate}

Overall, our error upper bound reflects the dependence of \TLSE{}'s performance on the system parameters, but it is not always tight. Deriving sharper bounds remains an open direction for future work. In particular, an interesting step would be to replace, in our Theorems \ref{thm:tlse-gauss} and \ref{thm:tlse-lti}, $\lambda_{\min}(\hat{\Sigma})$ by $\underline{\lambda}_k^H(\hat \Sigma)$.

% \subsubsection{Conclusions}

% From the simulations above, we observe a tight match of our upper bound with the true error rate when the problem parameters change. This experimentally corroborates the theoretical near optimality of our results. 

\subsection{Computing resources}\label{app:resources}

All experiments were performed on a 12th Gen Intel(R) Core(TM) i7-1280P   1.80 GHz, with 64 Go of available RAM memory. 

Coding was done on Python and standard numerical analysis libraries (e.g NumPy and Matplotlib) were used. The publicly available SciencePlots, \url{https://github.com/garrettj403/SciencePlots} library, in its Version 1.0.9, was used for plots.

\end{document}